\newtheorem{theorem}{Theorem}[section]
\newtheorem{lemma}[theorem]{Lemma}
\definecolor{fgcolor}{rgb}{0.345, 0.345, 0.345}
\definecolor{shadecolor}{rgb}{.97, .97, .97}
\definecolor{messagecolor}{rgb}{0, 0, 0}
\definecolor{warningcolor}{rgb}{1, 0, 1}
\definecolor{errorcolor}{rgb}{1, 0, 0}
\def\bSig\mathbf{\Sigma}
\newcommand{\ide}{\mathbf{1}}
\newcommand{\eenum}{\end{enumerate}}
\newcommand{\bA}{\mathbf{A}}
\newcommand{\bh}{\mathbf{h}}
\newcommand{\bH}{\mathbf{H}}
\newcommand{\bL}{\mathbf{L}}
\newcommand{\bF}{\mathbf{F}}
\newcommand{\bY}{\mathbf{Y}}
\newcommand{\bx}{\mathbf{x}}
\newcommand{\bI}{\mathbf{I}}
\newcommand{\bg}{\mathbf{g}}
\newcommand{\df}{\mathrm{d}}
\newcommand{\bZ}{\mathbf{Z}}
\newcommand{\teta}{\tilde{\boldsymbol{\eta}}}
\newcommand{\bX}{\mathbf{X}}
\newcommand{\bW}{\mathbf{W}}
\newcommand{\bxi}{\boldsymbol{\xi}}
\newcommand{\bdelta}{\boldsymbol{\delta}}
\newcommand{\bbeta}{\boldsymbol{\beta}}
\newcommand{\bM}{\mathbf{M}}
\newcommand{\cI}{\mathcal{I}}
\newcommand{\btX}{\tilde{\mathbf{X}}}
\newcommand{\bttheta}{\tilde{\boldsymbol{\theta}}}
\newcommand{\bmu}{\boldsymbol{\mu}}
\newcommand{\bP}{\mathbf{P}}
\newcommand{\bbm}{\mathbf{m}}
\newcommand{\bSigma}{\boldsymbol{\Sigma}}
\newcommand{\bE}{\mathbf{E}}
\newcommand{\bU}{\mathbf{U}}
\newcommand{\Var}{\mathbf{Var}}
\newcommand{\btheta}{\boldsymbol{\theta}}
\newcommand{\etab}{\boldsymbol{\eta}}
\newcommand{\bPhi}{\boldsymbol{\Phi}}
\newcommand\redsout{\bgroup\markoverwith{\textcolor{red}{\rule[0.5ex]{2pt}{1pt}}}\ULon}
\begin{document}

%  This label and the label ``lastpage'' are used by the \pagerange
%  command above to give the page range for the article

\label{firstpage}
%\captionsetup[subfigure]{labelformat=empty}

\begin{center}
	\Large{Fitting stochastic epidemic models to gene genealogies using linear noise approximation}\\\ \\
	
	\normalsize
	Mingwei Tang$^1$, Gytis Dudas$^{2,3}$, Trevor Bedford$^2$, Vladimir N. Minin$^{4,*}$
	
	\small
	$^1$Department of Statistics, University of Washington, Seattle\\
	$^2$Vaccine and Infectious Disease Division, Fred Hutchinson Cancer Research Center\\
	$^3$Gothenburg Global Biodiversity Centre (GGBC), Gothenburg, Sweden\\
	$^4$Department of Statistics, University of California, Irvine\\
	$^*$Corresponding author; \url{vminin@uci.edu}
	
	\normalsize
\end{center}

%  pub the summary here

\begin{abstract}
Phylodynamics is a set of population genetics tools that aim at reconstructing demographic history of a population based on molecular sequences of individuals sampled from the population of interest. One important task in phylodynamics is to estimate changes in (effective) population size. 
When applied to infectious disease sequences such estimation of population size trajectories can provide information about changes in the number of infections. 
To model changes in the number of infected individuals, current phylodynamic methods use non-parametric approaches, parametric approaches, and stochastic modeling in conjunction with likelihood-free Bayesian methods. 
The first class of methods yields results that are hard-to-interpret epidemiologically. 
The second class of methods provides estimates of important epidemiological parameters, such as infection and removal/recovery rates, but ignores variation in the dynamics of infectious disease spread. 
The third class of methods is the most advantageous statistically, but relies on computationally intensive particle filtering techniques that limits its applications.
We propose a Bayesian model that combines phylodynamic inference and stochastic epidemic models, and
%In our framework, the population trajectory is modeled by a stochastic compartmental epidemic model, e.g., canonical Susceptible-Infected-Recovered model.  
achieves computational tractability by using a linear noise approximation (LNA) --- a technique that allows us to approximate probability densities of  stochastic epidemic model trajectories. 
LNA opens the door for using modern Markov chain Monte Carlo tools to approximate the joint posterior distribution of the disease transmission parameters and of high dimensional vectors describing unobserved changes in the stochastic epidemic model compartment sizes (e.g., numbers of infectious and susceptible individuals). 
We apply our estimation technique to Ebola genealogies estimated using viral genetic data from the 2014 epidemic in Sierra Leone and Liberia. \ \\
%We estimate the population trajectory of susceptible and infected individual and changes of the Ebola infection rate through time.
\end{abstract}
%
%  Please place your key words in alphabetical order, separated
%  by semicolons, with the first letter of the first word capitalized,
%  and a period at the end of the list.
%

%\begin{keywords}
{\bf Key words:} Coalescent, Susceptible-Infectious-Recovered model, state-space model, phylodynamics, Ebola virus
%\end{keywords}
%\maketitle
\section{Introduction}
\label{s:intro}
Phylodynamics is an area at the intersection of phylogenetics and population genetics that studies how epidemiological, immunological, and evolutionary processes affect viral phylogenies constructed based on molecular sequences sampled from the population of interest \citep{grenfell2004unifying, volz2013viral}. 
Phylodynamics is especially useful in infectious disease modeling because genetic data provide a source of information that is complimentary to the traditional  disease case count data. 
Here we are interested in inferring parameters governing infectious disease dynamics from the genealogy/phylogeny estimated from infectious disease agent molecular sequences collected during the disease outbreak. Working in a Bayesian framework, we develop an efficient Markov chain Monte Carlo (MCMC) algorithm that allows us to work with stochastic models of infectious disease dynamics, properly accounting for stochastic nature of the dynamics.
\par

Currently, learning about population-level infectious disease dynamics from molecular sequences can be accomplished using three general strategies. The first strategy relies on the coalescent theory --- a set of population genetics tools that specify probability models for genealogies relating individuals randomly sampled from the population of interest \citep{kingman1982coalescent, griffiths1994ancestral, donnelly1995coalescents}. Using a subset of these models  \citep{griffiths1994}, it is possible to estimate changes in effective population size --- the number of breeding individuals in an idealized population that evolves according to a  Wright-Fisher model \citep{wright1931evolution}. 
Such reconstruction can be done assuming parametric \citep{kuhner1998maximum, drummond2002estimating} or nonparametric \citep{drummond2002estimating, drummond2005bayesian, minin2008smooth, palacios2013gaussian, gill2013} functional forms of the effective population size trajectory.
In the context of infectious disease phylodynamics, nonparametric inference is the norm and the estimated effective population size is often interpreted as the effective number of infections or the effective number of infectious individuals. 
However,  reconstructed effective population size trajectories are not easy to interpret and 
estimation of parameters of disease dynamics is difficult to accomplish if one wishes to maintain statistical rigor \citep{pybus2001epidemic, frost2010viral}.

\par
Another way to learn about infectious disease dynamics from molecular sequences is to model explicitly events that occur during the infectious disease spread and to link these events to the genealogy/phylogeny of sampled individuals using birth-death processes. 
For example, a Susceptible-Infectious-Removed (SIR) model includes two possible events: infections and removals (e.g., recoveries and deaths), represented by births and deaths in the corresponding birth-death model \citep{stadler2013birth,kuhnert2014simultaneous}. 
Other SIR-like models (e.g., SI and SIS models) differ by the number and types of the events that are needed to accurately describe natural history of the infectious disease \citep{leventhal2013using}. 
Although these methods are more principled than post-hoc processing of nonparametrically estimated disease dynamics, they are not easy to scale to large datasets and/or high dimensional models. 
For example, in order to fit phylodynamic birth-death models to genomic and epidemiological data \citet{vaughan2018directly} use particle filter MCMC.
However, computational burden of particle filter MCMC methods is usually very high. 
Moreover, these methods often struggle with convergence when the dimensionality of statistical model parameters is even moderately high \citep{andrieu2010particle}.

Structured coalescent models provide the third strategy of inferring parameters governing spread of an infectious disease \citep{volz2009phylodynamics,volz2012complex, dearlove2013coalescent}. These models assume infectious disease agent genetic data have been obtained from a random sample of infected individuals, allowing for serial sampling over time. Although similar to the birth-death modeling framework, the structured coalescent models have two advantages. First, one does not have to keep track, analytically or computationally, of extinct and not sampled genetic lineages. Second, the density of the genealogy can be obtained given the population level information about status of individuals: for example, in the SIR model it is sufficient to know the numbers of susceptible, ($S(t)$), infectious, ($I(t)$), and recovered, ($R(t)$), individuals at each time point $t$. The second advantage comes with two caveats: 1) such densities can be obtained only approximately and 2) evaluating densities of genealogies is not straightforward and involves numerical solutions of differential equations.
Even in cases when these caveats are manageable, the density of the assumed stochastic epidemic model population trajectory remains computationally intractable. 
One way around this intractability assumes a deterministic model of infectious disease dynamics \citep{volz2009phylodynamics,volz2012complex,volz2014phylodynamic}, which potentially leads to overconfidence in estimation of model parameters. 
Particle filter MCMC offers another solution \citep{rasmussen2011inference, rasmussen2014phylodynamic}, but, as we discussed already, these methods are difficult to use in practice, especially in high dimensional parameter spaces.
\par
In this paper, we develop methods that allow us to bypass computationally unwieldy particle filter MCMC with the help of a linear noise approximation (LNA).
LNA is a low order correction of the deterministic ordinary differential equation describing the asymptotic mean trajectories of compartmental models of population dynamics defined as Markov jump processes (e.g., chemical reaction models and SIR-like models of infectious disease dynamics) \citep{kurtz1970solutions,kurtz1971limit, van1983stochastic}. 
LNA can also be viewed as a first order Taylor approximation of Markov population dynamics models represented by stochastic differential equations \citep{giagos2010inference,wallace2010simplified}. 
A key feature of the LNA method is that it approximates the transition density of a stochastic population model with a Gaussian density \citep{komorowski2009bayesian}. 
\par
Inspired by recent applications of LNA to analysis of Google Flu Trends data \citep{fearnhead2014inference} and disease case counts \citep{buckingham2018gaussian}, we develop a Bayesian framework that combines LNA for stochastic models of infectious disease dynamics with structured coalescent models for genealogies of infectious disease agent genetic samples.
Our approach yields a latent Gaussian Markov model that closely resembles a Gaussian state-space model.
We use this resemblance to develop an efficient MCMC algorithm that combines high dimensional elliptical slice sampler updates \citep{murray2010elliptical} with low dimensional Metropolis-Hastings (MH) moves.
Using simulations, we demonstrate that this algorithm can handle reasonably complex models, including an SIR model with a time-varying infection rate. 
We apply this SIR model to a recent Ebola outbreak in West Africa.
Our analysis of data from Liberia and Sierra Leone illuminates significant changes in the Ebola infection rate over time, likely caused by the public health response measures and increased awareness of the outbreak in the population.

 \section[]{Methodology}
\label{s:method}
\subsection{Genealogy as data}
\label{s:coallike}
We start with $n$ infectious disease agent molecular sequences obtained from infected individuals sampled uniformly at random from the total infected population.
Further, we assume that a phylogenetic tree, or genealogy, $\mathbf{g}$ relating these sequences has been estimated in such a way that the tree branch lengths respect the known sequence sampling times.
Such estimation can be performed with, for example, \texttt{BEAST}  --- a leading software package for Bayesian phylogenetic studies, particularly popular among molecular epidemiologists who collect and analyze viral genetic sequences \citep{suchard2018bayesian}. 
%The coalescent model allows us to trace back the ancestry of a sample of gene sequences. 
%The common ancestry relationship of sequences is represented by a genealogy or phylogenetic tree estimated from sequences sampled from the population of interest. 
%In phylodynamic inference we sometimes assume the observed data is a genealogy $\mathbf{g}$ estimated from the $n$ sampled molecular sequences. 
The genealogy is represented by a tree structure with its nodes containing two sources of temporal information: coalescent and sampling times. 
The coalescent times correspond to the internal nodes of the tree, which are defined as the times at which two lineages in the tree are merged into a common ancestor. 
The sampling times, corresponding to the tips of the tree, are the times at which molecular sequences were sampled.
Note that sampling times are observed directly, while coalescent times are estimated from molecular sequences during phylogenetic reconstruction. 
%$\mathcal{T} = \{t_1,\ldots,t_n\}$ Let $n_1,\ldots,n_m$ denote the number of samples corresponds to sampling time $s_1,\ldots, s_m$. 
\par
To perform inference about infectious disease dynamics using the above genealogy we need a probability model that relates the genealogy and infectious disease dynamics model parameters. 
Without too much loss of generality, we assume that the infectious disease is spreading through the population according to the SIR model --- a compartmental model that at each time point $t$ tracks the number of susceptible individuals $S(t)$, number of infected/infectious individuals $I(t)$, and number of removed individuals $R(t)$ \citep{bailey1975mathematical, anderson1992infectious}. 
We assume that the population is closed so $S(t) + I(t) + R(t) = N$ for all times $t$, where $N$ is the population size that we assume to be known. 
This constraint implies that vector $\bX(t) = (S(t), I(t))$ is sufficient to keep track of the population state at time $t$. 
We follow common practice and model $\bX(t)$ as a Markov jump process (MJP) with allowable 
instantaneous jumps shown in Figure~\ref{f:ctmc} \citep{ONeill1999}. 
The assumed MJP process $\bX(t)$ is inhomogeneous, because we allow the infection rate $\beta(t)$ and removal rate $\gamma(t)$ to be time-varying. 
\begin{figure}
	\centering
	\begin{tikzpicture}
	\usetikzlibrary{shapes}
	\tikzstyle{main}=[ellipse, minimum size = 10mm, thick, draw =black!80, node distance = 16mm]
	\tikzstyle{connect}=[-latex, thick]
	\tikzstyle{box}=[rectangle, draw=black!100]
	\node[main] (theta)  {$S,I-1,R+1$};
	\node[main] (z) [right=of theta] {$S,I,R$};
	\node[main] (w) [right=of z] { $S-1, I+1, R$};
	\path (z) edge [ connect] node [midway,above=0.1em]  {$\gamma(t) I$} node[midway,below=0.1em]{$\mbox{removal}$}(theta)
	(z) edge [connect] node [midway,above=0.1em]  {$\beta(t) SI$}node[midway,below=0.1em]{$\mbox{infection}$}(w);
	\end{tikzpicture}
	\caption{SIR Markov jump process. From the current state with the counts $S,I,R$, the population can transition to state $S-1, I+1, R$ (an infection event) with rate $\beta(t) SI$ or  to state $S,I-1,R+1$ (a removal event) with rate $\gamma(t) I$. No other instantaneous transitions are allowed.}
	\label{f:ctmc}
\end{figure}
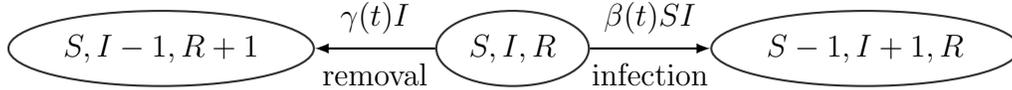
%\citet{volz2009phylodynamics} derive the coalescent rate for the SIR model as 
%\begin{eqnarray}
%	\label{eq:volz}
%	\lambda(t) = \dfrac{\binom{l_t}{2}}{\binom{I_t}{2}} \beta_t S_t I_t \approx \binom{l_t}{2}\dfrac{2 \beta_t S_t}{I_t}, 
%\end{eqnarray}
%\mtadd{Let the vector $\bX_t = (S_t, I_t)$ denote the counts of susceptible and infected individuals at time $t$. Vector $\btheta_t = (\beta_t, \gamma_t)$ denotes the time $t$ rate parameters in SIR model.
% where $\beta_t$ denotes the per capita rate of disease transmission to susceptible hosts (infection rate) and $\gamma_t$ denotes the recovery rate for an infected individual at time $t$.
\par 
The structured coalescent models assume that only coalescent times $c_1<c_2<\cdots<c_{n-1}$  provide information about the population dynamics. 
These times are modeled as jumps of an inhomogeneous pure death process with rate $\lambda(t)$, where each ``death" event corresponds to coalescence of two lineages and $\lambda(t)$ is called a coalescent rate. 
Then the density of the genealogy, which serves as a likelihood in our work, is written as 
\[
\Pr (\mathbf{g}) \propto \prod_{k=2}^{n} \lambda(c_{k-1}) \exp\left(-\int_{c_{k-1}}^{c_{k}} \lambda(\tau) \mathrm{d}\tau\right),
\]
where $c_n$ denotes the most recent sequence sampling time.
The dependence of coalescent rate on the assumed population dynamics can be complicated and mathematically intractable, but luckily  approximations exist for some specific cases.   
For the SIR model the approximate coalescent rate can be obtained via the following formula:
\begin{eqnarray}
\label{eq:volz}
\lambda(t) = \lambda(l(t), \beta(t), \mathbf{X}(t)) = %\dfrac{\binom{l_t}{2}}{\binom{I_t}{2}} \beta_t S_t I_t \approx
 \binom{l(t)}{2}\dfrac{2 \beta(t) S(t)}{I(t)}, 
\end{eqnarray}
where $l(t)$ is the number of  lineages present at time $t$. 
Note that when the number of susceptibles is not changing significantly relative to the total population size (i.e., $S(t) \approx N$) and infection rate is constant (i.e., $\beta(t) = \beta$), the structured coalescent reduces to the classical Kingman's coalescent, where we interpret $I(t)/(2 \beta N)$ as the effective population size trajectory \citep{kingman1982coalescent}.
It is possible to find approximate coalescence rate for general compartmental models, but closed form expressions exist only for a few models with a low number of compartments (e.g., SI, SIR) \citep{volz2009phylodynamics,volz2012complex, dearlove2013coalescent}.

%\begin{eqnarray}
%\label{eq:coal}
%\lambda(t) = \dfrac{\binom{l_t}{2}}{N_e(t)},
%\end{eqnarray}
%where \mtdel{$Ne_t$ denotes the effective population size at time $t$ and }$l_t$ is the number of  lineages present at time $t$. 
%%Let $n$ be the number of sampled sequences.
%We denote the coalescent times as $\mathcal{T} = \{c_1,\ldots,c_n\}$, with $c_1<c_2<\cdots<c_{n-1}<c_n =0$. 
%The coalescent events happen in the reverse time order from $c_{n-1}$ to $c_1$. \mtadd{The coalescent events in the structured coalescent model correspond to the birth of infected individuals. $\beta_tS_t I_t$ is the birth rate for new infections and} the fraction ${\binom{l_t}{2}}/{\binom{I_t}{2}}$ in \eqref{eq:volz} can be interpreted as  the probability in which the two selected coalescent lineages are both in the sample. 
% \vmcomment{Rewrite the above sentence, make it shorter and structure it better.}
Since we allow sequences to be sampled at different times $s_1< s_2< \cdots < s_m=c_n$, some inter-coalescent times are censored. 
To deal with this censoring algebraically,  each inter-coalesecent interval $[c_{k-1},c_k)$ is partitioned by the sampling events into $i_k$ sub-intervals: $\cI_{0,k},\ldots,\cI_{i_k-1,k}$. The intervals that end with a coalescent event are defined as
$\cI_{0,k} = [c_{k-1},\min\{c_k,s_j\})$, for $s_j>c_{k-1}$ and $k=2,\ldots,n$. 
Let the number of lineages in each interval $\cI_{i,k}$ be $l_{i,k}$.
Then the number of lineages at each time point $t$ can be written as $l(t) = \sum_{k=2}^n \sum_{i=0}^{i_{k-1}} 1_{\{t \in I_{i,k}\}} l_{i,k}$.
%Under the coalescent model, the general form of density of the geneology $\mathbf{g}$ given the \mtdel{effective population size trajectory is} population trajectory function $\bX(\cdot)$ and rate function $\btheta(\cdot)$ is
%\begin{eqnarray}
%\label{eq:coallike}
%\Pr (\mathbf{g} |\bX,\btheta) &\propto& \prod_{k=2}^{n} \binom{l_{c_{k-1}}}{2}\dfrac{2 \beta_{c_{k-1}} S_{c_{k-1}}}{I_{c_{k-1}}} \exp\left(-\int_{c_{k-1}}^{c_{k}} \binom{l_s}{2}\dfrac{2 \beta_s S_s}{I_s} \df s\right).
%\end{eqnarray}
%%\vmcomment{Shouldn't there a binomial term in the denominator of the factor in front of the $\exp(\cdot)$?}
%The number of active lineages $l_s$ is piecewise constant of time $s$ determined by coalescent time and sampling times. \citet{rodrigo1999coalescent} give a general likelihood expression with sequences sampled at different times. 
%Assuming the sampling times follow the time order $s_1\leq s_2\leq \cdots\leq s_n$ with $s_n> c_{n-1}$, by sorting the sampling times $s_1,\ldots,s_n$ along with the coalescent times $c_1,\ldots,c_{n-1}$, the interval between two coalesecent times $(c_{k-1},c_k]$ is partitioned by the sampling events into $i_k$ sub-intervals: $\cI_{0,k},\ldots,\cI_{i_k-1,k}$. The intervals that end with a coalescent event are defined as
%$\cI_{0,k} = (c_{k-1},\min\{c_k,s_j\}]$, for $s_j>c_{k-1}$ and $k=2,\ldots,n$. 
If the interval $\cI_{i,k}$ ends with a coalescent time, the number of lineages in the next interval will be decreased by $1$. 
If the interval ends with a sampling event $s_i$, then the number of lineages in the next interval is increased by $n_i$ --- the number of sequences sampled at time $s_i$. Figure \ref{f:tree} shows an example of a genealogy with labeled coalescent times, sampling times, number of lineages, and the corresponding intervals. 
%Following the above notations, the coalescent likelihood becomes  
%\begin{eqnarray}
%\label{eq:coalgeneral}
%\Pr (\mathbf{g} |N_e(t)) &\propto& \prod_{k=2}^{n} \binom{l_{0,k}}{2}\dfrac{2 \beta_{c_{k-1}} S_{c_{k-1}}}{I_{c_{k-1}}}\exp\left(-\sum_{i=0}^{i_k-1}\int_{I_{i,k}} \dfrac{\binom{l_{i,k}}{2}}{N_e(s)} \df s\right).
%\end{eqnarray}
\par
We are now ready to connect the SIR model and a genealogy with serially sampled tips with the help of a structured coalescent density/likelihood. 
First we discretize the time interval between the time to most recent common ancestor $c_1$ (time corresponding to the root of the tree) and the most recent sampling time $s_m$ using a
regular grid $t_0<t_1<\cdots<t_T$ ($t_0 < c_1$ and $t_T> s_m$).
Using this grid, we discretize the latent epidemic trajectory by assuming that
$\bX(t) = \sum_{j=1}^{T}\bX_{j-1}\mathbf{1}_{[t_{j-1},t_j)}(t)$, where $\bX_{j} = (S_{j}, I_{j})$ is a column vector. 
Similarly, we discretize the infectious disease dynamics parameter vector trajectory $\btheta(t) = (\beta(t), \gamma(t))$ so that $\btheta(t) = \sum_{j=1}^{T} \btheta_{j-1} \mathbf{1}_{[t_{j-1},t_j)}(t)$, where $\btheta_{j} = (\beta_j, \gamma_j)$ is also a column vector.
We collect latent variables $\bX_j$s and parameters $\btheta_j$s into matrices $\bX_{0:T}$ and $\btheta_{0:T}$ respectively. 
The SIR structured coalescent density/likelihood then becomes
\begin{eqnarray}
\label{eq:coalgeneral}
\Pr (\mathbf{g} \mid \bX_{0:T},\btheta_{0:T}) \propto \prod_{k=2}^{n} \binom{l(c_{k-1})}{2}\dfrac{2 \beta(c_{k-1}) S(c_{k-1})}{I(c_{k-1})}\exp\left(-\sum_{i=0}^{i_k-1}\int_{\cI_{i,k}} \binom{l_{i,k}}{2}   \dfrac{2 \beta(\tau) S(\tau)}{I(\tau)} \df \tau\right).
\end{eqnarray}
\label{s:coal}
Since $S(t)$, $I(t)$, and $\beta(t)$ are piecewise constant functions, the integrals in the above formula are readily available in closed form and are fast to compute.

\begin{figure}
	\label{f:tree}
	\centerline{\includegraphics[width=6in]{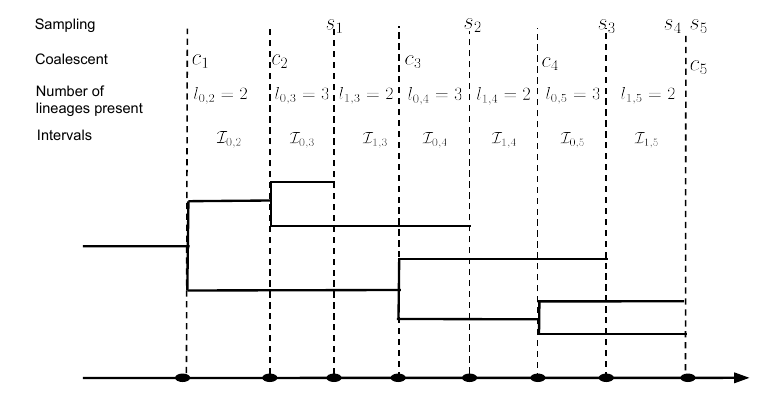}}
	\caption{Example of a genealogy. Black solid lines show the genealogy structure. The colescent times $c_1,\ldots,c_4$ and sampling times $s_1,\ldots,s_4$ are labeled with vertical dashed lines. The number of lineages $l_{i,k}$ is given in each intervals $\cI_{i,k}$.}
\end{figure}

%%%%%%%%%%%%%%%%%%%%%%%%%%%%%%%%%%%%%%%%%%%%%%%%%%%%%%%%%%

%%%%%%%%%%%%%%%%%%%%%%%%%%%%%%%%%%%%%%%%%%%%%%%%%%%%%%%%%%
\subsection{Bayesian data augmentation}
\subsubsection{Posterior distribution}
\label{s:SIR}
 Given genealogy $\bg$, our goal is to infer the latent SIR population dynamic $\bX_{0:T}$ and rate parameters $\btheta_{0:T}$ over time grid $t_0<t_1<\cdots<t_T$. Let $\Pr(\bX_{0})$ and $\Pr(\btheta_{0:T})$ denote the prior densities for the initial compartment states and the SIR parameters respectively. The posterior distribution for the population trajectory $\bX_{0:T}$ and parameters 
$\btheta_{0:T}$ given observed genealogy $\bg$ is 
\begin{eqnarray}
\label{eq:bayes}
\Pr\left(\bX_{0:T},\btheta_{0:T} \mid \bg\right) \propto \Pr\left(\bg\mid\bX_{0:T}, \btheta_{0:T}\right)\Pr\left(\bX_{1:T}\mid\bX_{0},\btheta_{0:T}\right)\Pr\left(\btheta_{0:T}\right)\Pr\left(\bX_0\right), 
\end{eqnarray}
where $\Pr\left(\bg\mid\bX_{0:T}, \btheta_{0:T}\right)$ is the structured coalescent likelihood introduced in Section \ref{s:coal} and $\Pr(\bX_{1:T}\mid \bX_0,\btheta_{0:T})$ is the likelihood function for discrete observations of trajectory $\mathbf{X}_{1:T}$ given the initial value $\bX_0$:
%\vmdel{By assuming $\bX_{0:T}$ follows Markov property given $\btheta_{0:T}$, the joint likelihood can be factorized as the product of transition likelihoods, i.e} 
\begin{eqnarray}
\label{eq:factorize}
\Pr(\bX_{1:T}\mid\bX_0,\btheta_{0:T}) = \prod_{i=1}^{T}\Pr(\bX_i\mid\bX_{i-1}, \boldsymbol{\theta}_{i-1}),
%\Pr(\bX_{0:T}\mid\btheta_{0:T}) = \Pr(\bX_0)\cdot\prod_{i=1}^{T}\Pr(\bX_i\mid\bX_{i-1}, \boldsymbol{\theta}_{i-1}).
\end{eqnarray}
where the factorization comes from the assumed Markov property of the disease dynamics.
However, the SIR transition density $\Pr(\bX_i\mid\bX_{i-1}, \btheta_{i-1})$ becomes intractable as population size $N$ grows large, making it difficult to perform likelihood-based inference for outbreaks in large populations.

\subsubsection{Linear noise approximation}
\label{s:LNA}
To furnish a feasible computation strategy for large populations, we use a linear noise approximation (LNA) method, in which the computationally intractable transition probability $\Pr\left(\bX_i\mid\bX_{i-1}, \btheta_{i-1}\right)$ is approximated using a closed form Gaussian transition density. 
%The linear noise approximation (LNA) method, known as the lowest-order correction to a deterministic ordinary differential equation (ODE) describing asymptotic mean behavior of the chemical reaction Markov Jump process approximates computationally intractable transition probabilities of MJP  with closed-form transition densities. 
%The original LNA derivation is based on Taylor expansion of the Fokker-Planck equation for the MJP \citet{kurtz1970solutions,kurtz1971limit,van1983stochastic}. 
%\vmcomment{Wasn't there earlier work by Tom Kurtz in the 1970s?}\mtcomment{Add reference to Kurtz' paper}
%Recent works \citep{wallace2010simplified,wallace2013new,ferm2008hierarchy} show a more intuitive derivation by considering the LNA as a first order approximation of the diffusion process related to the stochastic differential equation (SDE) representation of the stochastic kinetic model. \mtadd{see appendix for more details.}\mtdel{In the Appendix, we provide a SDE-based sketch of the derivation of the LNA.}
The LNA method replaces the MJP discrete state space with a continuous state space of $\bX(t)$ to approximate the counts of at time $t$, under the following constraints: $S(t) > 0, I(t) > 0$ and $S(t) + I(t)\leq N$. 
%\vmadd{Using this continuous state space, the MJP process is approximated with a linear stochastic differential equation (SDE) with analytically tractable transition densities.}
%\vmcomment{Yes, this is how we describe it below, but couldn't one fold the deterministic part into SDE? I do not suggest that we do it, but that was my thinking when I wrote the sentence above.}\mtcomment{It would be OK if we don't say anything about related to the SDE approach in the main text. I think when saying MJP is approximated by SDE, it would be more natrual be the SDE (CLE) approach, instead of LNA.}
To briefly explain how this approximation is obtained, we will need additional notation.
\par
The SIR MJP instantaneous transitions, depicted in Figure~\ref{f:ctmc}, are encoded in an effect matrix 
\begin{equation}
\label{eq:effectmx}
\bA = \begin{blockarray}{ccl}
\mbox{susceptible} & \mbox{infected} \\
\begin{block}{(cc)c}
-1 & 1 & \mbox{infection}\\
0 & -1 & \mbox{removal}.\\
\end{block}
\end{blockarray}
\end{equation}
Each row in matrix \eqref{eq:effectmx} represents a type of transition event and each column corresponds to a change in the susceptible and infected populations. 
Next, we define a rate vector $\bh$ and a rate matrix $\bH$:
\begin{eqnarray}
\label{eq:hH}
\bh(\bX(t),\btheta(t)) = \left(\begin{array}{c}
\beta(t) S(t)I(t)\\
\gamma(t) I(t)
\end{array}\right), \bH = \mbox{diag}\left(\bh(\bX(t),\btheta(t))\right)=\left(
\begin{array}{cc}
\beta(t) S(t)I(t) & 0\\
0 & \gamma(t) I(t)
\end{array}
\right).
\end{eqnarray}
The above notation, as well as subsequent developments based on it, can be generalized
to other epidemic models and, more generally, to a large class of density dependent stochastic processes, 
such as chemical reaction and gene regulation models  \citep{wilkinson2011stochastic}. 
See  Section \ref{s:generalization} in the Appendix for more details on this generalization.

%\mtcomment{Here I change the notation and formulate ODEs on time interval $[t_{i-1},t_i]$ instead of $[0,t]$}. 
Consider a transition from $\bX_{i-1}$ at time $t_{i-1}$ to $\bX_{i}$ at $t_i$.
Recall that we assume that the SIR rates $\btheta(t)$ take constant values $\btheta_{i-1}$ in $[t_{i-1}, t_i)$. 
The LNA represents the value of the next state $\bX_{i}$ as $\bX_i = \etab(t_i) + \bM(t_i)$, where $\etab(t_i)$ is a 
deterministic component and $\bM(t_i)$ is a stochastic component.
The deterministic component $\etab(t_i)$  can be obtained by solving the standard SIR ODE that in our notation can be written as
\begin{eqnarray}
\label{eq:ODEeta}
\df \etab(t) = \bA^T \bh(\etab(t), \btheta_{i-1}) \df t,  \quad t \in [t_{i-1}, t_i].
\end{eqnarray}
 %\citet{van1983stochastic}. 
The stochastic part $\bM(t_i)$ corresponds to the solution of the following SDE at time $t_i$:
\begin{eqnarray}
\label{eq:bM}
\df \bM(t) = \bF(\etab(t),\btheta_{i-1}) \bM(t) \df t + \sqrt{\bA^T\bH(\etab(t),\btheta_{i-1}) \bA }\df \bW_t, \quad t \in [t_{i-1}, t_i], 
\end{eqnarray}
where $\bF(\etab(t),\btheta_{i-1}):= \dfrac{\partial \bA^T\bh(\bX(t),\btheta_{i-1})}{\partial \bX} \Big |_{\bX = \etab(t)}$ is the Jacobian matrix of the deterministic part $\bA^T\bh(\bX(t),\btheta_{i-1})$ in \eqref{eq:ODEeta} evaluated at $\etab(t)$. 
The solution of SDE \eqref{eq:bM}, $\bM(t)$, is a Gaussian process and can be recovered by solving two ordinary differential equations governing the mean function $\bbm(t):=\bE [\bM(t)]$ and covariance function $\bPhi(t) := \Var(\bM(t))$:
\begin{eqnarray}
\label{eq:m}\df \bbm(t) &=&  \bF(\etab(t),\btheta_{i-1}) \bbm(t)\df t,\\
\label{eq:Phi}\df \bPhi(t) &=& \left(\bF\left(\etab(t),\btheta_{i-1}\right) \bPhi(t) + \bPhi(t)\bF^T(\etab(t),\btheta_{i-1}) + \bA^T\bH\left(\etab(t),\btheta_{i-1}\right)\bA \right)\df t, 
\end{eqnarray}
for $t\in [t_{i-1}, t_i]$. A heuristic derivation of LNA, based on \citet{wallace2010simplified}, is given in Section \ref{s:LNAderive} of the Appendix. 
Let $\etab_{t_{i-1}}, \bbm_{t_{i-1}}, \bPhi_{t_{i-1}}$ denote the initial values of $\etab(t), \bbm(t), \bPhi(t)$ at time $t_{i-1}$ in differential equations \eqref{eq:ODEeta}, \eqref{eq:m}, and \eqref{eq:Phi} respectively. 
There are two options for choosing these initial conditions: the non-restarting LNA of \citet{komorowski2009bayesian} and the restarting LNA of \citet{fearnhead2014inference}.  
In this paper, we will use the non-restarting LNA by \citet{komorowski2009bayesian} with the following choice of initial conditions:
\begin{enumerate}
	\item $\etab_{t_{i-1}} = \etab(t_{i-1})$, where $\etab(t_{i-1})$ was obtained by solving the ODE \eqref{eq:ODEeta} using parameter vector $\btheta_{i-2}$ over the interval $[t_{i-2}, t_{i-1}]$, 
	\item $\bbm_{t_{i-1}} =\bX_{i-1} - \etab(t_{i-1})$, 
	\item $\bPhi_{t_{i-1}} = \mathbf{0}$. 
\end{enumerate}
Solving the system of ODEs \eqref{eq:ODEeta}, \eqref{eq:m}, \eqref{eq:Phi}, we obtain $\etab(t_i)$, $\bbm(t_i)$, and $\bPhi(t_i)$. The solution $\bbm(t_i)$ will be a function of the initial value $\bX_{i-1} - \etab(t_{i-1})$, the interval length $\Delta t_i := t_{i} - t_{i-1}$ and the SIR rates $\btheta_{i-1}$. To make this dependence explicit, we write $\bbm(t_i):=\bmu\left(\bX_{i-1} - \etab(t_{i-1}),\Delta t_i, \btheta_{i-1}\right)$. Since \eqref{eq:m}  is a first order homogeneous linear ODE, the solution $\bmu\left(\bX_{i-1} - \etab(t_{i-1}),\Delta t_i, \btheta_{i-1}\right)$ is a linear function of $\bX_{i-1} - \etab(t_{i-1})$.
Hence, the transition from $\bX_{i-1}$ to $\bX_{i}$ follows the following Gaussian distribution:
\begin{eqnarray}
\label{eq:transition}
\bX_i \mid \bX_{i-1},\btheta_{i-1} \sim \mathcal{N}\left(\etab(t_i) + \bmu\left(\bX_{i-1} - \etab(t_{i-1}),\Delta t_i, \btheta_{i-1}\right),\bPhi(t_i) \right).
\end{eqnarray}
%\vmadd{To summarize, the derived conditional Gaussian densities above allow us to ...}
To summarize, the derived conditional Gaussian densities $\Pr(\bX_i \mid \bX_{i-1},\btheta_{i-1})$ allow us to  compute the density of the latent SIR trajectory \eqref{eq:factorize}. As a result, our augmented posterior distribution of $\bX_{0:T}$ and $\btheta_{0:T}$, shown in equation \eqref{eq:bayes}, can be computed up to proportionality constant and approximated via ``standard'' (not particle filter) MCMC approaches.

\subsection{Reparameterization, priors, and MCMC algorithm}
\label{s:inference}
\subsubsection{Reparameterizing SIR rates}
\label{s:sir_parameterization}
%The inference of the model can be formulated into a Bayesian framework. 
We have experimented with multiple parameterizations of our inhomogeneous SIR model and found that the following parameterization works best with our proposed MCMC algorithm for approximating the posterior distribution \eqref{eq:bayes}.
First, recall that we allow SIR rates to vary with time. 
Since it is much more likely for the infection rate to be time variable, we are going to assume a constant removal/recovery rate $\gamma$.
This leaves us with the following parameters: infection rates on a grid $\boldsymbol{\beta}$, removal rate $\gamma$, and initial SIR state $\mathbf{X}_0 = (S_0, I_0)$.
Since we are interested in modeling an emerging infectious disease outbreak, we set the initial counts of susceptibles to $S_0 = N - I_0$.
Initial counts of infected individuals, $I_0$, is assumed to be low and treated as an unknown parameter with a lognormal prior distribution.
% parameters of the priors should be discussed in the Results section; in the methods, you what to keep things general enough. 
%similar as the prior setup $I_0\sim \mbox{lognormal}(0,1)$ in \citet{volz2018bayesian}, but allows to fit cases with higher counts of infected individuals at the beginning of the epidemic.}
Instead of the time-varying infection rate $\beta(t)$, we parameterize our SIR model with a time-varying basic reproduction number $R_{0}(t) = [\beta(t) N]/ \gamma$.
The reproduction number is interpreted as the average number of cases that one case generates over its infectious period in a completely susceptible population. 
Since our infection rate changes in a piecewise manner, the basic reproduction number varies over time in a piecewise manner too: 
	\begin{eqnarray}
	\label{eq:changeOriginal}
	R_0(t) = \sum_{i=1}^{T} R_{0_{i-1}}\ide_{[t_{i-1},t_i)}(t), 
	\end{eqnarray} 
  where $R_{0_i} = [\beta_i N]/\gamma$ is the reproduction number corresponding to the time interval $[t_{i-1}, t_i)$.
	Let $R_0 = R_{0_0}$ be the initial basic reproductive number and $\delta_i = \log\left(R_{0_{i}}/R_{0_{i-1}}\right)/\sigma$ be a normalized log ratio of $R_0(t)$ over two successive time intervals. 
	Then, interval-specific basic reproduction numbers can be written as
	\begin{eqnarray}
	%R_0(t)  = R_0(t,\bdelta_{1:T},\sigma)= R_{0} \exp\left(\sum_{i=2}^{T}\sigma \delta_{i-1} \ide_{[t_1,t_i)}(t)\right). 
		R_{0_i}  = R_0(t,\bdelta_{1:T},\sigma)= R_{0} \exp\left(\sum_{k=1}^{i}\sigma \delta_{k}\right), \mbox{for } i = 1,\ldots, T,
	\end{eqnarray}	
	where we assume \textit{a priori} that $\delta_i$s are independent standard normal random variables.

	This construction implies that log-transformed piecewise constant reproduction numbers, $\log(R_{0_i})$s, \textit{a priori} follow a first order Gaussian Markov random field (GMRF) with standard deviation $\sigma$ that controls the \textit{a priori} smoothness of $R_0(t)$ trajectory \citep{rue2001fast,rue2005gaussian}.
	In addition to speeding MCMC convergence, working with  $R_0(t)$ is convenient, because this trajectory is dimensionless and retains its interpretation when one changes the population size $N$.
The initial $R_0$ is assigned a $\mbox{lognormal}(a_1,b_1)$ prior. 
We use a $\mbox{lognormal}(a_2,b_2)$ prior for the inverse of standard deviation $1/\sigma$.
	
	%and $\sigma$ to be standard deviation parameter that controls the smoothness of the $R_0(t)$ trajectory. %A large $\sigma$ would allow more variability of $R_0(t)$ but may lead to overfitting. %Simulation studies in section \ref{s:ESS} show that this prior works well with different scenarios of $R_0(t)$ functions. 
	
\subsubsection{Reparameterizing SIR latent trajectories}	
	We reparameterize the latent SIR trajectory $\bX_{1:T}$ with a sequence of independent Gaussian random variables $\bxi_{1:T}$, following a non-centered parameterization framework of \citet{papaspiliopoulos2007general}. 
	According to formula \eqref{eq:transition},  conditional on $\bX_{i-1}$, $\bX_{i}$ can be written as
	\begin{eqnarray}
	\label{eq:linear}
	\bX_{i} = \etab(t_i) + \bmu(\bX_{i-1} - \etab(t_{i-1}), \Delta t_i,\btheta_{i-1}) + \bPhi_i^{1/2}\bxi_{i},
	\end{eqnarray}
	where $\bxi_i \overset{\text{iid}}{\sim} \mathcal{N}(0,\bI)$ for $i = 1,\ldots,T$ and $\bI$ is a $2\times 2$ identity matrix. 
In our parameterization, we will treat $\bxi_{1:T}$ as random latent variables and the SIR latent trajectory $\bX_{1:T}$ as a deterministic transformation of $\bxi_{1:T}$. 
More details about our non-centered parameterization of $\bX_{1:T}$ can be found in Section \ref{s:noncenter} of the Appendix.
	%and transition $\bX_{i-1} \rightarrow \bX_i$ follows the Gaussian distribution $\mathcal{N}\left(\etab_i + \bbm_{i}\left(\bX_{i-1} - \etab_{i-1}\right) - \etab_i,\bPhi_i \right)$. 
%\mtadd{\eqref{eq:linear} shows a linear Gaussian relationship in the transition from $\bX_{i-1}$ to $\bX_i$,} hence the vector $\bX_{1:T}\mid\bX_0$ follows a multivariate Gaussian distribution. By assuming $\bX_0 = \etab_0$, the mean of $\bX_{1:T}$ will be the ODE solution $\etab_{1:T}$ in the non-restarting LNA framework and we can find a $2T \times 2T$ matrix $\bL(\bX_0,\btheta_{0:T})$ such that 
%	\begin{eqnarray}
%	\label{eq:LNAnoise}
%	\left(\begin{array}{c}
%	\bX_1\\
%	\vdots\\
%	\bX_T
%	\end{array}\right) = \left(
%	\begin{array}{c}
%	\etab_1\\
%	\vdots\\
%	\etab_T
%	\end{array}
%	\right) + \bL(\bX_0,\btheta_{0:T})\left(
%	\begin{array}{c}
%	\bxi_1\\
%	\vdots\\
%	\bxi_T
%	\end{array}
%	\right).
%	\end{eqnarray} The exact derivation for $\bL(\bX_0,\btheta_{0:T})$ is given in the appendix. 

\subsubsection{MCMC algorithm}
Using our new parameterization, we are now interested in the posterior distribution of 
the initial number of infected individuals, $I_0$, removal rate, $\gamma$, the initial basic reproduction number, $R_0$, standardized vectors, $\bdelta_{1:T}$ and $\bxi_{1:T}$, 
and GMRF standard deviation, $\sigma$:
\[
\begin{aligned}
%\Pr(\bX_{0:T}, \bX_0, \bbtheta_{0:T},\sigma \mid \bg) &\propto& \Pr(\bg \mid \bX_{0:T},\bbtheta_{0:T})\Pr(\bX_{1:T} \mid \bX_0, \bbtheta_0)\Pr(\bbtheta_{1:T}|\sigma,\bbtheta_0)\Pr(\bX_0)\Pr(\bbtheta_0)\Pr(\sigma)\\
\Pr(I_0,R_0,\gamma, \bdelta_{1:T}, \bxi_{1:T}, \sigma|\bg)  &\propto  \Pr(\bg|I_0,R_0,\gamma, \bdelta_{1:T}, \bxi_{1:T}, \sigma)\Pr(I_0)\Pr(R_0)\Pr(\gamma)\Pr(\bdelta_{1:T})\Pr(\bxi_{1:T})\Pr(\sigma)\\
&\propto \Pr(\bg|\bX_{0:T}, \btheta_{0:T})\Pr(I_0)\Pr(R_0)\Pr(\gamma)\Pr(\bdelta_{1:T})\Pr(\bxi_{1:T})\Pr(\sigma).
\end{aligned}
\]
The latent variables $\bX_{0:T}$ and parameter vector $\btheta_{0:T}$ are deterministic functions of new parameters $I_0$,  $\gamma$, $R_0$, $\bdelta_{1:T}$, $\bxi_{1:T}$, and $\sigma$.
%where $\Pr(\bg|I_0,R_0,\gamma, \bdelta_{1:T}, \bxi_{1:T}, \sigma)$ is equivalent to $\Pr(\bg\mid \bX_{0:T},\btheta_{0:T})$ as the coalescent likelihood \eqref{eq:coalgeneral} and $\Pr(I_0)$,  $\Pr(R_0)$, $\Pr(\gamma)$, $\Pr(\bdelta_{1:T})$, $\Pr(\bxi_{1:T})$, $\Pr(\sigma)$ are priors distributions for the corresponding parameters. 
We use the following MCMC with block updates to approximate this posterior distribution.
We update high dimensional vector $\bU = (\log(R_0), \bdelta_{1:T}, \log(\sigma))$ using the efficient elliptical slice sampler \citep{murray2010elliptical}. 
Vector $\bxi_{1:T}$ is updated the same way in a separate step.
Initial number of invected individuals $I_0$ and removal rate $\gamma$ are updated using univariate Metropolis steps.
The full procedure is described in Algorithm~\ref{MCMCstep}, which together with details of the elliptical slice sampler can be found in Section \ref{s:ESS} of the Appendix.
After MCMC is done, we report posterior summaries using natural parameterization.
For example, we report posterior medians  and 95\% Bayesian credible intervals (BCIs) of the piecewise latent reproduction number trajectory, $R_{0_i}$ for $i = 0,\ldots,T$, and latent trajectory $\bX_{0:T}$.

\subsubsection{Implementation}
Our R package called \texttt{LNAPhylodyn} provides an implementation of our MCMC algorithm. The package code is publicly available at \url{https://github.com/MingweiWilliamTang/LNAphyloDyn}. This repository also contains scripts that should allow one to reproduce key numerical results in this manuscript.
 
\section{Simulation experiments}
\label{s:exp}
\subsection{Simulations based on single genealogy realizations}
\label{s:oneReal}
In this section, we use simulated genealogies to assess performance of our LNA-based method and to compare it with an ODE-based method, where we replace equation \eqref{eq:linear} with its simplified version: $\bX_{i} = \etab(t_i)$. 
Under our assumption of a fixed and known genealogy and constant $R_0$, our ODE-based method closely resembles  previously developed methods by \citet{volz2009phylodynamics} and \citet{volz2018bayesian}. 
To compare ODE-based and LNA-based models in a Bayesian nonparametric setting, we equip the ODE model with the GMRF prior for time-varying $R_0(t)$, described in Section \ref{s:sir_parameterization}.  
We use the same MCMC algorithm for both LNA-based and ODE-based models, except we do not have a separate step to update latent vector $\bxi_{1:T}$ (equivalently, $\bX_{0:T}$) in the ODE-based inference. See Algorithm~\ref{ode-mcmc} in the Appendix for a more detailed description of the ODE-based MCMC.
\par
The simulation protocol consists of two steps. 
First, given the population size $N$ and pre-specified parameters $\gamma$, $I_0$, and $R_0(t)$, we simulate one realization of the SIR population trajectory based on the MJP using the Gillespie algorithm \citep{gillespie1977exact}. 
Next, we generate realistic lineage sampling times and simulate coalescent times from the distribution specified by density \eqref{eq:coalgeneral} using a thinning algorithm by \citet{palacios2013gaussian}. 

%\mtadd{Before fitting our full model using time-varying $R_0(t)$, we firstly fit a simpler model without using random walk in $R_0(t)$ trajectory. The results are attached in the Appendix \ref{s:constant}. }
%\mtcomment{The simulation result for a simpler without GMRF on $R_0(t)$ looks useless. I temporarily remove the results for the competition.}

We test LNA-based and ODE-based methods under three ``true'' $R_0(t)$ trajectories over the time interval $[0,90]$: 
\begin{enumerate}
	\item Constant (CONST) $R_0(t)$. $R_0(t) = 2.2$ for $t\in [0,90]$. Recovery rate $\gamma = 0.2$. Initial counts of infected individuals $I_0 = 1$.  Total population size is $N=$ 100,000. 
	\item Stepwise decreasing (SD) $R_0(t)$. $R_0(t) = 2,t\in [0,30)$, $R_0(t) =1, t\in [30,60)$ and $R_0(t) = 0.6, t\in[60,90]$. Recovery rate $\gamma = 0.2$. Initial counts of infected individuals $I_0 = 1$. Total population size $N = $ 1,000,000. 
%	\item Non-monotonic (NM) $R_0(t)$. $R_0(t) = 1.3\times 1.015^{0.5t}, t \in [0,30]$, $R_0(t) = 1.6253 \times 0.975^{t-30},t \in [30,80]$ and $R_0(t) = 0.4583, t\in [80,100]$. Recovery rate $\gamma = 0.35$. Initial counts of infected $I_0 = 1$. Total population size $N = 100,000$. 
\item Non-monotonic (NM) $R_0(t)$. $R_0(t) = 1.4\times 1.015^{0.5t}, t \in [0,30]$, $R_0(t) = 1.750 \times 0.975^{t-30},t \in [30,80]$ and $R_0(t) = 0.4583, t\in [80,90]$. Recovery rate $\gamma = 0.3$. Initial counts of infected individuals $I_0 = 3$. Total population size $N =$ 1,000,000. 
\end{enumerate}
%The initial number of infected $I_0$ is set to $1$ for all three scenarios. 

%We put an uninformative prior for the initial $R_0$ by assuming
%$R_0 \sim \mbox{lognormal}(0.7,0.5)$. The prior for the standard deviation parameter is
%$1/ \sigma \sim \mbox{lognormal}(3,0.2)$ in each scenario. 
For all simulations, we use $\mbox{lognormal}(1,1)$ prior for $I_0$. The parameters of the lognormal priors for the initial $R_0$ and inverse standard deviation $1/\sigma$ are set to $a_1 = 0.7, b_1 = 0.5$ and $a_2=3, b_2=0.2$ respectively, in such a way that \textit{a priori} $R_0(t)$ trajectory stayed within a reasonable range of $[0,5]$ with 0.9 probability. 
We assign an informative prior for $\gamma$ in each simulation scenario, because prior information about this parameter is usually available: (1) CONST: $\gamma \sim \mbox{lognormal}(-1.7, 0.1)$, (2) SD: $\gamma \sim \mbox{lognormal}(-1.7,0.1)$, (3) NM: $\gamma \sim \mbox{lognormal}(-1.2,0.1)$. We set the grid size to $T=36$, with $t_i - t_{i-1} = 2.5$ for $i = 1,\ldots,36$.
For both LNA-based and ODE-based methods, we use 300,000 MCMC iterations. 
All MCMC chains appeared to converge (trace plots are shown in Section \ref{s:traceplotExp} of the Appendix). 
The effective sample sizes of all unknown quantities were above 100.
%As a comparison, an ODE model with the same prior setup is implemented to fit the simulated genealogies in each scenario.%The grid for LNA inference is set every 2.5 from 0 to 100, i.e $t_i - t_{i-1} = 2.5$ for $i = 1,\ldots,40$. 
%The grid of change points has the same configuration as the LNA grid. 
%\mtdel{The posterior samples are drawn every 10 iterations after burning the first 50,000 iterations in the chain.}  The estimation from ODE-based method is superimposed on the same figures as a comparison with LNA-based method.  

\begin{figure}
	\centerline{\includegraphics[width=6.5in]{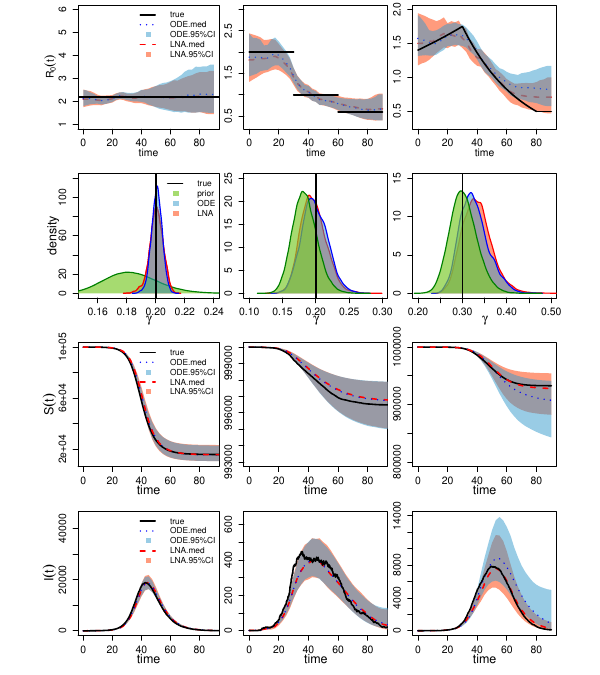}}
	\caption{Analysis of 3 simulation scenarios. Columns correspond to CONST, SD, and NM simulated $R_0(t)$ trajectories. 
	The first row shows the estimated $R_0(t)$ trajectories for the 3 scenarios, with the black solid lines representing the truth, the red dashed lines depicting the posterior median and the red-shaded area showing the 95\% BCIs for the LNA-based method. 
	For the ODE-based method, the posterior median is plotted in blue dotted lines, with blue shading showing the 95\% BCIs. The second row corresponds to the estimation for the removal rate $\gamma$. Posterior density curves from the LNA are shown in red lines and the posterior density for ODE is plotted in blue lines, compared with prior density curve in green lines. The bottom two figures shows the estimated trajectory of $S(t)$ and $I(t)$ respectively.
	%, using the same legends for LNA and ODE as the $R_0(t)$ cases in the first row.
	}
	\label{f:simulation2}
\end{figure}
  
%For the $R_0(t)$ estimation depicted in the the first row of figure \ref{f:simulation2}, the true $R_0(t)$ trajectory is closed to the posterior median and is perfectly covered by the 95\% BCI from both methods. LNA and ODE provide similar BCIs for $R_0(t)$, both of which increase as the time increases. %\mtadd{since the information at the end of the disease is weak}. 
%For $\gamma$, the two methods yield posterior densities centered at the true value. Compared with prior, the posterior distribution has significantly higher densities around the true value, indicating a strong information for $\gamma$ in the model. In additions,  %This shows our LNA models will not overfit too much of the data by explaining all the stochasticity of the model using $R_0(t)$. 
The first row of Figure~\ref{f:simulation2} shows point-wise posterior medians and 95\% BCIs for the basic reproduction number trajectory, $R_0(t)$.
Our LNA-based method performs well in capturing the continuous dynamics of $R_0(t)$. 
Though our approach may not perfectly catch the discontinuous changes in $R_0$ in the SD scenario, the method provides BCIs that are able to capture most of the $R_0(t)$ trajectory. 
The ODE-based method yields similar results in the CONST case and the SD case, but fails to capture the decreasing trends in the NM scenario. 

The second row in Figure \ref{f:simulation2} shows posterior summaries of removal rate $\gamma$. 
Both LNA-based and ODE-based methods provide good estimates in the CONST scenario, with posterior modes centered at the true value and higher posterior densities at truth when compared with the prior. 
In the SD and NM scenarios with the time varying $R_0(t)$, the posterior estimates from the LNA-based method and ODE-based method, though still centered at the truth, do not differ much from the prior distribution.

Posterior summaries of $S(t)$ and $I(t)$ are depicted in the third and fourth rows of Figure~\ref{f:simulation2}.
The two methods produce similar results in the CONST and SD scenario, as both of them have narrow BCIs covering the true trajectories. 
However, in the NM case, while the LNA-based method manages to recover the latent SIR trajectory trend, the BCIs from the ODE-based method fail to cover the true prevalence trajectory in the middle and at the end of the epidemic. Somewhat counterintuitively, LNA-based method produces BCIs for the latent trajectories, $S(t)$ and $I(t)$, that are narrower than its ODE counterparts. 
We suspect this is a result of the ODE-based method poor estimation of the basic reproduction number trajectory at the end of the epidemic.

 \subsection{Frequentist properties of posterior summaries}
 \label{s:repeat}
%The experiments in \ref{s:oneReal} have graphically demonstrated the advantage of LNA-based method over traditional ODE model through one realization of simulated SIR trajectory. 
In this Section, we design a simulation study based on repeatedly simulating SIR trajectories using MJP with pre-specified parameters. 
The simulations are based on the non-monotonic $R_0(t)$ trajectory scenario  in Section \ref{s:oneReal} with the same parameter setup, except the parameters of the lognormal prior for the initial $R_0$ are set to $a_1= 0.7, b_1 = 0.3$. 
Simulating SIR dynamics under low initial number of infected individuals $I_0$ can end up with  low prevalence trajectories that end at the beginning of the epidemic, or trajectories having unrealistically high prevalence, which are less likely to be observed during real infectious disease outbreaks. 
Therefore, while simulating SIR trajectories we reject such ``unreasonable'' realizations to arrive at 100 simulated trajectories. 
The details {of the rejection criteria are given in Section \ref{s:repdetails} of the Appendix. 
For each simulated SIR trajectory, a realization of a genealogy is generated using the structured coalescent process. 
We use both LNA-based and ODE-based model to approximate the posterior distribution of model parameters and latent variables for each genealogy. 
 %Since our main interest lies in estimating the $R_0(t)$ trajectories and the prevalence $I(t)$ , we come up with the following criterion to access the performance of each model. 
 
We use three metrics to evaluate models based on their estimates of $R_0(t)$ and $I(t)$: average error of point estimates (posterior medians), width of credible intervals, and frequentist coverage of credible intervals. 
Since the value of $R_0(t)$ is greater than 0 and usually upper-bounded by 20 (i.e, it stays within the same order of magnitude), we will measure accuracy using an unnormalized mean absolute error (MAE):
 \begin{eqnarray}
 \mbox{MAE} = \dfrac{1}{T+1}\sum_{i=0}^{T} |\hat{R}_{0_{i}} - R_0(t_i)|,
 \end{eqnarray}
 where $\hat{R}_{0_i}$ is the posterior median of $R_0(t_i)$. 
In contrast, $I(t)$ varies from one at the beginning of the epidemic to thousands at the peak, so to evaluate accuracy of prevalence estimation we use the mean relative absolute error (MRAE):
 \begin{eqnarray}
 \mbox{MRAE} = \dfrac{1}{T+1}\sum_{i=0}^{T}\dfrac{|\hat{I}_i - I(t_i)|}{I(t_i) + 1},
 \end{eqnarray}
 where $ \hat{I}_i$ is the posterior median of I$(t_i)$. 
 We assess precision of $R_0(t)$ estimation based on the mean credible interval width (MCIW): 
 \begin{eqnarray}
 \mbox{MCIW} = \dfrac{1}{T+1}\sum_{i=0}^{T}\left[\hat{R}^{0.975}_{0_{i}} - \hat{R}^{0.025}_{0_{i}}\right], 
 \end{eqnarray}
 where $\hat{R}^{0.025}_{0_{i}}$ and $\hat{R}^{0.975}_{0_{i}}$ denote the lower and upper bounds of the 95\% BCI for $R_{0_i}$. 
 Similar as our measure of accuracy, precision of $I(t)$ estimation is quantified via mean relative credible interval width (MRCIW):
 \begin{eqnarray}
 \mbox{MRCIW} = \dfrac{1}{T+1}\sum_{i = 0}^{T} \dfrac{\hat{I}^{0.975}_i - \hat{I}^{0.025}_i }{I(t_i) + 1},
 \end{eqnarray}
 where $\hat{I}^{0.025}_i$ and $\hat{I}^{0.975}_i$ specify the lower and upper bounds of the $95\%$ BCI of $I(t_i)$. 
In addition, we compute the ``envelope" (ENV) --- a measure of coverage of BCIs the true trajectory --- for $R_0(t)$ and $I(t)$ as follows:
 \begin{eqnarray}
 \mbox{ENV-R}_0 = \dfrac{1}{T+1}\sum_{i=0}^{T}\mathbbm{1}{\left(\hat{R}^{0.025}_{0_{i}} \leq  R_0(t_i) \leq \hat{R}^{0.975}_{0_{i}} \right)},
 %\mbox{ and, } 
 \mbox{ENV-I} = \dfrac{1}{T+1}\sum_{i=0}^{T}\mathbbm{1}{\left(\hat{I}^{0.025}_i \leq  I(t_i) \leq \hat{I}^{0.975}_i\right)}. 
 \end{eqnarray}
% \vmcomment{Interestingly, you are not using the above envelope notation in the figure below, so y axis labels needed to be updated anyway :)}
%The envelope characterizes the overall reliability in the estimation of $R_0(t)$ and $I(t)$ trajectories. From a conservative perspective, we general prefer methods with higher envelope. 

%Figure \ref{f:repeat} presents boxplots for the six statistics describled above from the 100 experiments. 
Sampling distribution boxplots of $R_0(t)$ posterior summaries are depicted in the left three plots of Figure~\ref{f:repeat}.
The LNA-based method yields significantly lower MAE compared with the ODE-based method.  
As a trade-off, the MCIWs produced by the LNA-method are generally higher, as expected since the LNA-based method incorporates the stochasticity in the population dynamics. 
With less bias and wider BCIs, the LNA-based method BCIs result in better coverage than ODE-based BCIs, as shown by the envelope boxplots. 
\par
Sampling distribution boxplots of $I(t)$ posterior summaries, shown in Figure \ref{f:repeat}, are similar to the $R_0(t)$ results, with the LNA-based method generally having lower MRAEs, higher MRCIWs and a better coverage/envelope than the ODE-based method. 
Again, somewhat counterintuitively, the MRCIWs for the LNA-based method are  smaller than the ODE counterparts.
This is likely  caused by  significant bias in $R_0(t)$ estimation by the ODE-based method. 

%\vmcomment{Replace ``bias" with ``absolute error (AE)'' when describing $\gamma$ results}
 We also report the absolute error (AE) and 95\% BCI widths for removal rate $\gamma$ in Figure~\ref{f:repeat}.
We note that an informative prior has been chosen for $\gamma$, because this parameter is weekly identifiably from genetic data alone.
The LNA-based method yields a slightly higher AE than the ODE method.
%\mtcomment{Actually based on the plot, the AE for LNA is slighly higher}
Both methods produce similar BCI widths.

 \begin{figure}
 	\centerline{\includegraphics[width=\textwidth]{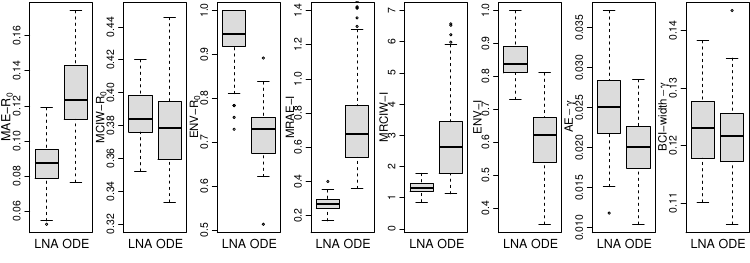}}
 	\caption{Boxplots comparing performance of LNA-based and ODE-based methods using 100 simulated genealogies. The first three plots show mean absolute error (MAE), mean credible interval width (MCIW) and envelope for $R_0(t)$ trajectory. The next three plots depict mean relative absolute error (MRAE), mean relative credible interval width (MRCIW), and envelope for $I(t)$ (prevalence) trajectory. The last two plots show the absolute error (AE) and Bayesian credible interval (BCI) width for $\gamma$. 
}
 	\label{f:repeat}
 \end{figure}
 
% In summary, the LNA-based method out performs the ODE-based method. By introducing the stochasticity model, the LNA model results in less biasness and more reliability in estimating the reproduction number and the prevalence trajectory. 
 
\section{Analysis of Ebola outbreak in West Africa}
\label{s:realdata}
We apply our LNA-based method to the Ebola genealogies reconstructed from molecular data collected in Sierra Leone and Liberia during the 2014--2015 epidemic in West Africa \citep{dudas2017virus}. 
We use a Sierra Leone genealogy, depicted in the top left plot of Figure~\ref{f:ebolaSL}, which was estimated from 1010 Ebola virus full genomes sampled from 2014-05-25 to 2015-09-12 in 15 cities. 
The Liberia genealogy, shown in the top left plot of Figure~\ref{f:ebolaLib}, was estimated from a smaller number of samples: 205 Ebola virus full genomes sampled from 2014-06-20 to 2015-02-14. 
The original sequence data and the reconstructed genealogies are publicly available at \url{https://github.com/ebov/space-time}. 
 %The dataset represents over 5\% of known cases of Ebola during that outbreak, and presents an unprecedented insight into the epidemiological dynamics. 

When Ebola virus infections were detected in West Africa in mid-Spring of 2014, various intervention measures were proposed and implemented to change behavior of individuals in the populations through which Ebola was spreading. Border closures, encouragement to reduce individual day-to-day mobility, and recommendations on changing burial practices were among the broad spectrum of interventions attempted by
multiple countries. 
It is reasonable to expect that these intervention measures resulted in lowering the contact rates among members of the populations, which in turn reduced the infection rate, or equivalently the basic reproduction number.
%Recall that our model assumes a time-varying basic reproduction number, the datasets will be well suited for testing our method.  %Some of these interventions must have helped, because the epidemic slowed down and eventually wained by 2015.

%The top left plot in figure \ref{f:ebolaSL} shows the Sierra Leone genealogy with sampling and coalescent times aligned to the calendar time x-axis. The Liberia Ebola genealogy with sampling and coalescent times is given in the top left plot in figure \ref{f:ebolaLib}. 
When analyzing the Sierra Leone and Liberia genealogies, we rely on conclusions of \citet{dudas2017virus} and assume the population in each country to be well mixed. Furthermore, we assume Ebola spread to follow SIR dynamics. 
For each country, the population size is specified based on its census population size in 2014, with $N =$ 7,000,000 for Sierra Leone and  $N =$ 4,400,000 for Liberia.  
As in our simulation study, we use the lognormal prior for $R_0$ with $a_1 = 0.7$ and $b_1 = 0.5$ and the lognormal prior for the inverse standard deviation $1/\sigma$ with $a_2= 3, b_2=0.2$. 
Recall that this prior setting ensures that \textit{a priori} $R_0(t)$ stays within a reasonable range of $[0,5]$ with probability 0.9. 
For removal rate $\gamma$, we use an informative lognormal prior with mean 3.4 and variance 0.2 based on previous studies \citep{towers2014temporal}. 
The parameter $1/\gamma$, interpreted as the length of the infectious period, is expected to be 8-18 days for each country \textit{a priori}. 
The total time span for each genealogy is divided evenly into 40 intervals, which results in grid interval lengths, $\Delta t_i$s, to be 12.41 days for Sierra Leone and 6.9 days for Liberia. 
We run the MCMC algorithm in Section \ref{s:inference} for 3,000,000 iterations for Sierra Leone data and 750,000 iterations for Liberia data. 
The posterior samples are obtained by discarding the first 100,000 iterations and saving every 30th iteration afterward. 
The trace plots in Section \ref{s:traceplotEbola} of the Appendix indicate the MCMC algorithm has converged and achieved good mixing  in each case. 

%\begin{figure}
%	\centerline{\includegraphics[width=0.9\textwidth]{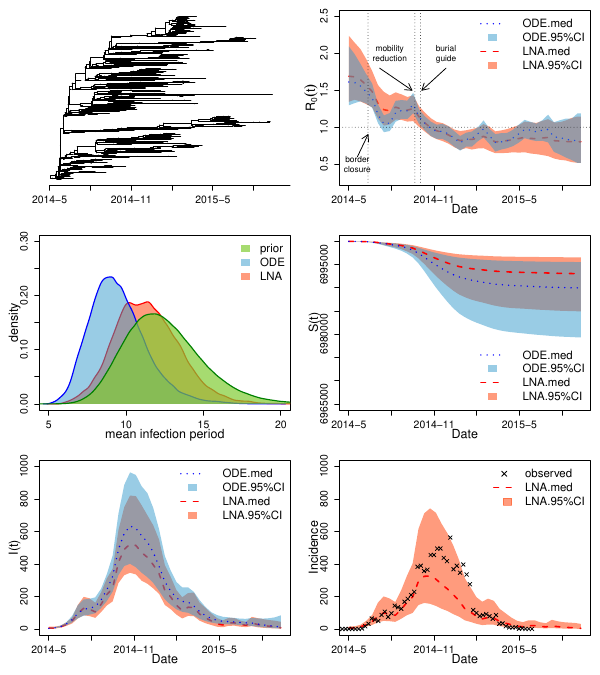}}
%	\caption{Analysis on Ebola genealogy data in Sierra Leone 2014. 
%	Top left plot depicts the Ebola genealogy.
%	The top right plot shows the estimated $R_0(t)$, with the solid line showing the posterior median and the grey shaded area showing the 95\% BCIs of  the LNA-based method. The posterior median based on ODE-based method is plotted as a dotted dash line with hatched shading corresponding to the 95\% BCIs. The medium left figure shows the estimate of the removal rate $\gamma$.  The posterior density from LNA is plotted in dashed line. The ODE result is plotted in thin dotted dashed line, while the prior distribution is plotted in dotted line. The medium right and the bottom left figures show the estimated trajectory of $S(t)$ and $I(t)$, using similar legends for LNA and ODE as in top right figure. 
		%Posterior median are shown in dashed lines along with the 95\% BCI (grey shaded area), in contrast with the posterior median based on ODE-based method is show in dotted dash line with the oblique shading area corresponding to the 95\% BCI. 
%		The bottom right plot shows the predicted median and 95\% BCIs for weekly reported incidence compared with the reported incidence from WHO shown as crosses.}
%	\label{f:ebolaSL}
%\end{figure}
\begin{figure}
	\centerline{\includegraphics[width=0.9\textwidth]{}}
	\caption{Analysis of the genealogy relating Ebola virus sequences collected in Sierra Leone. 
		Top top left plot depicts the Ebola genealogy. The top right plot shows the estimated $R_0(t)$, with the red dashed line showing the posterior median and the salmon shaded area showing the 95\% BCIs of  the LNA-based method. The posterior median based on the ODE-based method is plotted as the blue dotted line with blue shading corresponding to the 95\% BCIs. The medium left figure shows prior and posterior densities of the mean infection period $1/\gamma$.  The prior density is shown in green, while the posterior densities based on LNA and ODE are plotted in red and blue respectively. 
The medium right and the bottom left figures show the estimated trajectory of $S(t)$ and $I(t)$, using the same legend as in top right plot. The bottom right plot shows the predicted median and 95\% BCIs for weekly reported incidence together with the reported incidence from WHO shown as crosses.}
	\label{f:ebolaSL}
\end{figure}

Figures \ref{f:ebolaSL} and \ref{f:ebolaLib} show results for Sierra Leone and Liberia respectively, with intervention events mapped onto the calendar time on the x-axis. 
Our LNA-based method estimates the initial $R_0$ in Sierra Leone during 2014--2015 to be 1.68, with 95\% BCI of $(1.33,2.23)$. 
Similarly, $R_0$ in Liberia during 2014--2015 has a point estimate 1.67 and a 95\% BCI $(1.29,2.24)$. 
%Our estimates are consistent with the estimated $R_0$ in the range of 1.2 to 1.9 from \citet{towers2014temporal} using incidence data. 
Our estimate of initial $R_0$ in Sierra Leone is consistent with the estimates of \citet{stadler2014insights}, 
%(95\% BCI: (1.68, 2.18))
who fitted multiple birth-death models to 72 sequences at the early stages of the outbreak.  
\citet{volz2014phylodynamic} used a susceptible-exposed-infectious-recovered (SEIR) model with a constant $R_0$ and estimated it to be 2.40 (CI: $(1.54,3.87)$).
\citet{althaus2014estimating} assumed an exponentially decaying $R_0(t)$ with an estimated initial $R_0$ of 2.52 (CI: $(2.41, 2.67)$). 
The discrepancies between our and SEIR-based estimates are not unexpected, because SEIR models generally yield higher $R_0$ estimates than SIR models when applied to the same dataset \citep{wearing2005appropriate,keeling2011modeling}. 
Our estimated $R_0$ for Liberia is in agreement with results of \citet{althaus2014estimating}, who fitted a SEIR model to
%\mtcomment{I always confuse about the tense when describing other people's work in literature. Past tense or tense? } 
%VM: depends on a style. It is good to try to be consistent.
incidence data and arrived at an estimated $R_0$ of 1.59 (CI: $(1.57,1.60)$). 
%\vmcomment{Maybe say what model (SIR or SEIR?) was used for Liberia incidence data analysis.}
%\mtcomment{Tanja's paper also suggests that at least until mid June, relief efforts in Sierra Leone were ineffective at lowering the effective reproductive number of the virus. I think our results also do not suggest a decrease of $R_0$ at May and early June. Maybe we should mention this in the next paragraph discussing the trend of $R_0(t)$.}
\par
The $R_0(t)$ dynamics in the two countries share a similar pattern: with (1) a decreasing trend that starts in Spring/Summer of 2014, (2) a stable/constant period until the end of September 2014 and (3) a final decrease below 1.0 (epidemic is contained) around November 2014. 
Since the number of susceptible individuals did not change significantly over the course of the epidemic, relative to the total population size, the basic and effective reproduction numbers, $R_0(t) = \beta(t) N/\gamma$ and $R_{\text{eff}}(t) = \beta(t) S(t)/\gamma$, are approximately equal. This allows us to compare our $R_0(t)$ estimation results with previously estimated changes in $R_{\text{eff}}(t)$.
Our estimation of early $R_0(t)$ dynamics in Sierra Leone agrees with results of \citet{stadler2013birth}, who concluded that the effective reproduction number did not significantly decrease until mid June. 
Our estimated $R_0(t)$ trajectory suggests that later interventions, such as border closures and release of burial guides, may have been helpful in controlling the spread of the disease. 
The infectious period for Sierra Leone epidemic is estimated to be 11.2 days with a 95\% BCI (7.6,16). 
For Liberia, the infection period has a point estimate of 9.8, with a 95\% BCI $(6.87,14.05)$. 
The posterior median of the total number of infected individuals (final epidemic size) is 7,284 and its 95\% BCI is $(3397,14870)$ for Sierra Leone, which is close to 8,706 total confirmed number of cases reported by \citet{cdc2016eboladat} (CDC). 
Liberia had a smaller epidemic than Sierra Leone, with estimated total infected individuals being 2,842 and a 95\% BCI of $(1296,6173)$.  These results are also in agreement with 3,163 total confirmed cases from CDC. 
\par
We perform an out-of-sample validation by comparing our results with weekly reported confirmed incidence in Sierra Leone and Liberia from the \citet{who2016eboladat} (WHO). 
The posterior predictive weekly incidence at time $t$, denoted by $\hat{N}(t)$, is approximated by
\begin{eqnarray}
\hat{N}(t) = \hat{\beta}(t)\hat{S}(t)\hat{I}(t) \cdot \Delta t, 
\end{eqnarray}
where $\hat{\beta}(t),\hat{S}(t)$ and $\hat{I}(t)$ are the posterior estimates of the infection rate, number of susceptible and number of infected individuals at time $t$ respectively, and $\Delta t := 7/365$ corresponds the time interval of one week. 
We plot the posterior predictive estimates of weekly incidence together with the corresponding weekly reported confirmed incidence. 
For both countries, our model-based incidence 95\% BCIs cover the reported incidence counts from WHO, suggesting that our time varying SIR model can estimate incidence well from genetic data alone. 
We note that our estimated latent incidence should be greater than the reported incidence, because not all Ebola cases were reported and recorded. 
However, the discrepancy between latent and reported incidence should not be large, because Ebola reporting rate was high. 
For example, \citet{scarpino2014epidemiological} estimated that 83\% of Ebola cases were reported. 
\par
We also report results from the ODE-based method and superimpose these results over LNA-based results on Figures \ref{f:ebolaSL} and \ref{f:ebolaLib}. 
For the relatively small Liberia genealogy, the ODE-based and LNA-based methods yield similar parameter estimates. 
However,  the larger Sierra Leone genealogy produces substantial differences between ODE-based and LNA-based estimates of the $R_0(t)$. 
The ODE-based method captures the decreasing trend of $R_0(t)$ in Spring and Summer of 2014, but provides narrow BCIs with unrealistic short term fluctuations in the basic reproduction number trajectory. 

\begin{figure}
	\centerline{\includegraphics[width=0.9\textwidth]{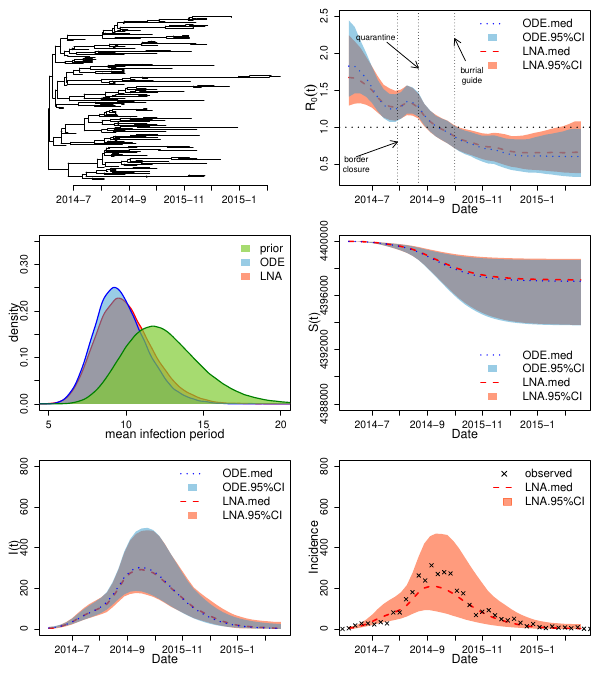}}
	\caption{Analysis of the genealogy relating Ebola virus sequences collected in Liberia. See caption in Figure~\ref{f:ebolaSL} for the explanation of the plots.}
	\label{f:ebolaLib}
\end{figure}
%\mtcomment{May need more comment from the scientific perspective}

\section{Discussion}
\label{s:discuss}
In this paper, we propose a Bayesian phylodynamic inference method that can fit a stochastic epidemic model to an observed genealogy estimated from infectious disease genetic sequences sampled during an outbreak. 
Our statistical model can be viewed as semi-parametric: with (1) a parametric SIR model describing the infectious disease dynamics and (2) a non-parametric GMRF-based estimation of the time varying basic reproduction number. 
To the best of our knowledge, this is the first method combining a Bayesian nonparametric approach with a deterministic or stochastic SIR model for phylodynamic inference (although see \citep{xu2016bayesian} for a similar approach applied to more traditional epidemiological data). 
%Methodologically, we apply the linear noise approximation method to approximate the stochastic epidemic dynamic process with a closed form Gaussian distribution. 
%Compared with traditional $\tau$-leaping based MJP inference and Euler-Maruyama based SDE inference, our approximation reduces the number of latent variables in the data augmentation step and allows us to leverage computation techniques developed for latent Gaussian process models. 
Our use of LNA allows us to devise an efficient MCMC algorithm to approximate high dimensional posterior distribution of model parameters and latent variables. 
Our LNA-based method produces posterior summaries with better frequentist properties than the state-of-the-art ODE-based method, underscoring the importance of working with stochastic models even in large populations.
We showcase our method by applying it to the Ebola genealogies estimated from viral sequences collected in Sierra Leone and Liberia during the 2014--2015 outbreak. Our nonparametric estimates of $R_0(t)$ in Sierra Lione and Liberia suggest that the basic reproduction number decreased in two-stages, where the second stage brought it below 1.0 --- a sign of epidemic containment. The second stage of $R_0(t)$ decrease closely follows in time implementation of interventions, pointing to their effectiveness.

Our method relies on the assumption that population dynamics follow a SIR model.
However, it may be desirable to be able to relax this assumption. 
For example, in Ebola spread modeling some authors used a SEIR model that assumes a latent period during which an infected individual is not infectious  \citep{althaus2014estimating,volz2018bayesian}. One future direction of this work is to generalize the LNA-based method to fit complicated compartmental epidemic models, 
including models with multi-stage infections like SEIR model and models with the population stratified by sex, age, geographic location, or other demographic variables. 
The structured coalescent likelihoods under these models may not have closed-form expressions. 
However, \citet{volz2012complex} and \citet{muller2017structured} propose several strategies to  approximate structured coalescent likelihoods. 
It would be interesting to combine these approximation strategies with LNA to estimate  parameters of complex stochastic epidemic models from genealogies.

The experiments in Section \ref{s:oneReal} and Appendix Section \ref{s:sensitivity} indicate that one has to pay close attention to parameter identifiability when fitting SIR models to genealogies or to sequence data directly. 
Identifiability may not be a problem under an assumption of a constant $R_0(t)$. 
However, the removal rate tends to be only weakly identifiable in the scenarios with a time-varying basic reproduction number, in which the estimation can be sensitive to the choice of priors. 
In Section \ref{s:sensitivity} of the Appendix, we demonstrate that putting a weakly informative prior on the removal rate can cause bias not only in the estimation for removal rate, but also can lead to a failure in recovering the reproduction number and latent population dynamics. 
Therefore,  successful inference of SIR model parameters using genealogical data should rely on a sound informative prior for the removal rate. 
This constraint is not a big shortcoming in practice, since prior information about the removal rate, or mean length of the infectious period, is usually readily available from patient hospitalization data \citep{who2014ebola}.

Since parameter identifiability is a recurring problem in infectious disease modeling, integration of multiple sources of information is of great interest. %One possible source of information is the incidence data. 
Using particle filter MCMC, \citet{rasmussen2011inference}  demonstrated that jointly analyzing genealogy and incidence case counts considerably reduces the uncertainty in both estimation of latent population trajectory and SIR model parameters, compared with estimation based on a single source of information.
We plan to use our LNA-based framework to perform similar integration of genealogical data and incidence time series. 
Another possible source of information is the distribution of genetic sequence sampling times. 
\citet{karcher2016quantifying} proposed a preferential sampling approach that explicitly models dependence of the sampling times distribution on the effective population size. 
The authors demonstrated that accounting for preferential sampling helps decrease bias and results in more precise effective population size estimation.
It would be interesting to incorporate preferential sampling into our LNA-based framework by assuming a probabilistic dependency between sampling times and latent prevalence $I(t)$. 
\par
Our method assumes a genealogy/phylogenetic tree is given to us. 
In reality, genealogies are not directly observed and need to be inferred from molecular sequences. Ideally, uncertainty in the genealogy should be handled by building a Bayesian hierarchical model and integrating over the space of genealogies using MCMC. 
In fact, implementations of such Bayesian hierarchical modeling already exist for nonparametric, birth-death, and ODE-based phylodynamic approaches \citep{drummond2005bayesian,minin2008smooth, gill2013, stadler2013birth,volz2018bayesian}.
Therefore, an important future direction will be to extend our LNA framework to fitting stochastic epidemic models to molecular sequences instead of genealogies.
Similarly to the structured coalescent model implementation of \citet{volz2018bayesian}, the easiest way to achieve this will be integration of our LNA MCMC algorithm into popular open source phylogenetic/phylodynamic software packages, such as \texttt{BEAST}, \texttt{BEAST2}, and \texttt{RevBayes}} \citep{suchard2018bayesian, Bouckaert2014, Hohna2016}.

%\mtadd{\citet{althaus2014estimating} estimate the initial $R_0$ to be 2.53 (CI:$[2.41, 2.67]$) for Sierra Leone and 1.59 (CI: $[1.57,1.60]$) for Liberia. \citet{towers2014temporal} estimate $(1.2,1.9)$ using SEIR model based from based on incidence data. \citet{volz2014phylodynamic} estimates the $R_0$ in Sierra Leone to be 2.40 (CI:$(1.54, 3.87)$) using genealogies from 78 sequence samples.}
%, which, we believe, will reduce the bias and improve the estimation precision. %By incorporting the idea of preferential sampling in our SIR framework, we believe the pricison of the esimate will improved. 
\section*{Acknowledgements}
We thank Jon Fintz for discussing the details of implementing Linear Noise Approximation. We are grateful to Michael Karcher and Julia Palacios for patiently answering questions about nonparametric phylodynamic inference and \texttt{phylodyn} package.  
M.T and V.N.M.\ were supported by the NIH grant R01 AI107034.
M.T., T.B., and V.N.M.\ were supported by the NIH grant U54 GM111274.
G.D. was supported by the Mahan postdoctoral fellowship from the Fred Hutchinson Cancer Research Center. 
This work was supported by NIH grant R35 GM119774-01 from the NIGMS to TB. TB is a Pew Biomedical Scholar.
%\backmatter

%%%%%% include this section if you wish to acknowledge people,
%%%%%% grant support, etc.

%\section*{Acknowledgements}
%\mtcomment{Funding source? }. Mingwei Tang is funded ... . Gytis, Trevor, VM ? The authors thank Jon Fintz for discussing the details of implementing Linear Noise Approximation. The authors also thank Michael Karcher and Julia Palacios for the help on nonparameteric phylodynamic inference method and the instructions on \texttt{phylodyn} package.  
%\vspace*{-8pt}

%%%%%% include this section only if your manuscript refers to supplementary
%%%%%% materials -- see Instructions for Authors at 
%%%%%% http://www.tibs.org/biometrics

%\section*{Supplementary Materials}

%Web Appendix 1 referenced in Section~\ref{ss:example} is available
%5with this paper at the Biometrics website on Wiley Online Library.
%\vspace*{-8pt}

\bibliography{biomsample_bib}

\begin{thebibliography}{62}
\providecommand{\natexlab}[1]{#1}
\providecommand{\url}[1]{\texttt{#1}}
\expandafter\ifx\csname urlstyle\endcsname\relax
  \providecommand{\doi}[1]{doi: #1}\else
  \providecommand{\doi}{doi: \begingroup \urlstyle{rm}\Url}\fi

\bibitem[Althaus(2014)]{althaus2014estimating}
CL~Althaus.
\newblock Estimating the reproduction number of {E}bola virus ({EBOV}) during
  the 2014 outbreak in {W}est {A}frica.
\newblock \emph{PLoS Currents}, 6, 2014.

\bibitem[Anderson and May(1992)]{anderson1992infectious}
RM~Anderson and RM~May.
\newblock \emph{Infectious {D}iseases of {H}umans: {D}ynamics and {C}ontrol},
  volume~28.
\newblock Wiley Online Library, 1992.

\bibitem[Andrieu et~al.(2010)Andrieu, Doucet, and
  Holenstein]{andrieu2010particle}
C~Andrieu, A~Doucet, and R~Holenstein.
\newblock Particle {M}arkov chain {M}onte {C}arlo methods.
\newblock \emph{Journal of the Royal Statistical Society: Series B (Statistical
  Methodology)}, 72\penalty0 (3):\penalty0 269--342, 2010.

\bibitem[Bailey(1975)]{bailey1975mathematical}
NTJ Bailey.
\newblock \emph{The {M}athematical {T}heory of {I}nfectious {D}iseases and
  {I}ts {A}pplications}.
\newblock Hafner Press/MacMillian Pub. Co, 1975.

\bibitem[Bernardo et~al.(2003)Bernardo, Bayarri, Berger, Dawid, Heckerman,
  Smith, and West]{bernardo2003non}
JM~Bernardo, MJ~Bayarri, JO~Berger, AP~Dawid, D~Heckerman, AFM Smith, and
  M~West.
\newblock Non-centered parameterisations for hierarchical models and data
  augmentation.
\newblock In \emph{{B}ayesian Statistics 7: Proceedings of the Seventh Valencia
  International Meeting}, page 307. Oxford University Press, USA, 2003.

\bibitem[Bouckaert et~al.(2014)Bouckaert, Heled, K{\"u}hnert, Vaughan, Wu, Xie,
  Suchard, Rambaut, and Drummond]{Bouckaert2014}
R~Bouckaert, J~Heled, D~K{\"u}hnert, T~Vaughan, CH~Wu, D~Xie, MA~Suchard,
  A~Rambaut, and AJ~Drummond.
\newblock {BEAST} 2: A software platform for {B}ayesian evolutionary analysis.
\newblock \emph{PLoS Computational Biology}, 10\penalty0 (4):\penalty0 1--6, 04
  2014.

\bibitem[Buckingham-Jeffery et~al.(2018)Buckingham-Jeffery, Isham, and
  House]{buckingham2018gaussian}
E~Buckingham-Jeffery, V~Isham, and T~House.
\newblock Gaussian process approximations for fast inference from infectious
  disease data.
\newblock \emph{Mathematical Biosciences}, 301, 2018.

\bibitem[Centers for Disease Control()]{cdc2016eboladat}
Centers for Disease Control.
\newblock Centers for {D}isease {C}ontrol and {P}revention. 2014--2016 {E}bola
  outbreak in {W}est {A}frica.
\newblock
  \url{https://www.cdc.gov/vhf/ebola/history/2014-2016-outbreak/index.html}.
\newblock Last accessed: Dec, 15, 2018.

\bibitem[Dearlove and Wilson(2013)]{dearlove2013coalescent}
B~Dearlove and DJ~Wilson.
\newblock Coalescent inference for infectious disease: meta-analysis of
  hepatitis {C}.
\newblock \emph{Philosophical Transactions of the Royal Society, Series B},
  368\penalty0 (1614):\penalty0 20120314, 2013.

\bibitem[Donnelly and Tavare(1995)]{donnelly1995coalescents}
P~Donnelly and S~Tavare.
\newblock Coalescents and genealogical structure under neutrality.
\newblock \emph{Annual Review of Genetics}, 29\penalty0 (1):\penalty0 401--421,
  1995.

\bibitem[Drummond et~al.(2002)Drummond, Nicholls, Rodrigo, and
  Solomon]{drummond2002estimating}
AJ~Drummond, GK~Nicholls, AlG Rodrigo, and W~Solomon.
\newblock Estimating mutation parameters, population history and genealogy
  simultaneously from temporally spaced sequence data.
\newblock \emph{Genetics}, 161\penalty0 (3):\penalty0 1307--1320, 2002.

\bibitem[Drummond et~al.(2005)Drummond, Rambaut, Shapiro, and
  Pybus]{drummond2005bayesian}
AJ~Drummond, A~Rambaut, B~Shapiro, and OG~Pybus.
\newblock {B}ayesian coalescent inference of past population dynamics from
  molecular sequences.
\newblock \emph{Molecular Biology and Evolution}, 22\penalty0 (5):\penalty0
  1185--1192, 2005.

\bibitem[Dudas et~al.(2017)Dudas, Carvalho, Bedford, Tatem, Baele, Faria, Park,
  Ladner, Arias, Asogun, et~al.]{dudas2017virus}
G~Dudas, LM~Carvalho, T~Bedford, AJ~Tatem, G~Baele, NR~Faria, DJ~Park,
  JT~Ladner, A~Arias, D~Asogun, et~al.
\newblock Virus genomes reveal factors that spread and sustained the {E}bola
  epidemic.
\newblock \emph{Nature}, 544\penalty0 (7650):\penalty0 309--315, 2017.

\bibitem[Fearnhead et~al.(2014)Fearnhead, Giagos, and
  Sherlock]{fearnhead2014inference}
P~Fearnhead, V~Giagos, and C~Sherlock.
\newblock Inference for reaction networks using the linear noise approximation.
\newblock \emph{Biometrics}, 70\penalty0 (2):\penalty0 457--466, 2014.

\bibitem[Frost and Volz(2010)]{frost2010viral}
SDW Frost and EM~Volz.
\newblock Viral phylodynamics and the search for an €˜``effective number of
  infections''€™.
\newblock \emph{Philosophical Transactions of the Royal Society B: Biological
  Sciences}, 365\penalty0 (1548):\penalty0 1879--1890, 2010.

\bibitem[Giagos(2010)]{giagos2010inference}
V~Giagos.
\newblock \emph{Inference for {A}uto-{R}egulatory {G}enetic {N}etworks {U}sing
  {D}iffusion {P}rocess {A}pproximations}.
\newblock PhD thesis, Lancaster University, 2010.

\bibitem[Gill et~al.(2013)Gill, Lemey, Faria, Rambaut, Shapiro, and
  Suchard]{gill2013}
MS~Gill, P~Lemey, NR~Faria, A~Rambaut, B~Shapiro, and MA~Suchard.
\newblock Improving {B}ayesian population dynamics inference: a
  coalescent-based model for multiple loci.
\newblock \emph{Molecular Biology and Evolution}, 30\penalty0 (3):\penalty0
  713--724, 2013.

\bibitem[Gillespie(1977)]{gillespie1977exact}
DT~Gillespie.
\newblock Exact stochastic simulation of coupled chemical reactions.
\newblock \emph{The Journal of Physical Chemistry}, 81\penalty0 (25):\penalty0
  2340--2361, 1977.

\bibitem[Gillespie(2000)]{gillespie2000chemical}
DT~Gillespie.
\newblock The chemical {L}angevin equation.
\newblock \emph{The Journal of Chemical Physics}, 113\penalty0 (1):\penalty0
  297--306, 2000.

\bibitem[Grenfell et~al.(2004)Grenfell, Pybus, Gog, Wood, Daly, Mumford, and
  Holmes]{grenfell2004unifying}
BT~Grenfell, OG~Pybus, JR~Gog, JLN Wood, JM~Daly, JA~Mumford, and EC~Holmes.
\newblock Unifying the epidemiological and evolutionary dynamics of pathogens.
\newblock \emph{Science}, 303\penalty0 (5656):\penalty0 327--332, 2004.

\bibitem[Griffiths and Tavar\'e(1994)]{griffiths1994}
RC~Griffiths and S~Tavar\'e.
\newblock Sampling theory for neutral alleles in a varying environment.
\newblock \emph{Philosophical Transactions of the Royal Society of London B:
  Biological Sciences}, 344\penalty0 (1310):\penalty0 403--410, 1994.

\bibitem[Griffiths and Tavare(1994)]{griffiths1994ancestral}
RC~Griffiths and S~Tavare.
\newblock Ancestral inference in population genetics.
\newblock \emph{Statistical Science}, pages 307--319, 1994.

\bibitem[H{\"o}hna et~al.(2016)H{\"o}hna, Landis, Heath, Boussau, Lartillot,
  Moore, Huelsenbeck, and Ronquist]{Hohna2016}
S~H{\"o}hna, MJ~Landis, TA~Heath, B~Boussau, N~Lartillot, BR~Moore,
  JP~Huelsenbeck, and F~Ronquist.
\newblock {RevBayes}: {B}ayesian phylogenetic inference using graphical models
  and an interactive model-specification language.
\newblock \emph{Systematic Biology}, 65\penalty0 (4):\penalty0 726--736, 2016.

\bibitem[Karcher et~al.(2016)Karcher, Palacios, Bedford, Suchard, and
  Minin]{karcher2016quantifying}
MD~Karcher, JA~Palacios, T~Bedford, MA~Suchard, and VN~Minin.
\newblock Quantifying and mitigating the effect of preferential sampling on
  phylodynamic inference.
\newblock \emph{PLoS Computational Biology}, 12\penalty0 (3):\penalty0
  e1004789, 2016.

\bibitem[Keeling and Rohani(2011)]{keeling2011modeling}
MJ~Keeling and P~Rohani.
\newblock \emph{Modeling {I}nfectious {D}iseases in {H}umans and {A}nimals}.
\newblock Princeton University Press, 2011.

\bibitem[Kingman(1982)]{kingman1982coalescent}
JFC Kingman.
\newblock The coalescent.
\newblock \emph{Stochastic Processes and their Applications}, 13\penalty0
  (3):\penalty0 235--248, 1982.

\bibitem[Komorowski et~al.(2009)Komorowski, Finkenst{\"a}dt, Harper, and
  Rand]{komorowski2009bayesian}
M~Komorowski, B~Finkenst{\"a}dt, CV~Harper, and DA~Rand.
\newblock {B}ayesian inference of biochemical kinetic parameters using the
  linear noise approximation.
\newblock \emph{BMC Bioinformatics}, 10\penalty0 (1):\penalty0 343, 2009.

\bibitem[Kuhner et~al.(1998)Kuhner, Yamato, and Felsenstein]{kuhner1998maximum}
MK~Kuhner, J~Yamato, and J~Felsenstein.
\newblock Maximum likelihood estimation of population growth rates based on the
  coalescent.
\newblock \emph{Genetics}, 149\penalty0 (1):\penalty0 429--434, 1998.

\bibitem[K{\"u}hnert et~al.(2014)K{\"u}hnert, Stadler, Vaughan, and
  Drummond]{kuhnert2014simultaneous}
D~K{\"u}hnert, T~Stadler, TG~Vaughan, and AJ~Drummond.
\newblock Simultaneous reconstruction of evolutionary history and
  epidemiological dynamics from viral sequences with the birth--death {SIR}
  model.
\newblock \emph{Journal of the Royal Society Interface}, 11\penalty0
  (94):\penalty0 20131106, 2014.

\bibitem[Kurtz(1970)]{kurtz1970solutions}
TG~Kurtz.
\newblock Solutions of ordinary differential equations as limits of pure jump
  {M}arkov processes.
\newblock \emph{Journal of Applied Probability}, 7\penalty0 (1):\penalty0
  49--58, 1970.

\bibitem[Kurtz(1971)]{kurtz1971limit}
TG~Kurtz.
\newblock Limit theorems for sequences of jump {M}arkov processes.
\newblock \emph{Journal of Applied Probability}, 8\penalty0 (2):\penalty0
  344--356, 1971.

\bibitem[Leventhal et~al.(2013)Leventhal, G{\"u}nthard, Bonhoeffer, and
  Stadler]{leventhal2013using}
GE~Leventhal, HF~G{\"u}nthard, S~Bonhoeffer, and T~Stadler.
\newblock Using an epidemiological model for phylogenetic inference reveals
  density dependence in {HIV} transmission.
\newblock \emph{Molecular Biology and Evolution}, 31\penalty0 (1):\penalty0
  6--17, 2013.

\bibitem[Minin et~al.(2008)Minin, Bloomquist, and Suchard]{minin2008smooth}
VN~Minin, EW~Bloomquist, and MA~Suchard.
\newblock Smooth skyride through a rough skyline: {B}ayesian coalescent-based
  inference of population dynamics.
\newblock \emph{Molecular Biology and Evolution}, 25\penalty0 (7):\penalty0
  1459--1471, 2008.

\bibitem[M{\"u}ller et~al.(2017)M{\"u}ller, Rasmussen, and
  Stadler]{muller2017structured}
NF~M{\"u}ller, DA~Rasmussen, and T~Stadler.
\newblock The structured coalescent and its approximations.
\newblock \emph{Molecular Biology and Evolution}, 34:\penalty0 2970--2981,
  2017.

\bibitem[Murray et~al.(2010)Murray, Adams, and MacKay]{murray2010elliptical}
I~Murray, RP~Adams, and DJC MacKay.
\newblock Elliptical slice sampling.
\newblock In \emph{AISTATS}, volume~13, pages 541--548, 2010.

\bibitem[O'Neill and Roberts(1999)]{ONeill1999}
PD~O'Neill and GO~Roberts.
\newblock Bayesian inference for partially observed stochastic epidemics.
\newblock \emph{Journal of the Royal Statistical Society: Series A (Statistics
  in Society)}, 162\penalty0 (1):\penalty0 121--129, 1999.

\bibitem[Palacios and Minin(2013)]{palacios2013gaussian}
JA~Palacios and VN~Minin.
\newblock Gaussian process-based {B}ayesian nonparametric inference of
  population size trajectories from gene genealogies.
\newblock \emph{Biometrics}, 69\penalty0 (1):\penalty0 8--18, 2013.

\bibitem[Papaspiliopoulos et~al.(2007)Papaspiliopoulos, Roberts, and
  Sk{\"o}ld]{papaspiliopoulos2007general}
O~Papaspiliopoulos, GO~Roberts, and M~Sk{\"o}ld.
\newblock A general framework for the parametrization of hierarchical models.
\newblock \emph{Statistical Science}, pages 59--73, 2007.

\bibitem[Pybus et~al.(2001)Pybus, Charleston, Gupta, Rambaut, Holmes, and
  Harvey]{pybus2001epidemic}
OG~Pybus, MA~Charleston, S~Gupta, A~Rambaut, EC~Holmes, and PH~Harvey.
\newblock The epidemic behavior of the hepatitis {C} virus.
\newblock \emph{Science}, 292\penalty0 (5525):\penalty0 2323--2325, 2001.

\bibitem[Rasmussen et~al.(2011)Rasmussen, Ratmann, and
  Koelle]{rasmussen2011inference}
DA~Rasmussen, O~Ratmann, and K~Koelle.
\newblock Inference for nonlinear epidemiological models using genealogies and
  time series.
\newblock \emph{PLoS Computational Biology}, 7\penalty0 (8):\penalty0 e1002136,
  2011.

\bibitem[Rasmussen et~al.(2014)Rasmussen, Volz, and
  Koelle]{rasmussen2014phylodynamic}
DA~Rasmussen, EM~Volz, and K~Koelle.
\newblock Phylodynamic inference for structured epidemiological models.
\newblock \emph{PLoS Computational Biology}, 10\penalty0 (4):\penalty0
  e1003570, 2014.

\bibitem[Rue(2001)]{rue2001fast}
H~Rue.
\newblock Fast sampling of {G}aussian {M}arkov random fields.
\newblock \emph{Journal of the Royal Statistical Society: Series B (Statistical
  Methodology)}, 63\penalty0 (2):\penalty0 325--338, 2001.

\bibitem[Rue and Held(2005)]{rue2005gaussian}
H~Rue and L~Held.
\newblock \emph{Gaussian {M}arkov {R}andom {F}ields: {T}heory and
  {A}pplications}.
\newblock CRC press, 2005.

\bibitem[Scarpino et~al.(2014)Scarpino, Iamarino, Wells, Yamin, Ndeffo-Mbah,
  Wenzel, Fox, Nyenswah, Altice, Galvani, et~al.]{scarpino2014epidemiological}
SV~Scarpino, A~Iamarino, C~Wells, D~Yamin, M~Ndeffo-Mbah, NS~Wenzel, SJ~Fox,
  T~Nyenswah, FL~Altice, AP~Galvani, et~al.
\newblock Epidemiological and viral genomic sequence analysis of the 2014
  {E}bola outbreak reveals clustered transmission.
\newblock \emph{Clinical Infectious Diseases}, 60\penalty0 (7):\penalty0
  1079--1082, 2014.

\bibitem[Stadler et~al.(2013)Stadler, K{\"u}hnert, Bonhoeffer, and
  Drummond]{stadler2013birth}
T~Stadler, D~K{\"u}hnert, S~Bonhoeffer, and AJ~Drummond.
\newblock Birth--death skyline plot reveals temporal changes of epidemic spread
  in hiv and hepatitis {C} virus ({HCV}).
\newblock \emph{Proceedings of the National Academy of Sciences}, 110\penalty0
  (1):\penalty0 228--233, 2013.

\bibitem[Stadler et~al.(2014)Stadler, K{\"u}hnert, Rasmussen, and
  du~Plessis]{stadler2014insights}
T~Stadler, D~K{\"u}hnert, DA~Rasmussen, and L~du~Plessis.
\newblock Insights into the early epidemic spread of {E}bola in {S}ierra
  {L}eone provided by viral sequence data.
\newblock \emph{PLoS Currents}, 6, 2014.

\bibitem[Suchard et~al.(2018)Suchard, Lemey, Baele, Ayres, Drummond, and
  Rambaut]{suchard2018bayesian}
MA~Suchard, P~Lemey, G~Baele, DL~Ayres, AJ~Drummond, and A~Rambaut.
\newblock {B}ayesian phylogenetic and phylodynamic data integration using
  {BEAST} 1.10.
\newblock \emph{Virus Evolution}, 4\penalty0 (1):\penalty0 vey016, 2018.

\bibitem[Towers et~al.(2014)Towers, Patterson-Lomba, and
  Castillo-Chavez]{towers2014temporal}
S~Towers, O~Patterson-Lomba, and C~Castillo-Chavez.
\newblock Temporal variations in the effective reproduction number of the 2014
  west africa {E}bola outbreak.
\newblock \emph{PLoS Currents}, 6, 2014.

\bibitem[Van~Kampen and Reinhardt(1983)]{van1983stochastic}
NG~Van~Kampen and WP~Reinhardt.
\newblock Stochastic processes in physics and chemistry, 1983.

\bibitem[Vaughan et~al.(2018)Vaughan, Leventhal, Rasmussen, Drummond, Welch,
  and Stadler]{vaughan2018directly}
TG~Vaughan, GE~Leventhal, DA~Rasmussen, AJ~Drummond, D~Welch, and T~Stadler.
\newblock Estimating epidemic incidence and prevalence from genomic data.
\newblock \emph{BioRxiv}, page 142570, 2018.
\newblock \doi{https://doi.org/10.1101/142570}.

\bibitem[Volz and Pond(2014)]{volz2014phylodynamic}
E~Volz and S~Pond.
\newblock Phylodynamic analysis of {E}bola virus in the 2014 {S}ierra {L}eone
  epidemic.
\newblock \emph{PLoS Currents}, 6, 2014.

\bibitem[Volz and Siveroni(2018)]{volz2018bayesian}
E~Volz and I~Siveroni.
\newblock {B}ayesian phylodynamic inference with complex models.
\newblock \emph{BioRxiv}, 2018.
\newblock \doi{10.1101/268052}.

\bibitem[Volz(2012)]{volz2012complex}
EM~Volz.
\newblock Complex population dynamics and the coalescent under neutrality.
\newblock \emph{Genetics}, 190\penalty0 (1):\penalty0 187--201, 2012.

\bibitem[Volz et~al.(2009)Volz, Pond, Ward, Brown, and
  Frost]{volz2009phylodynamics}
EM~Volz, SLK Pond, MJ~Ward, AJL Brown, and SDW Frost.
\newblock Phylodynamics of infectious disease epidemics.
\newblock \emph{Genetics}, 183\penalty0 (4):\penalty0 1421--1430, 2009.

\bibitem[Volz et~al.(2013)Volz, Koelle, and Bedford]{volz2013viral}
EM~Volz, K~Koelle, and T~Bedford.
\newblock Viral phylodynamics.
\newblock \emph{PLoS Computational Biology}, 9\penalty0 (3):\penalty0 e1002947,
  2013.

\bibitem[Wallace(2010)]{wallace2010simplified}
EWJ Wallace.
\newblock A simplified derivation of the linear noise approximation.
\newblock \emph{Arxiv preprint arXiv:1004.4280}, 2010.

\bibitem[Wearing et~al.(2005)Wearing, Rohani, and
  Keeling]{wearing2005appropriate}
HJ~Wearing, P~Rohani, and MJ~Keeling.
\newblock Appropriate models for the management of infectious diseases.
\newblock \emph{PLoS Medicine}, 2\penalty0 (7):\penalty0 e174, 2005.

\bibitem[{WHO Ebola Response Team}(2014)]{who2014ebola}
{WHO Ebola Response Team}.
\newblock Ebola virus disease in {W}est {A}frica ---- the first 9 months of the
  epidemic and forward projections.
\newblock \emph{New England Journal of Medicine}, 371\penalty0 (16):\penalty0
  1481--1495, 2014.

\bibitem[Wilkinson(2011)]{wilkinson2011stochastic}
DJ~Wilkinson.
\newblock \emph{Stochastic {M}odelling for {S}ystems {B}iology}.
\newblock CRC press, 2011.

\bibitem[World Health Organization()]{who2016eboladat}
World Health Organization.
\newblock World {H}ealth {Organization}. {E}bola data and statistics.
\newblock
  \url{http://apps.who.int/gho/data/node.ebola-sitrep.quick-downloads?lang=en},
  May 11, 2016.
\newblock Last accessed: February 28, 2018.

\bibitem[Wright(1931)]{wright1931evolution}
S~Wright.
\newblock Evolution in {M}endelian populations.
\newblock \emph{Genetics}, 16\penalty0 (2):\penalty0 97--159, 1931.

\bibitem[Xu et~al.(2016)Xu, Kypraios, and O'Neill]{xu2016bayesian}
X.~Xu, T.~Kypraios, and P.D. O'Neill.
\newblock Bayesian non-parametric inference for stochastic epidemic models
  using {G}aussian processes.
\newblock \emph{Biostatistics}, 17\penalty0 (4):\penalty0 619--633, 2016.

\end{thebibliography}
\newpage
\setcounter{table}{0}
\renewcommand{\thetable}{A-\arabic{table}}
\renewcommand{\thefigure}{A-\arabic{figure}}
\renewcommand{\thesection}{A-\arabic{section}}

\renewcommand{\theequation}{A-\arabic{equation}}
\setcounter{equation}{0}
\setcounter{section}{0}
\setcounter{figure}{0}
\begin{center}
	\LARGE{Appendix} \end{center}

\section{A general framework for stochastic kenetic models}
\label{s:generalization}
\subsection{Stochastic model generalization}
In Section \ref{s:method}, we provide an example of the linear noise approximation (LNA) for the SIR model. The LNA framework can be also generalized to other types of the stochastic kinetic models in Infectious Disease Epidemiology and in Systems Biology. 
Here, we give a general representation of the stochastic kinetic model by viewing it as a reaction network system. The notation is based on the work of \citet{fearnhead2014inference}. %For example, in Susceptible-Infected-Recovered-Susceptible (SIRS) model, there would be one more reaction in which the recovered individual will lose immunity and become susceptible again. 

Let's start with a reaction system with $d$ reactants $\mathcal{X}_1,\ldots,\mathcal{X}_d$ and $q$ reactions. Without loss of generality, each reaction is assumed to have a constant rate parameter $\theta_i$ for $i = 1,\ldots,q$ and $\btheta = (\theta_1,\dots,\theta_q)$ denotes the rate vector of the system (this framework can be extended to handle stochastic kinetic models with time-varying rates as in Section \ref{s:method} of the main text). 
The transition event in the $i$th reaction ($i=1,\ldots,q$) has the following form:
\begin{eqnarray}
\label{eq:reaction}
\tilde{a}_{i1}\mathcal{X}_1 + \cdots + \tilde{a}_{id}\mathcal{X}_d\xrightarrow{\theta_i}  \bar{a}_{i1} \mathcal{X}_1 + \ldots +\bar{a}_{i1} \mathcal{X}_d,
\end{eqnarray}
where $\tilde{a}_{ij}$ and $\bar{a}_{ij}$ are non-negative integers representing the number of $\mathcal{X}_j$ in the $i$th reaction equation. 
%\vmcomment{It is already clear that $\tilde{a}_{ij}$ and $\bar{a}_{ij}$ are coefficients. Maybe it is better to say that they represent the number of reactants?}
In a compartmental stochastic epidemic model, the coefficient $\tilde{a}_{ij}$ will be either 0 or 1. The transitions in the reaction system can be encoded in an effect matrix,
\begin{eqnarray}
\bA := \{\tilde{a}_{ij} - \bar{a}_{ij}\} \in \mathbb{Z}^{q\times d}, 
\end{eqnarray}
with each row corresponding to a certain type of reaction event and each column representing the change in the counts of reactants. 
Let $X_j(t)$ denote counts/population of the $\mathcal{X}_j$ at $t$, and the population state at time t can be tracked by vector $\bX(t):= (X_1(t),\ldots,X_d(t))$. Let $h_i$ denote the reaction rate of the $i$th reaction, where $h_i$ can be written as 
\begin{eqnarray}
h_i =  \theta_i\prod_{j=1}^{d}\binom{X_j}{\tilde{a}_{ij}}.
\end{eqnarray}
Hence, following the same notation as in Section~\ref{s:SIR} of the main text, the rate vector $\bh$ and the rate matrix $\bH$ can be defined as 
\begin{eqnarray}
\bh(\bX,\btheta) = \left(h_1,\ldots, h_q\right)^T, \qquad \bH(\bX,\btheta)= \mbox{diag}\left(\bh(\bX,\btheta)\right). 
\end{eqnarray}

Given the above notation, the deterministic ordinary differential equation model of the reaction system can be written as 
\begin{eqnarray}
\df \bX = \bA^T \bh\left(\bX,\btheta\right)\df t, \qquad \bX(0) = \bx_0,
\end{eqnarray}
where $\bx_0$ is a vector of initial counts of reactants $\mathcal{X}_1,\ldots, \mathcal{X}_d$. 

\subsubsection{Example: SEIR model}
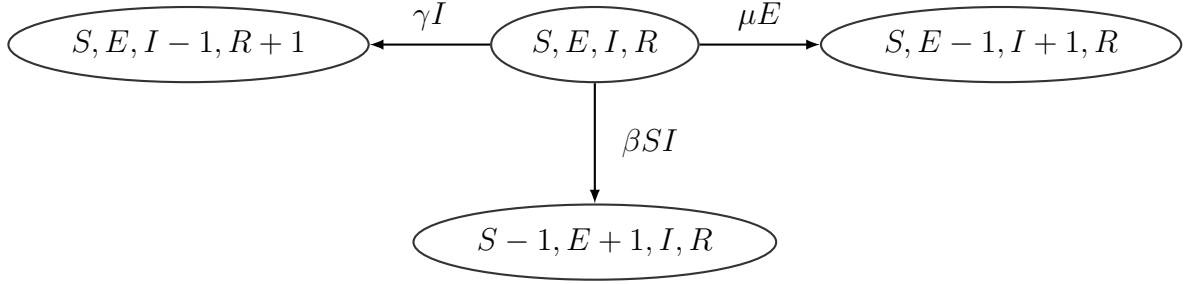
\begin{figure}
	\centering
	\begin{tikzpicture}
	\usetikzlibrary{shapes}
	\tikzstyle{main}=[ellipse, minimum size = 10mm, thick, draw =black!80, node distance = 16mm]
	\tikzstyle{connect}=[-latex, thick]
	\tikzstyle{box}=[rectangle, draw=black!100]
	\node[main] (theta)  {$S,E,I-1,R+1$};
	\node[main] (z) [right=of theta] {$S,E,I,R$};
	\node[main] (w) [right=of z] { $S,E -1,I+1, R$};
	\node[main] (y) [below=of z] { $S-1, E+1, I, R$};
	\path (z) edge [ connect] node [midway,above=0.1em]  {$\gamma I$}(theta)
	(z) edge [connect] node [midway,above=0.1em]  {$\mu E$}(w)
	(z) edge [connect] node [midway,right=0.5em]  {$\beta SI$}(y);
	\end{tikzpicture}
	\caption{SEIR Markov jump process. From the current state with the counts $S,E,I,R$, the population can transition to (1) state $S-1,E+1, I, R$ (an infection event) with rate $\beta SI$ or  to (2) state $S, E - 1, I + 1, R$ (an event where infected individual becomes infectious) with rate $\mu E$ or to (3) state $S,E,I-1,R+1$ (a removal event) with rate $\gamma I$. No other instantaneous transitions are allowed.}
	\label{f:seir}
\end{figure}

The above general representation of stochastic kinetic models can be directly applied to stochastic epidemic models. 
Here, we illustrate this on a Susceptible-Exposed-Infected-Recovery (SEIR) model. 
SEIR model is an extension of the SIR model that assumes a latent period called ``Exposed'', in which an infected individual does not have the ability to infect others. 
The exposed individual will eventually become infectious with rate $\mu$. As in the SIR model, an infectious individual has removal/recovery rate $\gamma$. 
The transition events between different states of the SEIR model are depicted in Figure~\ref{f:seir}. 

Following the stochastic kinetic model representation, the SEIR model can be viewed as a reaction system of four reactants --- susceptible, exposed, infected, and recovered individuals --- and the following three ``reactions": 
\begin{eqnarray}
&\mbox{Susceptible} + \mbox{Infected}\xrightarrow{\beta} \mbox{Exposed} + \mbox{Infected} \label{eq:reaction3.1},\\
& \mbox{Exposed} \xrightarrow{\mu} \mbox{Infected} \label{eq:reaction3.2},\\
& \mbox{Infected} \xrightarrow{\gamma} \mbox{Recovered}\label{eq:reaction3.3}.
\end{eqnarray}
Since the recovered population never interacts with individuals in other compartments, we will only keep track of the counts of susceptible, exposed, and infectious individuals at time $t$, denoted by $S(t)$, $E(t)$, $I(t)$ respectively. 
The effect matrix $\bA$ for the SEIR model can be written as:
\begin{equation}
\label{eq:effectmx3}
\bA = \begin{blockarray}{cccc}
\mbox{Susceptible} & \mbox{Exposed} &\mbox{Infected}\\
\begin{block}{(ccc)c}
-1 & 1 & 0 & \mbox{reaction }\eqref{eq:reaction3.1}\\
0 & -1& 1 & \mbox{reaction }\eqref{eq:reaction3.2}\\
0 & 0& -1 &\mbox{reaction }\eqref{eq:reaction3.3}\\
\end{block}
\end{blockarray},
\end{equation}
with columns representing compartments and rows representing reactant changes during reaction events. 

If we let $\bX(t) = (S(t),E(t),I(t))$ denote the state vector at time $t$, %Like SIR model, we ignore the recovered individuals since they will not interact with other subgroups. 
then the rate vector $\bh$ for the SEIR model is 
\begin{eqnarray}
\bh(\bX(t),\btheta) = (\beta S(t)I(t), \mu E(t), \gamma I(t))^T. 
\end{eqnarray}

%\begin{figure}
%	\centering
%	\begin{tikzpicture}
%	\tikzstyle{main}=[circle, minimum size = 10mm, thick, draw =black!80, node distance = 11mm]
%	\tikzstyle{connect}=[-latex, thick]
%	\tikzstyle{box}=[rectangle, draw=black!100]
%	\node[main] (theta)  { $S$};
%	\node[main] (y) [right=of theta] {$E$};
%	\node[main] (z) [right=of y] {$I$};
%	\node[main] (w) [right=of z] { $R$};
%	\path (theta) edge [connect] node [ midway,above=0.2em] {$\beta$}(y)
%	(y) edge [connect] node [ midway,above=0.2em] {$\mu$}(z)
%	(z) edge [connect] node[ midway,above=0.2em] {$\gamma$}(w);
%	\end{tikzpicture}
%	\caption{An illustration of SEIR model: The susceptible becomes exposed with rate $\beta$, the exposed becomes infected with rate $\mu$, the infected recovers with rate $\gamma$.} 
%\end{figure}
%\begin{figure}[H]
%	\centering
%	\begin{tikzpicture}
%	\tikzstyle{main}=[circle, minimum size = 10mm, thick, draw =black!80, node distance = 11mm]
%	\tikzstyle{connect}=[-latex, thick]
%	\tikzstyle{box}=[rectangle, draw=black!100]
%	\node[main] (theta)  { $S$};
%	\node[main] (z) [right=of theta] {$I$};
%	\node[main] (w) [below=of z] { $R$};
%	\path (theta) edge [ connect] node[ midway,above=0.2em] {$\beta$}(z)
%	(z) edge [connect] node[midway,right=0.15em] {$\gamma$}(w)
%	(w) edge [connect] node [midway,above right =0.00em] {$\mu$}(theta);
%	\end{tikzpicture}
%	\caption{An illustration of SIRS model: The susceptible becomes infected with rate $\beta$, the infected can become recovered with rate $\gamma$, the recovered becomes susceptible again with rate $\mu$.} 
%\end{figure}
\section{Derivation of the linear noise approximation}
\label{s:LNAderive}
\subsection{SDE approximation for MJP}
A stochastic way to approximate the MJP model is to use the Stochastic Differential Equation (SDE) approximation, also known as the chemical Langevin equation (CLE) \citep{gillespie2000chemical}. 
The SDE method can be viewed an approximation of the MJP at time $t$, obtained by applying a normal approximation to the Poisson distributed number of state transitions in a small interval of time $(t,t+\Delta t)$ \citep{gillespie2000chemical,wallace2010simplified}. 
The deterministic part in SDE corresponds to the right hand side of ODE \eqref{eq:ODEeta} and stochastic part is related to the variance of the system.
The SDE for general stochastic kinetic models can be written as 
\begin{eqnarray}
\label{eq:SDE}
\df \bX(t) = \bA^T \bh(\bX(t),\btheta(t)) \df t + \sqrt{\bA^T \bH(\bX(t),\btheta(t))\bA } \cdot \df \bW_t,
\end{eqnarray}
where $\bW_t$ denote a $d$ dimensional Wiener process and the square root $\sqrt{\cdot}$ means the Cholesky triangle of the $d\times d$ covariance matrix. %The SDE method is widely used in modeling infectious population dynamics given observed incidence or prevalence \citep{golightly2005bayesian, golightly2008bayesian,golightly2011bayesian,king2010pomp} since the simulation complexity does not grow with population size $N$.  
\subsection{LNA approximation of the SDE}
Since in the main text we assume the rate $\btheta(t)$ varies in a piecewise constant way,  without loss of generality, we use the notation $\btheta$ for the rate in a given time interval where the LNA is applied. 
\begin{theorem}[Linear Noise Approximation for SDE]
	\label{thm:LNA}
	Let $\etab(t)$ be the solution of ordinary differential equation \eqref{eq:ODEeta} with initial value $\etab_0$. Let $N$ be the system size, which is the total number of individuals in the system (In SIR model, $N$ will be the total population, i.e $N = S + I + R$), $\btheta = (\theta_1,\ldots,\theta_q)$ denote the vector of rate parameters in $q$ reactions. Then the solution $\bX(t)$ of the SDE \eqref{eq:SDE} satisfies the following equation
	\begin{equation}
	\begin{split}
	\dfrac{1}{\sqrt{N}}\df (\bX(t)-\etab(t)) =\dfrac{1}{\sqrt{N}}\left(\bF(\etab(t),\btheta)(\bX(t)-\etab(t)) + o(1) \right)\df t + \\
	\left( \dfrac{1}{\sqrt{N}} \sqrt{\bA^T \bH(\etab(t),\btheta)\bA} +o(1)\right) \df \bW_t,
	\end{split}
	\label{eq:Taylor}
	\end{equation}
	as $N \rightarrow +\infty$.
\end{theorem}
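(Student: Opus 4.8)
The plan is to obtain \eqref{eq:Taylor} by subtracting the macroscopic ODE from the chemical Langevin equation and performing a van Kampen--type system-size expansion of the resulting deviation dynamics; in keeping with the surrounding text I would carry this out at the heuristic level of \citet{wallace2010simplified}. First I would form the SDE satisfied by the deviation $\bX(t)-\etab(t)$: subtracting \eqref{eq:ODEeta} from \eqref{eq:SDE} gives
\[
\df\bigl(\bX(t)-\etab(t)\bigr)=\bigl[\bA^T\bh(\bX(t),\btheta)-\bA^T\bh(\etab(t),\btheta)\bigr]\df t+\sqrt{\bA^T\bH(\bX(t),\btheta)\bA}\;\df\bW_t .
\]
The next step is a Taylor expansion of the two $\bX$-dependent coefficients about the deterministic path $\etab(t)$. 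For the drift, $\bA^T\bh(\bX,\btheta)=\bA^T\bh(\etab,\btheta)+\bF(\etab,\btheta)(\bX-\etab)+\tfrac12\,\partial_\bX^2\bigl[\bA^T\bh\bigr](\etab)(\bX-\etab)^{\otimes 2}+\cdots$, where $\bF$ is exactly the Jacobian in the statement; the zeroth-order term cancels against the ODE drift, leaving the linear term $\bF(\etab(t),\btheta)(\bX(t)-\etab(t))$ plus a remainder. For the diffusion coefficient I would expand $\bA^T\bH(\bX,\btheta)\bA=\bA^T\bH(\etab,\btheta)\bA+\bA^T\bigl(\partial_\bX\bH(\etab)(\bX-\etab)+\cdots\bigr)\bA$ and factor the matrix square root as $\sqrt{\bA^T\bH(\etab,\btheta)\bA}\,\bigl(\bI+o(1)\bigr)^{1/2}$.

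The substantive content is the accounting of orders in the system size $N$, which uses the density-dependent structure of the model: since $\bX\sim N$, the propensities satisfy $\bh(\bX,\btheta)=N\,\bbf(\bX/N,\btheta)+o(N)$ for a macroscopic rate function $\bbf$, whence $\bF=\partial_\bX\bigl(\bA^T\bh\bigr)=O(1)$, $\partial_\bX^2\bigl(\bA^T\bh\bigr)=O(1/N)$, $\bH(\etab,\btheta)=O(N)$ so that $\sqrt{\bA^T\bH(\etab,\btheta)\bA}=O(\sqrt N)$, and $\partial_\bX\bH=O(1)$. Combining these bounds with the central-limit scaling of the fluctuations, $\bX(t)-\etab(t)=O_p(\sqrt N)$ --- which is precisely Kurtz's diffusion approximation and is what legitimizes the whole expansion --- the quadratic drift remainder is $O(1/N)\cdot O_p(N)=O_p(1)$ and the cubic and higher terms are $o_p(1)$, so after dividing the deviation SDE by $\sqrt N$ the entire drift remainder is $o_p(1)$ relative to the leading term; likewise $\bA^T\bigl(\bH(\bX)-\bH(\etab)\bigr)\bA=O_p(\sqrt N)$ while $[\bA^T\bH(\etab,\btheta)\bA]^{-1}=O(1/N)$, so the square-root correction factor is $\bI+O_p(N^{-1/2})$ and hence $\sqrt{\bA^T\bH(\bX,\btheta)\bA}=\sqrt{\bA^T\bH(\etab,\btheta)\bA}+O_p(1)$, which after division by $\sqrt N$ contributes an $o_p(1)$ term to the $\df\bW_t$ coefficient. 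Reassembling the pieces and dividing through by $\sqrt N$ yields \eqref{eq:Taylor}.

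I expect the main obstacle to be making these order estimates genuinely rigorous rather than merely formal: the bound $\bX(t)-\etab(t)=O_p(\sqrt N)$ is itself the conclusion of a nontrivial functional limit theorem (Kurtz), and the Taylor expansion of the matrix square root is valid only where $\bA^T\bH(\etab,\btheta)\bA$ is nondegenerate and the deterministic path $\etab(t)$ stays in the interior of the state space (e.g.\ $S(t),I(t)>0$ for the SIR model), so care is needed near depletion or extinction boundaries. For the purposes of this paper I would present the expansion formally, in the spirit of \citet{van1983stochastic} and \citet{wallace2010simplified}, with the understanding that the $o(1)$ terms are meant in this asymptotic sense, and defer full rigor to the classical limit theorems of \citet{kurtz1970solutions,kurtz1971limit}.
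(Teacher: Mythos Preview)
Your proposal is correct and follows the same van Kampen/Wallace heuristic as the paper, with one organizational difference worth noting. The paper first passes to concentration variables $\btX(t)=\bX(t)/N$ and rescales the rates $\tilde\theta_i=N^{m_i-1}\theta_i$ so that $\bh(\bX,\btheta)=N\,\bh(\btX,\bttheta)$ and $\bF(\etab,\btheta)=\bF(\teta,\bttheta)$; it then writes $\btX(t)=\teta(t)+N^{-1/2}\bxi(t)$ and Taylor-expands $\bh$ and $\bH$ in the small parameter $N^{-1/2}$ directly. This makes the bookkeeping mechanical: the drift remainder is $\mathcal{O}(N^{-1})$ and the diffusion remainder $\mathcal{O}(N^{-1/2})$ in the rescaled SDE, and one simply subtracts the ODE and multiplies by $\sqrt N$. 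You instead stay in the unscaled variables and invoke the density-dependent scaling $\bh(\bX,\btheta)=N\bbf(\bX/N,\btheta)+o(N)$ together with $\bX-\etab=O_p(\sqrt N)$ to track orders of each Taylor term by hand. Both routes are the same expansion; the paper's explicit rescaling is a little cleaner because all of your separate order estimates ($\bF=O(1)$, $\partial_\bX^2[\bA^T\bh]=O(1/N)$, $\bH=O(N)$, $\partial_\bX\bH=O(1)$) are packaged into the single substitution $\btX=\teta+N^{-1/2}\bxi$, whereas your route makes more explicit where the Kurtz diffusion scaling enters and why boundary behavior of $\etab(t)$ matters. Either presentation would be acceptable here.
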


\begin{proof}
	The following derivation is based on \citep{wallace2010simplified}. 
	
	We rescale both the compartment size and reaction rates as follows:
	\begin{eqnarray}
	\tilde{\bX}(t) &=& N^{-1}\cdot \bX(t)\\
	\tilde{\theta}_i & =&  N^{m_i-1}\theta_i,
	\end{eqnarray}
	where $m_i = \sum_{j = 1}^{d} \tilde{a}_{ij}$ is the sum of coeffcients in the left hand side of $i$-th reaction as in Section~\ref{s:generalization}. 
	The transformed $\tilde{\bX}(t)$ represents the proportion of individuals/reactants each compartment with respect to the total population size. 
	%Let $\tilde{\etab}(t)$ denote the solution for the scaled ODE system 
	%\begin{eqnarray}
	%d \bX = \bA^T \bh(\btX,\bttheta).
	%\end{eqnarray}
	%$\teta(t)$ can be considered as a rescaled version of $\etab(t)$, which satisfies $\teta(t) = \etab(t)/N$. 
	Then we have $\bh(\bX(t),\btheta) = N\bh(\btX(t),\bttheta) $ and $\bF(\etab(t),\btheta) = \bF(\teta(t),\bttheta) $. Hence, the SDE \eqref{eq:SDE} becomes 
	\begin{eqnarray}
	\label{eq:SDE2}
	\df \btX(t) = \bA^T\bh(\btX(t),\bttheta) \df t + \dfrac{1}{\sqrt{N}}\sqrt{\bA^T \bH(\btX(t),\bttheta)\bA}\cdot \df \bW_t .
	\end{eqnarray}
	Let $\teta(t)$ be the solution of the ODE
	\begin{eqnarray}
	\label{eq:eta_tilde}
	\df \teta(t) = \bA^T\bh (\teta(t),\bttheta) \df t,
	\end{eqnarray} 
	and  we have 
	$\etab(t) = N \teta(t)$, where $\etab(t)$ is the solution of the ODE \eqref{eq:ODEeta}. $\teta(t)$ can be viewed as a scaled version solution of ODE \eqref{eq:ODEeta}. 
	Let $\bxi(t) = \sqrt{N}\left(\btX(t) - \teta(t)\right) = \dfrac{1}{\sqrt{N}}\left(\bX(t) - \etab(t)\right)$ denote the scaled residual, then the rescaled compartment size vector $\btX(t)$ can be written as 
	\begin{eqnarray}
	\btX(t) = \dfrac{1}{\sqrt{N}}\bxi(t) + \teta(t). 
	\end{eqnarray}
	After using first order Taylor expansion of $\bh(\btX(t),\bttheta)$ and $\bH(\btX(t),\bttheta)$ around $\btX=\teta(t)$, the SDE \eqref{eq:SDE2} becomes 
	\begin{eqnarray*}
		\df \btX(t) &=& \bA^T\bh\left(\teta(t) + \dfrac{1}{\sqrt{N}}\bxi(t) ,\bttheta\right)\df t + \sqrt{\bA^T\bH\left(\teta(t) + \dfrac{1}{\sqrt{N}}\bxi(t),\bttheta\right)\bA} \cdot \df\bW_t \\
		& = &\left(\bA^T\bh(\teta(t),\bttheta) +  \bF(\teta(t),\bttheta)\cdot \dfrac{1}{\sqrt{N}}\bxi(t) + \mathcal{O}(N^{-1})\right) \df t \\
		&&+ \dfrac{1}{\sqrt{N}}\sqrt{ \bA^T \bH(\teta(t),\bttheta)\bA + \mathcal{O}(\dfrac{1}{\sqrt{N}})}\cdot \df\bW_t\\
		& =  &\left(\bA^T\bh(\teta(t),\bttheta) +  \dfrac{1}{\sqrt{N}}\bF(\teta(t),\bttheta)\cdot \bxi(t) \right)\df t \\
		& &+  \dfrac{1}{\sqrt{N}}\sqrt{ \bA^T \bH(\teta(t),\bttheta)\bA} \cdot \df\bW_t + o(N^{-1/2})\df\bW_t + o(N^{-1})\df t.
	\end{eqnarray*}
	where $\bF(\teta(t),\btheta):= \dfrac{\partial \bA^T\bh(\btX(t),\btheta)}{\partial \btX} \Big |_{\btX = \teta(t)}$ is the Jacobian matrix of the deterministic part $\bA^T\bh\left(\btX(t),\btheta\right)$ in \eqref{eq:ODEeta} at $\teta(t)$.
%	\vmcomment{We need to remind the reader what $\bF$ is.}
	By subtracting \eqref{eq:eta_tilde} and multiplying by $\sqrt{N}$ on the two ends, the above equation becomes a differential equation with respect to $\bxi$:
	\begin{eqnarray}
	\df \bxi(t)  = \bF(\teta(t),\btheta)\bxi(t) \df t + \sqrt{ \bA^T \bH(\teta(t),\bttheta)\bA}\cdot \df\bW_t + o(N^{-1/2})\df\bW_t + o(N^{-1})\df t.
	\end{eqnarray}
	After multiplying by $\sqrt{N}$, the above equation gives us \eqref{eq:Taylor}. 
	%\begin{eqnarray}
	% d \sqrt{N}\bxi  = \bF(\eta,\theta)\bxi dt + \sqrt{\bA^T \bH(\eta,\btheta)\bA}\cdot d\bW_t + \mathcal{O}(N^{1/4})
	%\end{eqnarray}
\end{proof}
Recall that $\bM(t)$ is the solution of \eqref{eq:bM} with initial condition $\bM(0) = \bX_0 - \etab_0$. We can use $\etab(t) + \bM(t)$ as an approximation of $\bX(t)$. 
Based on the local Lipschitz property of $\bF(\etab(t),\btheta)$ with respect to $t$ and $\bA^T\bH(\etab(t),\btheta)$, 
$\bX(t)$ can be approximated by $\etab(t) + \bM(t)$  with 
\begin{eqnarray}
\bX(t)= \etab(t) + \bM(t) + o(N^{1/2}),
\end{eqnarray}
for fixed $t$ as system size $N\rightarrow +\infty$.

%By \ref{eq:ODEeta} and (\ref{eq:Taylor}), ignoring the $o(\sqrt{N})$ term, the residual process $\bM$ is the solution SDE (\ref{eq:bM}). 
%
%\label{lastpage}

\subsection{Derivation of equations \eqref{eq:m} and \eqref{eq:Phi} in the main text}
\label{s:deriveM}
\begin{lemma} [Solution of linear ODE system] \label{thm:linearODE}
Let $\bF(t)\in \mathbb{R}^{d\times d}$ and $\bX(t)\in \mathbb{R}^{d}$ be function of defined on $\{t: t\geq 0\}$ that satisfies the following linear ODE
	\begin{eqnarray}
	\label{eq:ODElemma}
	\df \bX(t) = \bF(t) \bX(t) \df t, \qquad \bX_0 = \bx_0. 
	\end{eqnarray} 
	For $t\geq 0$, the solution of (\ref{eq:ODElemma}) can be represented as
	\begin{eqnarray}
	\bX(t) = \bSigma(t,0) \bx_0
	\end{eqnarray}
	where $\bSigma(t,0)$ is the solution of ordinary differential equation in $\mathbb{R}^{d\times d}$
	\begin{eqnarray}
	\df \bSigma(t,0) = \bF(t)\bSigma(t,0)\df t, \qquad \bSigma(0,0) = \bI. 
	\end{eqnarray}
\end{lemma}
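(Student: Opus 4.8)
The plan is to exhibit $\bX(t) = \bSigma(t,0)\bx_0$ as a solution of the initial value problem \eqref{eq:ODElemma} and then conclude by uniqueness. First I would record that the relevant existence-and-uniqueness theory applies: the matrix-valued coefficient $t \mapsto \bF(t)$ is continuous (in the setting of the main text, $\bF(\etab(t),\btheta)$ is the Jacobian of a polynomial vector field evaluated along the continuously differentiable trajectory $\etab(t)$), so on any compact interval $[0,T]$ the maps $\bSigma \mapsto \bF(t)\bSigma$ and $\bX \mapsto \bF(t)\bX$ are Lipschitz in the state variable, uniformly in $t$. By the Picard--Lindel\"of theorem, the matrix problem $\df\bSigma(t,0) = \bF(t)\bSigma(t,0)\,\df t$, $\bSigma(0,0) = \bI$, has a unique solution $\bSigma(t,0)$ on $\{t \ge 0\}$, and \eqref{eq:ODElemma} has a unique solution $\bX(t)$.

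Next I would verify the candidate. Put $\bY(t) := \bSigma(t,0)\bx_0$. Then $\bY(0) = \bSigma(0,0)\bx_0 = \bI\,\bx_0 = \bx_0$, matching the prescribed initial condition, and differentiating through the constant vector $\bx_0$ gives
\[
\df\bY(t) = \big(\df\bSigma(t,0)\big)\bx_0 = \bF(t)\,\bSigma(t,0)\,\bx_0\,\df t = \bF(t)\,\bY(t)\,\df t .
\]
Hence $\bY$ solves \eqref{eq:ODElemma}, and by the uniqueness just cited $\bX(t) = \bY(t) = \bSigma(t,0)\bx_0$ for all $t \ge 0$, as claimed.

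Because the argument reduces to this one-line verification, the only point requiring any care --- and it is a mild one --- is checking that the regularity hypotheses needed to invoke uniqueness hold; continuity of $\bF(\cdot)$ on the compact time grid used throughout makes this immediate. As an optional remark I would add that the same ODE shows $\bSigma(t,0)$ is invertible for every $t$: by Liouville's formula $\tfrac{\df}{\df t}\det\bSigma(t,0) = \operatorname{tr}\!\big(\bF(t)\big)\,\det\bSigma(t,0)$ with $\det\bSigma(0,0) = 1$, whence $\det\bSigma(t,0) = \exp\!\big(\textstyle\int_0^t \operatorname{tr}\bF(s)\,\df s\big) \neq 0$. This invertibility is exactly the property of the state-transition matrix that is used later when solving \eqref{eq:m} and \eqref{eq:Phi}.
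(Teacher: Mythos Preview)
Your argument is correct and is the standard verification-plus-uniqueness proof of this classical fact. The paper itself does not supply a proof of this lemma; it is stated as a known result and then applied immediately to the mean equation \eqref{eq:m}. Your write-up is therefore strictly more than what the paper offers, and the added remark on invertibility of $\bSigma(t,0)$ via Liouville's formula is a useful anticipation of its later use in Theorem~\ref{thm:bM}, where $\bSigma^{-1}(s,t_0)$ appears in the It\^o integral.
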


Lemma (\ref{thm:linearODE}) gives the solution of linear ODE. Hence, the solution for the main text linear ODE \ref{eq:m} is on $[t_{i-1},t]$ will be 
\begin{eqnarray}
\bbm(t) = \bSigma(t,t_{i-1})\bbm_{i-1}, 
\end{eqnarray}
where $\bbm_{i-1}$ is the initial state at $t_{i-1}$ and $\bSigma(t,t_{i-1})$ is the transition matrix by 
\begin{eqnarray}
\label{eq:fundamental}
\df \bSigma(t,t_{i-1}) = \bF(\etab(t),\btheta) \bSigma(t,t_{i-1}) \df t, \qquad \bSigma(t_{i-1};t_{i-1}) = \bI, 
\end{eqnarray} 
and $\bbm_{i-1}$ is the initial value for $\bbm$ at time $t_{i-1}$. 
\begin{theorem}
	\label{thm:bM}
	Let $\{\bM(t)\}_{t \geq 0} \in \mathbb{R}^d$ be stochastic process that satisfies the following stochastic differential equation, 
	\begin{eqnarray}
	\label{eq:bM2}
	\df \bM(t) = \bF(\etab(t),\btheta) \bM(t) \df t + \sqrt{\bA^T\bH(\etab(t),\btheta) \bA }\df \bW_t.
	\end{eqnarray} 
	Then the solution of (\ref{eq:bM2}) is the Gaussian process 
	\begin{eqnarray}
	\bM(t) = \bSigma(t,t_0)\left(\bM(t_0) + \int_{t_0}^{t}\bSigma^{-1}(s,t_0)\sqrt{\bA^T\bH(\etab(t),\btheta)\bA}\df \bW_s \right),
	\end{eqnarray}
	with mean process $\bbm(t) := \bE\left[\bM(t)| \bM(t_0)\right]$ satisfies \eqref{eq:m} and variance process $\bPhi(t):=\mbox{Var}\left[\bM(t)| \bM(t_0)\right]$ satisfies \eqref{eq:Phi}. 
\end{theorem}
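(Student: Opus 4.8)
The plan is to solve the linear SDE \eqref{eq:bM2} by the classical integrating-factor (variation of parameters) method, using the fundamental matrix $\bSigma(t,t_0)$ supplied by Lemma~\ref{thm:linearODE}. First I would record the auxiliary identity obtained by differentiating $\bSigma(t,t_0)\bSigma^{-1}(t,t_0)=\bI$ and invoking \eqref{eq:fundamental}, namely $\df\bSigma^{-1}(t,t_0) = -\bSigma^{-1}(t,t_0)\bF(\etab(t),\btheta)\,\df t$; this is an ordinary finite-variation derivative, so no Itô correction appears. Then I would apply the Itô product rule to $\bY(t):=\bSigma^{-1}(t,t_0)\bM(t)$. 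The two drift contributions cancel by construction, and since $\bSigma(\cdot,t_0)$ has finite variation the cross-variation term vanishes, leaving $\df\bY(t) = \bSigma^{-1}(t,t_0)\sqrt{\bA^T\bH(\etab(t),\btheta)\bA}\,\df\bW_t$. Integrating from $t_0$ to $t$ and left-multiplying by $\bSigma(t,t_0)$ then yields the stated closed form for $\bM(t)$.

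Next, for the distributional statement: conditional on $\bM(t_0)$, the integrand $s\mapsto \bSigma^{-1}(s,t_0)\sqrt{\bA^T\bH(\etab(s),\btheta)\bA}$ is deterministic and, being continuous on the compact time interval, square-integrable, so the Wiener integral is a Gaussian vector; $\bM(t)$ is an affine image of it and hence Gaussian, and running the argument jointly at finitely many times shows $\{\bM(t)\}$ is a Gaussian process. Taking conditional expectations and using that the Itô integral has mean zero gives $\bbm(t) = \bSigma(t,t_0)\bM(t_0)$; differentiating and invoking \eqref{eq:fundamental} recovers $\df\bbm(t) = \bF(\etab(t),\btheta)\bbm(t)\,\df t$, which is \eqref{eq:m}.

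For the covariance, Itô's isometry gives
\[
\bPhi(t) = \bSigma(t,t_0)\Big(\int_{t_0}^{t}\bSigma^{-1}(s,t_0)\,\bA^T\bH(\etab(s),\btheta)\bA\,\bSigma^{-T}(s,t_0)\,\df s\Big)\bSigma^T(t,t_0).
\]
I would then differentiate this product in $t$ by the Leibniz rule: the two outer $\bSigma$ factors contribute $\bF(\etab(t),\btheta)\bPhi(t)$ and $\bPhi(t)\bF^T(\etab(t),\btheta)$ via \eqref{eq:fundamental}, while differentiating the integral contributes $\bSigma(t,t_0)\bSigma^{-1}(t,t_0)\,\bA^T\bH\bA\,\bSigma^{-T}(t,t_0)\bSigma^T(t,t_0)=\bA^T\bH(\etab(t),\btheta)\bA$. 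Summing the three terms gives exactly \eqref{eq:Phi}.

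The routine-but-delicate part is the matrix Itô bookkeeping: verifying that $\bSigma^{-1}(t,t_0)\bM(t)$ genuinely has vanishing drift requires care with the noncommutativity of the matrices and with the finite-variation property of $\bSigma(\cdot,t_0)$, and the final differentiation must be arranged so that the symmetrized Lyapunov structure of \eqref{eq:Phi} emerges precisely rather than an asymmetric variant. Everything else — existence and invertibility of $\bSigma(t,t_0)$, square-integrability of the integrand, and Gaussianity — is standard once one notes the continuity of $\etab(\cdot)$ and the boundedness of $\bF$ and $\bA^T\bH\bA$ on the (fixed, bounded) interval on which the LNA is applied.
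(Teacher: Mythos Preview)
Your proposal is correct and follows essentially the same route as the paper: both use the integrating-factor substitution $\bSigma^{-1}(t,t_0)\bM(t)$ and It\^o's lemma to obtain the explicit solution, argue Gaussianity from the deterministic integrand, read off the mean as $\bSigma(t,t_0)\bM(t_0)$, and then differentiate the It\^o-isometry expression for $\bPhi(t)$ via the Leibniz rule to recover \eqref{eq:Phi}. Your version is, if anything, slightly more explicit about why no It\^o correction appears (the finite-variation property of $\bSigma$) and about the square-integrability underpinning the Gaussianity claim, but the structure of the argument is the same.
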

\begin{proof}
	Define matrix function  $\bSigma(t,t_{0})$ as \eqref{eq:fundamental}. First we apply  the linear transform $\tilde{\bM}(t) = \bSigma^{-1}(t;t_0)\bM(t)$. Based on Ito's lemma, \eqref{eq:bM2} can be simplified as a SDE of $\tilde{\bM}(t)$: 
	\begin{eqnarray}
	\df \tilde{\bM}(t) = \bSigma^{-1}(t;t_0)\sqrt{\bA^T\bH(\etab(t),\btheta)\bA}\df \bW_t, 
	\end{eqnarray}
	with solution. 
	$$\tilde{\bM}(t) = \tilde{\bM}(t_0) +  \int_{t_0}^{t} \bSigma^{-1}(s;t_0)\sqrt{\bA^T\bH(\etab(t),\btheta)\bA} \cdot \df \bW_s$$
	Then the solution of $\bM(t)$ is 
	\begin{eqnarray}
	\label{eq:bMsolve}
	\bM(t) = \bSigma(t,t_0)\left(\bM(t_0) + \int_{t_0}^{t}\bSigma^{-1}(s,t_0)\sqrt{\bA^T\bH(\etab(t),\btheta)\bA}\df \bW_s \right).
	\end{eqnarray}
	$\bSigma(t,t_0)\bM(t_0)$ in \eqref{eq:bMsolve} is a deterministic function of $t$. The integral $\int_{t_0}^{t}\bSigma^{-1}(s,t_0)\sqrt{\bA^T\bH\left(\eta(t)\right)\bA}\df \bW_s$ in \eqref{eq:bMsolve} should be Gaussian random variable with mean $\mathbf{0}$ since it is a
	linear combination of the increments of Brownian motion with different variance. Hence, the $\bM(t)$ should be a Gaussian process.  By taking the expectation of \eqref{eq:bMsolve}, the mean of $\bbm(t) = \bE [\bM_t]$ satisfies
	\begin{eqnarray}
	\bbm(t) = \bSigma(t,t_0) \bbm(t_0),
	\end{eqnarray}
	which corresponds to the solution of ODE \eqref{eq:m}. 
	
	For the variance process, from \eqref{eq:bMsolve}, 
	\begin{eqnarray}
	\label{eq:intPhi}
	\bPhi(t) =\bSigma(t,t_0) \int_{t_0}^{t}  \bSigma^{-1}(s,t_0) \bA^T\bH(\etab(t),\btheta)\bA  \bSigma^{-1}(s,t_0)  \df s \cdot \bSigma^T(t,t_0)
	\end{eqnarray}
	By differentiation with respect to $t$, \eqref{eq:intPhi} becomes 
	\begin{eqnarray*}
		\df \bPhi(t) &= & \df\bSigma(t,t_0)\cdot \int_{t_0}^{t} \bSigma^{-1}(s,t_0) \bA^T\bH(\etab(t),\btheta)\bA  \bSigma^{-T}(s,t_0)\df s\cdot \bSigma^T(t,t_0)  \\
		&+ &\bSigma(t,t_0)\cdot \df\left[\int_{t_0}^{t}\bSigma^{-1}(s,t_0) \bA^T\bH(\etab(t),\btheta)\bA  \bSigma^{-T}(s,t_0) \df s\right]\cdot \bSigma^T(t,t_0) \\
		&+& \bSigma(t,t_0) \cdot \int_{t_0}^{t}\bSigma^{-1}(s,t_0) \bA^T\bH(\etab(t),\btheta)\bA  \bSigma^{-T}(s,t_0) \df s \cdot  \df \bSigma^T(t,t_0) \\
		&=& \bF(\etab(t),\btheta) \cdot\bSigma(t,t_0)\int_{t_0}^{t} \bSigma^{-1}(s,t_0) \bA^T\bH(\etab(t),\btheta)\bA  \bSigma^{-T}(s,t_0)  \df s \cdot \bSigma^T(t,t_0) \df t\\
		&+& \bSigma(t,t_0) \cdot \bSigma^{-1}(t,t_0) \bA^T\bH(\etab(t),\btheta)\bA  \bSigma^{-T}(t,t_0)\cdot \bSigma^T(t,t_0)  \cdot \df t\\
		&+&  \bSigma(t,t_0) \cdot \int_{t_0}^{t} \bSigma^{-1}(s,t_0) \bA^T\bH(\etab(t),\btheta)\bA  \bSigma^{-T}(s,t_0)  \df s \cdot \bSigma^T(t,t_0) \bF^T(\etab(t),\btheta) \cdot \df t\\
		&=& \left(\bF(\etab(t),\btheta) \bPhi(t) + \bPhi(t) \bF^T(\etab(t),\btheta) + \bA^T\bH(\etab(t),\btheta)\bA \right) \df t,
	\end{eqnarray*}
	which is the result in \eqref{eq:Phi}. 
\end{proof}
\subsection{Relationship between LNA and other methods}
The SDE approach can be viewed as a normal approximation based on a $\tau$-leaping step for the MJP. 
The LNA can be derived either directly from Taylor expansion of  the transition probability of the MJP or the Taylor expansion of the transition density of the SDE. 
The ODE solution can be considered as a limit of the mean trajectory of the MJP when system size $N$ goes to infinity. ODE solution can also be viewed as the deterministic part for SDE \eqref{eq:SDE} and the mean process for LNA based on \eqref{eq:LNAnoise}. 
Figure ~\ref{f:relationshipDiag} depicts relationships between different dynamical system representations as a diagram.
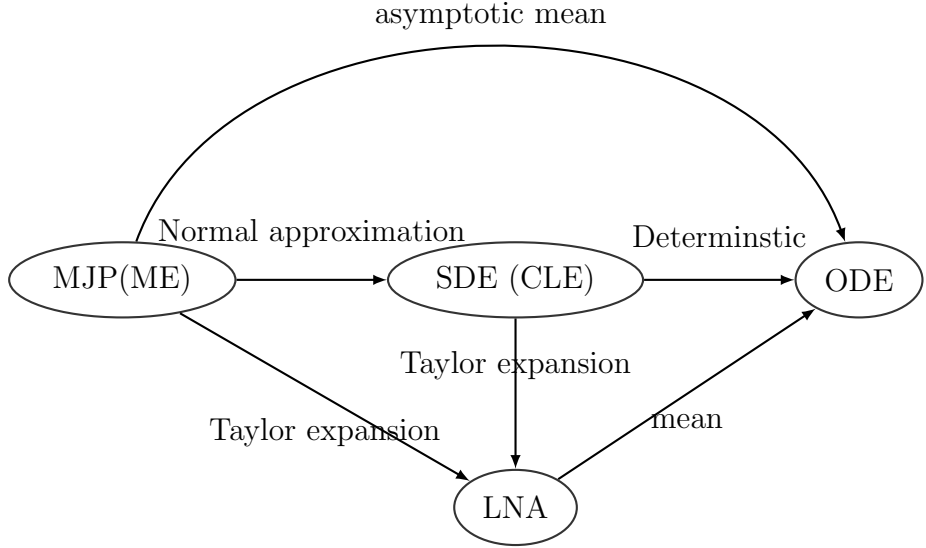
\begin{figure}[h]
	\label{f:relationshipDiag}
	\centering
	\begin{tikzpicture}
	\usetikzlibrary{shapes}
	\tikzstyle{main}=[ellipse, minimum size = 10mm, thick, draw =black!80, node distance = 20mm]
	\tikzstyle{connect}=[-latex, thick]
	\tikzstyle{box}=[rectangle, draw=black!100]
	\node[main] (theta)  {MJP(ME)};
	\node[main] (z) [right=of theta] {SDE (CLE)};
	\node[main] (w) [right=of z] { ODE};
	\node[main] (n) [below=of z] { LNA};
	\path (theta) edge [ connect] node [midway,above=0.7em]  {Normal approximation}(z)
	(z) edge [connect] node [midway,above=0.7em]  {Determinstic}(w)
	(z) edge[connect] node [midway,above=0.1em]  {Taylor expansion}(n)
	(theta) edge[connect] node [midway,below=0.3em]  {Taylor expansion}(n)
	(n) edge[connect] node [midway,below=0.3em]  {mean}(w)
	(theta) edge [ bend left=70,-latex, thick] node [midway,above=0.2em]  {asymptotic mean}(w);
	\end{tikzpicture}
	\caption{The relationship between different dynamical system representations.}
\end{figure}

\section{Non-centered parameterization}
\label{s:noncenter}
In LNA, the latent trajectory $\bX(t)$ is decomposed into the deterministic part $\etab(t)$ plus a stochastic part $\bM(t)$ that follows a multivariate Gaussian distribution with mean $\mathbf{0}$. However, the population size at the $i$-th time interval $\bX_i$ depends on rate parameter $\btheta$ and is correlated with other population sizes $\bX_{j}$s in the trajectory, leading to mixing issues for the MCMC chain, especially when we introduce multiple change points for reproduction number $R_0$. 

Here we take the idea of non-centered parameterization from \citet{papaspiliopoulos2007general,bernardo2003non} and reparameterize the latent trajectrory in terms of residuals $\bX_i - \etab_i$ for $i=1,\dots,T$. 
Given rate parameters $\btheta_{i-1}$, ODE solution $\etab_{0:T}$, fundamental matrix $\bSigma(t_i,t_{i-1})$ and variance matrix $\bPhi_i$ in \eqref{eq:Phi}, the trajectory $\bX_{0:T}$ can be parameterized using standard Gaussian noise $\bxi_{1:T}$ based on the following iterative equations:
\begin{eqnarray}
\label{eq:X0}
\bX_{0} &=& \etab_0, \\
\bX_i &=& \bmu\left(\bX_{i-1} - \etab(t_{i-1}),\Delta t_i, \btheta_{i-1}\right) + \etab_i + \bPhi_i^{1/2}\bxi_i,\\
\label{eq:Xi}&=& \bSigma(t_i,t_{i-1}) \left(\bX_{i-1} - \etab_{i-1} \right) + \etab_i + \bPhi_i^{1/2}\bxi_i, \quad \mbox{for } i = 1,\ldots, T.
\end{eqnarray}
Let $\bM_i :=\bX_i - \etab_i$ denote the residual in grid cell $i$. 
Based on \eqref{eq:Xi}, the residual process satisfies
\begin{eqnarray}
\label{eq:U1}\bM_1 &= &\bPhi_1^{1/2}\bxi_1\\
\label{eq:Ui}\bM_i &=& \bSigma(t_{i-1},t_i)\bM_{i-1} + \bPhi_i^{1/2}\bxi_i, \qquad i = 2,\ldots, T. 
\end{eqnarray}
%And define $\bP_i \in \mathbb{R}^{d\times d}$ be the Cholesky decomposition of $\bPhi_i$ that satisfies
%\begin{eqnarray}
%\bP_i\bP_i^T = \bPhi_i.
%\end{eqnarray}
Then $\bM_{0:T}$ can be viewed as a Gaussian Markov random field with mean $\mathbf{0}$ that follows the Markov property on a chain graph. 
Let $\bSigma_i$ be the abbreviated  notation of $\bSigma(t_i,t_{i-1})$ and $\bP_i = \bPhi_i^{1/2}$. From \eqref{eq:Ui} , $\bM_i$ can be written as
\begin{eqnarray*}
	\bM_{i}
	& = &\bSigma_{i-1}\bM_{i-1} + \bP_{i}\bxi_{i}\\
	& = &\bSigma_{i-1} \left(\bSigma_{i-2}\bM_{i-2} + \bP_{i-1}\bxi_{i-1} \right) + \bP_{i}\bxi_{i}\\
	& = &\bSigma_{i-1}\bSigma_{i-2}\bM_{i-2}+ \bSigma_{i-1}\bP_{i-1}\bxi_{i-1} + \bP_{i}\bxi_{i}\\
	& = &\bSigma_{i-1}\bSigma_{i-2}\cdots\bSigma_1\bP_1\bxi_1 + \cdots + \bP_{i}\bxi_{i}\\
	&  = &\sum_{k=1}^{i}(\prod_{j=k}^{i-1} \bSigma_j) \bP_{k}\bxi_{k}.
\end{eqnarray*}

Since $\bSigma_i$ and $\bP_i$ are governed by rate parameters $\btheta_{i-1}$ and initial value $\bX_0$, then we define the transform matrix $\bL(\bX_0,\btheta_{0:T}) \in \mathbb{R}^{2T\times2T}$, 
\begin{eqnarray}
\bL(\bX_0,\btheta_{0:T}) = \left(\begin{array}{cccccc}
\bP_1 & \mathbf{0} & \mathbf{0}&\cdots & \mathbf{0} & \mathbf{0}\\
\bSigma_1\bP_1 & \bP_2 & \mathbf{0} & \cdots & \mathbf{0} & \mathbf{0}\\
\bSigma_2\bSigma_1\bP_1  & \bSigma_2\bP_2 & \bP_3& \cdots & \mathbf{0} & \mathbf{0}\\
\cdots & \cdots & \cdots & \ddots & \cdots & \cdots\\
\bSigma_{T-2}\cdots \bSigma_1\bP_1 & \bSigma_{T-2}\cdots \bSigma_2\bP_2 & \bSigma_{T-2}\cdots \bSigma_2\bP_3 & \cdots & \bP_{T-1} & \mathbf{0} \\
\bSigma_{T-1}\cdots \bSigma_1\bP_1 & \bSigma_{T-1}\cdots \bSigma_2\bP_2 & \bSigma_{T-1}\cdots \bSigma_3\bP_3 & \cdots & \bSigma_{T-1}\bP_{T-1} & \bP_{T}
\end{array}
\right). 
\end{eqnarray}
A linear relationship between $\bX_{1:T}$ and the reparameterized noise $\bxi_{1:T}$ can be established with the help of the above transform matrix $\bL$, 
\begin{eqnarray}
\label{eq:LNAnoise}
\left(\begin{array}{c}
\bX_1\\
\vdots\\
\bX_T
\end{array}\right) = \left(
\begin{array}{c}
\etab_1\\
\vdots\\
\etab_T
\end{array}
\right) + \bL(\bX_0,\btheta_{0:T})\left(
\begin{array}{c}
\bxi_1\\
\vdots\\
\bxi_T
\end{array}
\right).
\end{eqnarray}

Instead of directly updating $\bX_{1:T}$, we will apply the above transform and update the Gaussian noise $\bxi_{1:T}$ instead. The MCMC approach will focus on sampling parameter  $I_0,R_0,\gamma, \bdelta_{1:T}$, $\bxi_{1:T}, \sigma$ with
the posterior likelihood 
\begin{eqnarray*}
& & \Pr(I_0,R_0,\gamma, \bdelta_{1:T}, \bxi_{1:T}, \sigma|\bg) \\
& \propto&  \Pr(\bg|I_0,R_0,\gamma, \bdelta_{1:T}, \bxi_{1:T}, \sigma)\Pr(I_0)\Pr(R_0)\Pr(\gamma)\Pr(\bdelta_{1:T})\Pr(\bxi_{1:T})\Pr(\sigma)\\
& \propto& \Pr(\bg|\bX_{0:T}, \btheta_{0:T})\Pr(I_0)\Pr(R_0)\Pr(\gamma)\Pr(\bdelta_{1:T})\Pr(\bxi_{1:T})\Pr(\sigma).
\end{eqnarray*}

In summary,the transformation that allows us to move from parameterization in terms of $\bX_{0:T}, \theta_{0:T}$ to the parameterization in terms of $I_0,R_0,\gamma, \bdelta_{1:T}$, $\bxi_{1:T}, \sigma$ are based on the following equations: 
\begin{enumerate}
	\item $R_{0_i}:=R_{0}(t_i) = R_0 \cdot \exp(\prod_{j=1}^{i}\delta_j\sigma)$ --- a function of  $R_0$, $\bdelta_{1:i}$ and $\sigma$.
	\item $\beta_i := \beta(t_i) = \dfrac{N R_0(t_i)}{\gamma}$ --- a function of $R_0$, $\bdelta_{1:i}$, $\sigma$ and $\gamma$. 
	\item $\btheta_i = (\beta_i, \gamma)$ ---  a function of $R_0$, $\bdelta_{1:i}$, $\sigma$ and $\gamma$. 
	\item $\btheta_{0:T}$ --- a function of  $R_0$, $\bdelta_{1:T}$, $\sigma$ and $\gamma$. 
	\item $\bX_0 = (N,I_0)^T$. 
	\item $\bX_{1:T} = \etab_{1:T} + \bL(\bX_0,\btheta_{0:T})\bxi_{1:T}$ --- a function of $R_0$, $\bdelta_{1:T}$, $\sigma$, $\gamma$, $I_0$ and $\bxi_{1:T}$. 
\end{enumerate}
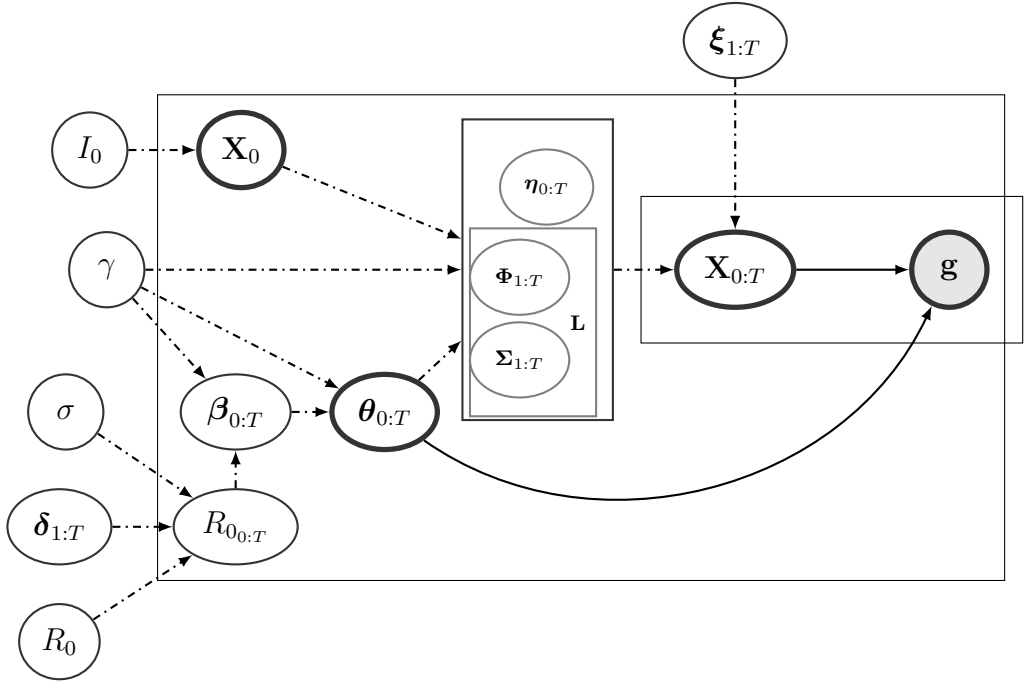
\begin{figure}[h]
	\centering
	\begin{tikzpicture}[title/.style={font=\fontsize{6}{6}\color{black!50}\ttfamily},
	typetag/.style={draw=black!50, font=\scriptsize\ttfamily, anchor=west}]
	\usetikzlibrary{shapes}
	\tikzstyle{main}=[ellipse, minimum size = 10mm, thick, draw =black!80, node distance = 16mm]
	\tikzstyle{main2}=[rectangle, minimum width = 20mm,minimum height = 40mm, thick, draw =black!80, node distance = 16mm]
	\tikzstyle{main3}=[rectangle, minimum width = 15mm,minimum height = 25mm, thick, draw =black!80, node distance = 16mm]
	\tikzstyle{mainnew}=[ellipse, dashed, minimum size = 10mm, thick, draw =black!80, node distance = 16mm]
	\tikzstyle{connect}=[-latex, thick]
	\tikzstyle{depend}=[-latex, thick, dash dot]
	\tikzstyle{box}=[rectangle, draw=black!100]
	\node[main2] (w) [title]{};
	\node[main] (phi)[below=0.1cm of w.west, typetag,xshift = 1mm]  {$\bPhi_{1:T}$};
	\node[main] (SG)[below=1.2cm of w.west, typetag,xshift = 1mm]  {$\bSigma_{1:T}$};
	\node[main3] (L)[below=0.7cm of w.west, typetag,xshift = 1mm]  {$\qquad\qquad\bL$};
	\node[main] (eta)[below=-1.1 cm of w.west, typetag,xshift=0.5cm]  {$\etab_{0:T}$};
	\node[main] (theta)[line width=0.7mm, above left= -0.8 cm and 2.5 cm of w]  {$\bX_0$};
	\node[main] (I0)[left=0.9cm of theta]  {$I_0$};
	\node[main] (z) [line width=0.7mm,below left= -0.5 cm and 0.5cm of w] {$\btheta_{0:T}$};
	\node[main] (delta) [left=0.5cm of z] {$\bbeta_{0:T}$};
	\node[main] (b) [left=1cm of delta] {$\sigma$};
	\node[main] (sigma) [below=0.5cm of delta] {$R_{0_{0:T}}$};
	\node[main] (dtt) [left=0.8cm of sigma] {$\bdelta_{1:T}$};
	\node[main] (sg) [below=0.5cm of dtt] {$R_0$};
	\node[main] (gamma)[left=4.2cm of w]  {$\gamma$};
	\node[main] (v) [line width=0.7mm, right=0.8cm of w] {$\bX_{0:T}$};
	\node[main](xi)[above=2cm of v] {$\bxi_{1:T}$};
	%\node[main] (f) [line width=0.7mm, right= 1 cm of v] {$\bbf_{0:T}$};
	\node[main,fill = black!10] (n) [line width=0.7mm, right= 1.5 cm of v] {$\bg$};
	\path (z) edge [depend] node [midway,above=0.1em] {}(w)
	(w) edge[depend] node [midway,below=0.3em] {}(v)
	(dtt) edge[depend] node  [midway,above=0.1em] {}(sigma)
	(b) edge[depend] node  [midway,above=0.1em] {}(sigma)
	(sg) edge[depend] node  [midway,above=0.1em] {}(sigma)
	(sigma) edge[depend] node  [midway,above=0.1em] {}(delta)
	(gamma) edge[depend] node  [midway,above=0.1em] {}(delta)
	(I0) edge[depend] node  [midway,above=0.1em] {}(theta)
	(gamma) edge[depend] node  [midway,above=0.1em] {}(z)
	(delta) edge[depend] node  [midway,above=0.1em] {}(z)
	(xi) edge[depend] node  [midway,above=0.1em] {}(v)
	(z) edge[bend right=50,connect] node  [midway,above=0.1em] {}(n)
	%			(f) edge[connect] node  [midway,above=0.1em] {}(n)
	(gamma) edge[depend] node  [midway,above=0.1em] {}(w)
	(theta) edge [depend] node [midway,above=0.1em] {}(w)
	(v) edge [connect] node [midway,above=0.1em] {}(n);
	%	(f) edge [connect] node [midway,above=0.1em] {}(n);
	\node[rectangle, inner sep=4.4mm,draw=black!100, fit= (n)(v)] {};
	\node[rectangle, inner sep=2mm,draw=black!100, fit= (theta)(sigma)(w)(n)(v)] {};
	\end{tikzpicture}
	\caption{Parameter dependency graph after reparameterization. The root nodes $I_0, \gamma, \sigma, \bdelta_{1:T}, R_0, \sigma$ outside the large box are parameters and latent variables after reparameterization, for which we assign prior distributions. The dash-dotted lines show deterministic relationships and the solid lines show the stochastic dependencies. The grey node denotes the observed data. The figure shows the dependency structure between the transformed parameters and original parameters $\btheta_{0:T}, \bX_{0} \mbox{ and } \bX_{0:T}$.}
	\label{f:dataGenerate2}
\end{figure}

\section{MCMC details}
\label{s:mcapdx}
\subsection{Elliptical slice sampler}
\label{s:ESS}
The elliptical slice sampler, proposed by \citet{murray2010elliptical}, aims at sampling from posterior distributions associated with probability models with a latent \textit{a priori} zero-mean Gaussian random vector $\bX\in \mathbb{R}^d$ with covariance $\Sigma(\btheta)$, i.e., $\bX \sim \mathcal{N}(\mathbf{0},\Sigma(\btheta))$. We use $L(\bY|\bX,\btheta)$ to denote the likelihood function for observed data $\bY$ given latent variable $\bX$ and parameter $\btheta$. Hence, the target posterior distribution for $\bX$ given is $$\Pr(\bX|\bY,\btheta) \propto L(\bY|\bX,\btheta) \mathcal{N}(\bX|\mathbf{0},\Sigma(\btheta))\pi(\btheta),$$
where $\pi(\btheta)$ is the prior distribution for parameter $\btheta$. The goal of elliptical slice sampler is to obtain posterior samples of latent variable $\bX$ from $p(\bX|\bY,\btheta)$. 
The proposal step in elliptical sampling consists of two parts: (1)
proposing an auxiliary random vector  $\bZ \in \mathbb{R}^d$ from distribution $\mathcal{N}\left(\mathbf{0},\Sigma(\btheta)\right)$, (2) proposing a variable $\alpha \in [0,2\pi]$ as an angle parameter. In elliptical slice sampler, a new state $(\bX',\bZ')$ is proposed by rotating the previous state $(\bX,\bZ)$ with angle $\alpha$,
\begin{eqnarray}
\label{eq:invariant}
\bX' = \bX \cos(\alpha) + \bX\sin(\alpha)\\
\bZ' = \bZ \sin(\alpha) - \bZ \cos(\alpha). 
\end{eqnarray}
For any given $\alpha$, this transition leaves the joint prior probability invariant, i.e, 
$$\mathcal{N}\left(\bX|\mathbf{0},\Sigma\right)\mathcal{N}\left(\bZ|\mathbf{0},\Sigma\right) = \mathcal{N}\left(\bX'|\mathbf{0},\Sigma\right)\mathcal{N}\left(\bZ'|\mathbf{0},\Sigma\right).$$

Hence, $(\bX',\bZ')$ are considered at the proposed state and the ratio and the propose transition probability from $(\bX,\bZ)$ to $(\bX',\bZ')$ should equal that from $(\bX',\bZ')$ to $(\bX,\bZ)$, i.e 
$$\dfrac{\Pr\left( (\bX',\bZ')\rightarrow(\bX,\bZ) \right)}{\Pr\left( (\bX,\bZ)\rightarrow(\bX',\bZ') \right)} = 1.$$
The algorithm for elliptical slice sampler is given in Algorithm~\ref{alg:ESS}. 
\begin{algorithm}
	\caption{Elliptical slice sampler for posterior distribution $\pi(\cdot \mid\bY,\btheta)$}\label{alg:ESS}
	\begin{algorithmic}[1]
		\State {\bf Input: }Latent variable from the previous iteration $\bX\in \mathbb{R}^d$.  Observed data $\bY$, previous updated parameter $\bX$.  
		\State{\bf Output} Updated latent variable value $\bX'$
		\State Sample ellipse $\bZ\sim \mathcal{N}(\mathbf{0},\bSigma(\btheta))$
		\State Compute log-likelihood threshold: 
		sample $U \sim \mbox{Uniform}(0,1)$ and let $$\tau \leftarrow \log L(\bY|\bX,\btheta) + \log(U)$$
		\State Sample angle parameter $\alpha \sim \mbox{Uniform}[0,2\pi]$ and $[\alpha_{\textrm{min}},\alpha_{\textrm{max}}]\leftarrow [\alpha - 2\pi, \alpha]$
		\State $\bX'\leftarrow \bX \cdot \cos\alpha + \bZ\cdot \sin\alpha$
		\While {$\log(L(\bY|\bX',\btheta)) < \tau$ }
		\If{$\alpha < 0$}
		\State $\alpha_{\min} \leftarrow \alpha$
		\Else 
		\State $\alpha_{\max} \leftarrow \alpha$
		\EndIf
		\State Sample $\alpha\sim  \mbox{Uniform}\left(\alpha_{\min},\alpha_{\max}\right).$
		\State Make new proposal:
		$$\bX'\leftarrow \bX \cdot \cos\alpha + \bZ\cdot \sin\alpha$$
		\EndWhile
		\State Return $\bX'$.
	\end{algorithmic}
\end{algorithm}
Notice that iterations will stop only when a new sample is accepted. 
Hence, the elliptical slice sampler has acceptance rate $1$, meaning that it will always update the latent random vector $\bX$ at each MCMC iteration. 	

\subsection{MCMC algorithm for the LNA-based SIR model}
In this framework, the observed data are the genealogy $\bg$ estimated from a sample of sequences from virus hosts. 
The sufficient statistics for SIR structured coalescent likelihood would be the coalescent times $\mathcal{T}$ and sampling times $\mathcal{S}$. The unknown parameters and the latent variables are
\begin{enumerate}
	\item The initial number of infected individuals: $I_0$. The initial population is parameterized as $\bX_0 = (N, I_0)$, suppressing that $S_0 = N  - I_0$ and that there are no recovered individuals at time 0.
	\item The initial basic reproduction number $R_0$. 
	\item The removal rate $\gamma$. 
	\item The hyperparameter $\sigma$ that controls the smoothness of $R_0(t)$ trajectory. 
	\item The parameters modeling the first order differences in $\log(R_0(t)): \bdelta_{1:T}$. Note that assuming $\delta_0 = 1$, the infection rate $\beta_i$ can be obtained as
  \begin{eqnarray}
  \beta_i = \dfrac{\gamma}{N} \cdot R_{0} \exp\left(\sum_{k=0}^{i}\sigma \delta_{k}\right).
  \end{eqnarray} The parameter $\btheta_{0:T}$ can be obtained from $R_0, \bdelta_{1:T}, \gamma$ and $\sigma$. 
	\item Random noise for the population trajectory at $t_1,\ldots,t_T$, i.e.  $\bxi_{1:T}$, with  $\bxi_i\sim_{iid} \mathcal{N}(0,\bI)$ \textit{a priori}. The latent SIR trajectories $\bX_{0:T}$ can be recovered from $\btheta_{0:T}$, $\bX_0$ and random noise $\bxi_{1:T}$. 
\end{enumerate}
The MCMC update for parameters and latent variables is given in Algorithm \ref{MCMCstep}. 
\begin{algorithm}[h]
	\caption{Updating rule in the LNA-based MCMC algorithm}\label{MCMCstep}
	\begin{algorithmic}[1]
		\State {\bf Input: }Parameter values from the previous interation $I_0, R_0, \gamma, \bdelta_{1:T}, \sigma, \bxi_{1:T}$, geneology $\bg$. Proposal density $q_1(\cdot|\cdot)$, $q_2(\cdot|\cdot)$ for updating the initial number of infected individuals and the removal rate. 
		\State{\bf Output} Updated parameters values 
		\State Calculate $\bX_{0:T}$, $\btheta_{0:T}$ based on $I_0, R_0, \gamma, \bdelta_{1:T}, \sigma, \bxi_{1:T}$. 
		\State Propose $I'_0$ based on $q_1(\cdot|I_0)$, then $\bX_{0:T}$ will be deterministically updated to $\bX'_{0:T}$ according to $I'_0$, $R_0, \gamma, \bdelta_{1:T}, \sigma, \bxi_{1:T}$. 
		\State Accept $(I'_0, \bX'_{0:T})$ with acceptance probability 
		$$a \leftarrow \min\left(1,\dfrac{\Pr(\bg|\btheta'_{0:T},\bX'_{0:T})\Pr(I'_0)q_1(I_0|I'_0)}{\Pr(\bg|\btheta_{0:T},\bX_{0:T})\Pr(I_0)q_1(I'_0|I_0)}\right).$$
		\State Propose $\gamma'$ based on $q_2(\cdot|\gamma)$, then $\bX_{0:T}, \btheta_{0:T}$ will be deterministically updated to $\bX'_{0:T}, \btheta'_{0:T}$ according to $I_0$, $R_0, \gamma', \bdelta_{1:T}, \sigma, \bxi_{1:T}$. 
		\State Accept $(\gamma', \bX'_{0:T}, \btheta'_{0:T})$ with acceptance probability 
		$$a \leftarrow \min\left(1,\dfrac{\Pr(\bg|\btheta'_{0:T},\bX'_{0:T})\Pr(\gamma')q_2(\gamma|\gamma')}{\Pr(\bg|\btheta_{0:T},\bX_{0:T})\Pr(\gamma)q_2(\gamma'|\gamma)}\right).$$
		\State Let $\bU = (\log(R_0), \bdelta_{1:T}, \log(\sigma))$, then $\bU$ \textit{a priori} follows a multivariate normal distribution. Use elliptical slice sampler to obtain $\bU'$ and get the updated $R'_0, \bdelta'_{1:T}$ and $\sigma'$. $\bX_{0:T}$ will be deterministically updated to $\bX'_{0:T}$ according to $I_0, R'_0, \gamma, \bdelta'_{1:T}, \sigma'$. 
		\State Since  $\bxi_{1:T}$ \textit{a priori} follows a multivariate normal distribution, we use the elliptical slice sampler to obtain $\bxi'_{1:T}$. $\bX_{0:T}$ will be deterministically updated to $\bX'_{0:T}$ according to $I_0, R_0, \gamma, \bdelta_{1:T}, \sigma,\bxi'_{1:T}$. 
	\end{algorithmic}
\end{algorithm}
\subsection{MCMC algorithm for the ODE-based model}
The MCMC algorithm for ODE-based method is similar to the LNA-based MCMC except there is no need to update the Gaussian noise $\bxi_{1:T} $ in the population trajectory. 
The MCMC updates of parameters and latent variables is given in Algorithm~\ref{MCMCODEstep}. 

\begin{algorithm}[h]
	\caption{Updating rule in the ODE-based MCMC algorithm}\label{MCMCODEstep}
	\begin{algorithmic}[1]
		\State {\bf Input: }Parameter values from the previous interation $I_0, R_0, \gamma, \bdelta_{1:T}, \sigma$, geneology $\bg$. Proposal density $q_1(\cdot|\cdot)$, $q_2(\cdot|\cdot)$ for updating the initial number of  infected individuals and the removal rate. 
		\State{\bf Output} Updated parameters values 
		\State Calculate $\bX_{0:T}$, $\btheta_{0:T}$ based on $I_0, R_0, \gamma, \bdelta_{1:T}, \sigma$. 
		\State Propose $I'_0$ based on $q_1(\cdot|I_0)$, then $\bX_{0:T}$ will be deterministically updated to $\bX'_{0:T}$ according to $I'_0$, $R_0, \gamma, \bdelta_{1:T}, \sigma$ based on ODE integration. 
		\State Accept $(I'_0, \bX'_{0:T})$ with acceptance probability 
		$$a \leftarrow \min\left(1,\dfrac{\Pr(\bg|\btheta'_{0:T},\bX'_{0:T})\Pr(I'_0)q_1(I_0|I'_0)}{\Pr(\bg|\btheta_{0:T},\bX_{0:T})\Pr(I_0)q_1(I'_0|I_0)}\right).$$
		\State Propose $\gamma'$ based on $q_2(\cdot|\gamma)$, then $\bX_{0:T}, \btheta_{0:T}$ will be deterministically updated to $\bX'_{0:T}, \btheta'_{0:T}$ according to $I_0$, $R_0, \gamma', \bdelta_{1:T}, \sigma$
		\State Accept $(\gamma', \bX'_{0:T}, \btheta'_{0:T})$ with acceptance probability
		$$a \leftarrow \min\left(1,\dfrac{\Pr(\bg|\btheta'_{0:T},\bX'_{0:T})\Pr(\gamma')q_2(\gamma|\gamma')}{\Pr(\bg|\btheta_{0:T},\bX_{0:T})\Pr(\gamma)q_2(\gamma'|\gamma)}\right).$$
		\State Let $\bU = (\log(R_0), \bdelta_{1:T}, \log(\sigma))$, then $\bU$  \textit{a priori} follows a multivariate normal distribution. Use elliptical slice sampler of obtain $\bU'$ and get the updated $R'_0, \bdelta'_{1:T}$ and $\sigma'$. $\bX_{0:T}$ will be deterministically updated to $\bX'_{0:T}$ according to $I_0, R'_0, \gamma, \bdelta'_{1:T}, \sigma'$ based on ODE integration.
	\end{algorithmic}
	\label{ode-mcmc}
\end{algorithm}

\section{Details of the simulation study}
\subsection{Simulation details for Section~\ref{s:oneReal} of the main text}
\label{s:samptimes}
Here we provide details for the specified sequence/lineage sampling times and number of samples in each simulation scenario: 
\begin{enumerate}
	\item CONST $R_0(t)$: Sampling times: $\mathcal{S} =\{5,10,50,70,80,90\}$, number of samples at each time: $ \left\{2,20,300,300,200,200\right\}$. 
	\item  SD $R_0(t)$: Sampling times: $\mathcal{S} =\{5,10,50,70,80,90\}$, number of samples at each time: $\left\{2,20,200,80,20,20\right\}$. 
	\item NM $R_0(t)$: Sampling times: $\mathcal{S} =\{5,30,50,70,80,90\}$, number of samples at each time: $\left\{2,50,250,100,20,20\right\}$.
\end{enumerate}
\subsection{Simulation details for Section~\ref{s:repeat} of the main text}
\label{s:repdetails}
The $R_0(t)$ trajectory in the simulations is set to
\begin{eqnarray}
R_0(t) =\left\{
\begin{array}{ll}
1.4\times 1.015^{t/2}, t \in [0,30] &  t \in [0,30),\\
1.750 \times 0.975^{t-30} & t \in [30,80],\\
0.494 & t \in (80,90]
\end{array}
\right.
\end{eqnarray}
which is depicted in the left plot of Figure~\ref{f:rep}. 
The initial number of infected individuals is $I_0 = 3$ and the removal rate is set to $\gamma = 0.3$. The population size is fixed to $N = 1,000,000$. Epidemic trajectories are simulated using the SIR Markov jump process (MJP) and are  accepted/rejected based on the following criteria:
\begin{enumerate}
	\item Reject the SIR trajectories that ends before time $90$. The number of infected individuals should never drop to $0$ for $t\in[0,90]$, i.e. $\min_{t \in [0,90]}I(t) > 0$. 
	\item Reject the SIR trajectories with extremely high maximum prevalence: the maximum prevalence should be less or equal than 12,000, i.e., $\max_{t\in[0,90]}I(t) \leq 12000$. 
	\item Reject SIR trajectories with extremely low maximum prevalence. The maximum prevalence should be greater or equal than 600, i.e.,  $\max_{t\in[0,90]}I(t) \geq 600$.  
\end{enumerate}
The 100 simulated SIR prevalence trajectories are shown in the right plot of Figure~\ref{f:rep}. 
We also plot the ODE trajectory under the same parameter setting. 
\begin{figure}
	\centerline{\includegraphics[width=6.5in]{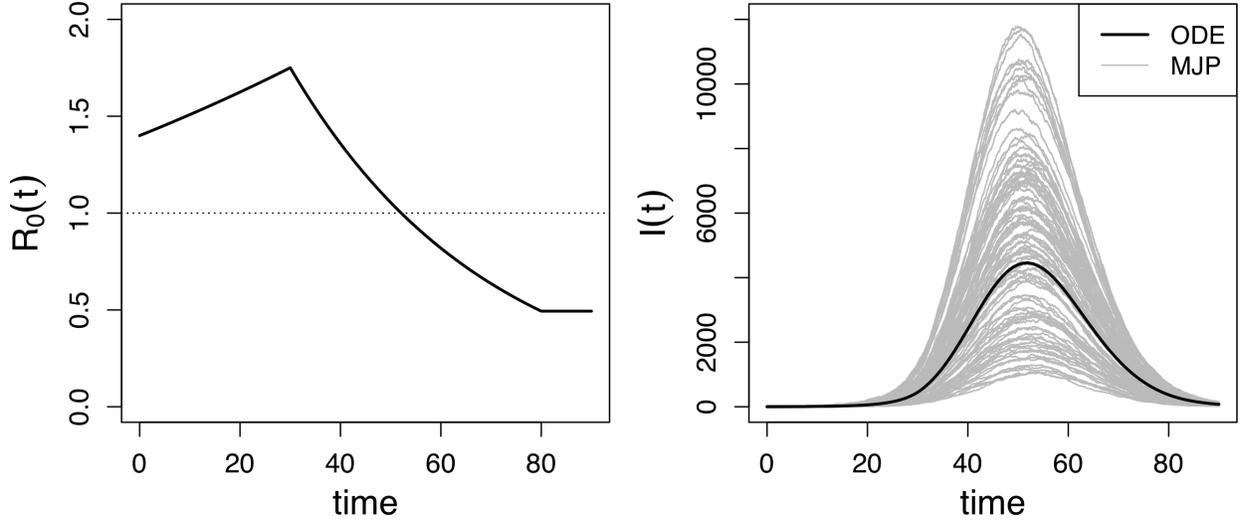}}
	\caption{Repeated simulation setup. Left: $R_0(t)$ trajectory under which the population trajectories are simulated. Right: The 100 simulated prevalence trajectories using MJP and the ODE trajectory under the same parameter setup.}
	\label{f:rep}
\end{figure}

%For each measurement criterion, we report the average value among the 100 repeated simulations in table \ref{t:tb1}. The LNA-based method generally has wider credible interval than the ODE model and has less bias in estimating the $R_0(t)$ and the prevalence $I(t)$. As the the coverage of $95\%$ credible interval, the LNA-based method provides better coverage than the ODE-based method. 

%\begin{table}
%	\centering
%	\begin{tabular}{ccccccc}
%		\hline
%		& MAE & MCIW & $R_0(t)$ envelope & MRAE & MRCIW & $I(t)$ envelope \\ 
%		\hline
%		LNA & 0.087 & 0.387 & 0.937 & 0.282 & 1.400 & 0.849 \\ 
%		ODE & 0.108 & 0.283 & 0.651 & 0.375 & 1.294 & 0.688 \\ 
%		\hline
%	\end{tabular}
%	\caption{Summary table for the mean of each criterion.}
%	\label{t:tb1}
%\end{table}

\subsection{Trace plots}
\subsubsection{Trace plots for simulations from Section \ref{s:oneReal} of the main text}
\label{s:traceplotExp}
%\vmcomment{would be good to mention effective samples sizes for all variables. Maybe just the range: e.g, ``ESSs for all parameters ranged between X and Y''}
Figure~\ref{f:exp_trace} shows the trace plots of the log-posterior for the LNA-based method and ODE-based method in the three simulation scenarios  from Section~\ref{s:oneReal}. The effective sample sizes (ESSs) for all parameters range between 100 to 400. 
\begin{figure}
	\centerline{\includegraphics[width=\textwidth]{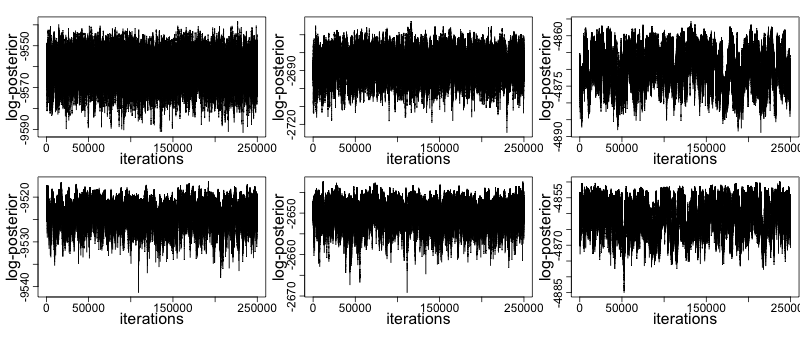}}
	\caption{MCMC trace plots of the log-posterior in the 3 simulation scenarios. 
	Columns correspond to CONST, SD, and NM simulated $R_0(t)$ trajectories. The first row shows the LNA-based results and the second row shows the ODE-based results.}
	\label{f:exp_trace}
\end{figure}
\subsubsection{Trace plots for Ebola data}
\label{s:traceplotEbola}
Figures \ref{f:SL_trace} and \ref{f:SL_ODE_trace} show trace plots of parameters $R_0, I_0, \gamma,\sigma$ for the LNA-based model and ODE-based model respectively applied to the Sierra Leone genealogy. 
Figures \ref{f:Lib_trace} and \ref{f:Lib_traceODE} show the analogous trace plots for the analysis of the Liberia genealogy. We also list posterior medians, 95\% BCIs, and ESSs for each parameter in the MCMC algorithm in Table~\ref{t:MCMCess}.
\begin{figure}
	\centerline{\includegraphics[width=6in]{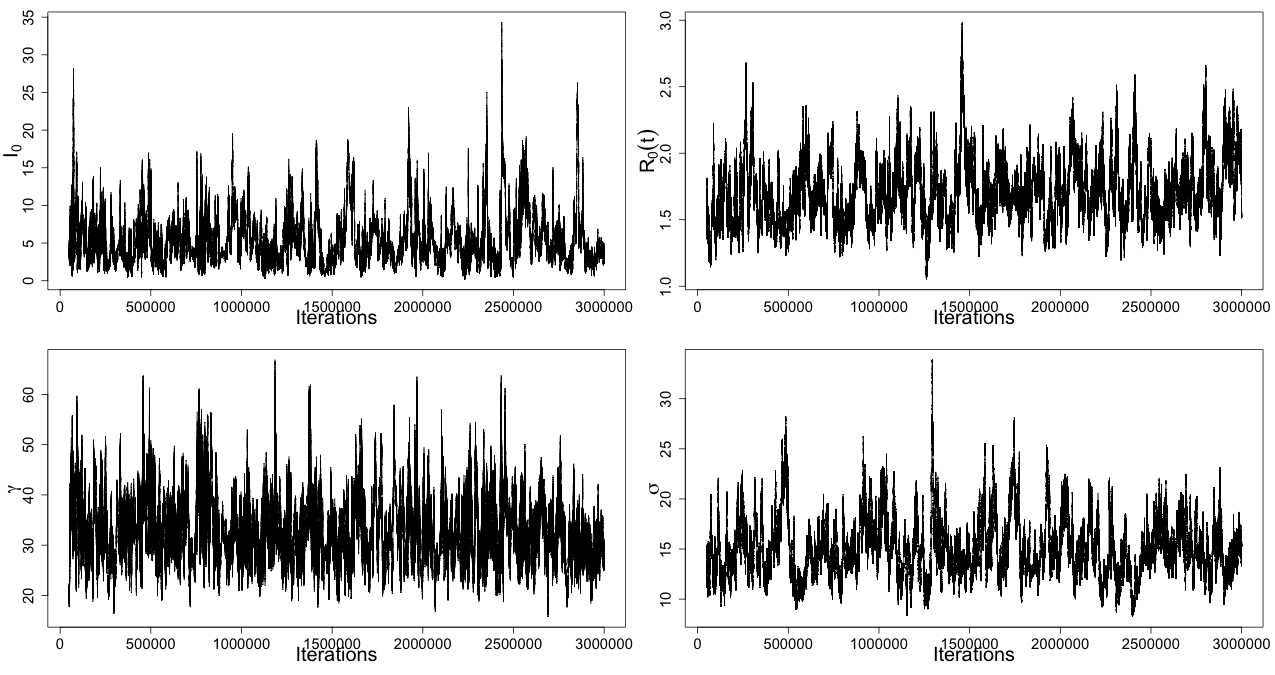}}
	\caption{Trace plots for the LNA-based MCMC algorithm applied to the Ebola genealogy in Sierra Leone. Top right: initial number of infected $I_0$. Top right: initial basic reproduction number $R_0$. Bottom left: removal rate $\gamma$. Bottom right: precision parameter $\sigma$.}
	\label{f:SL_trace}
\end{figure}

\begin{figure}
	\centerline{\includegraphics[width=6in]{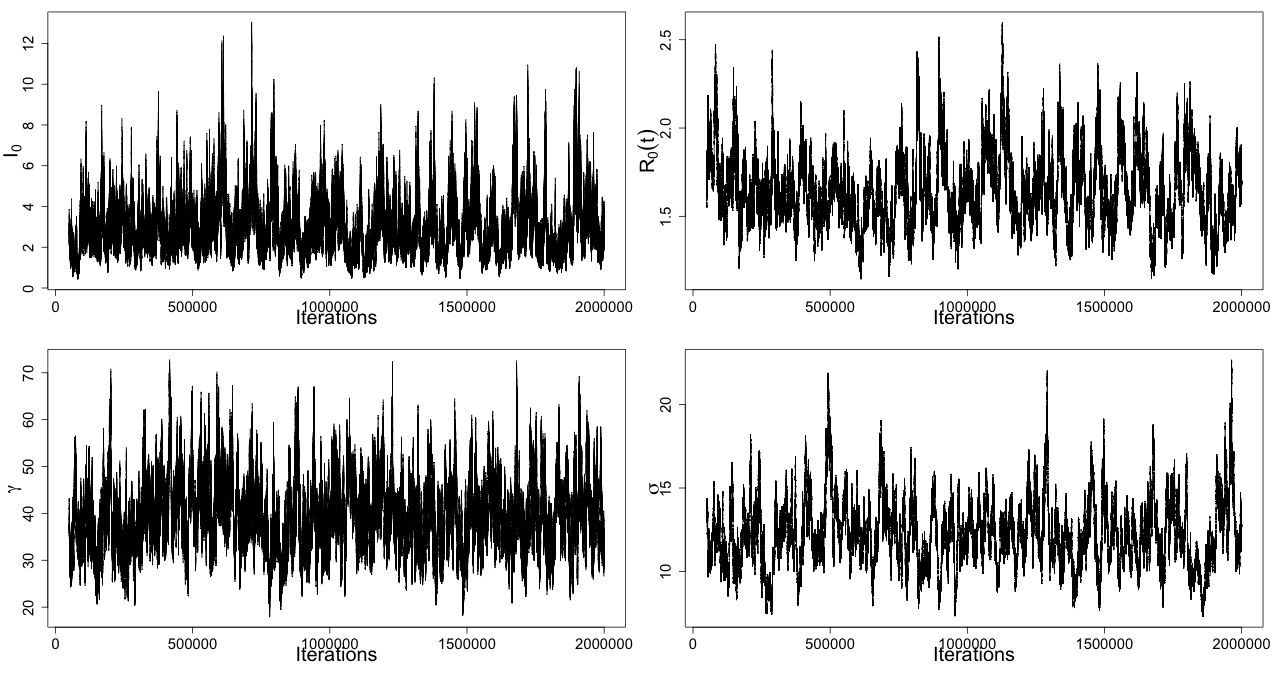}}
	\caption{Trace plots for the ODE-based MCMC algorithm applied to the Ebola genealogy in Sierra Leone. See caption in Figure \ref{f:SL_trace} for the explanation of the plots.}
	\label{f:SL_ODE_trace}
\end{figure}

\begin{figure}
	\centerline{\includegraphics[width=6in]{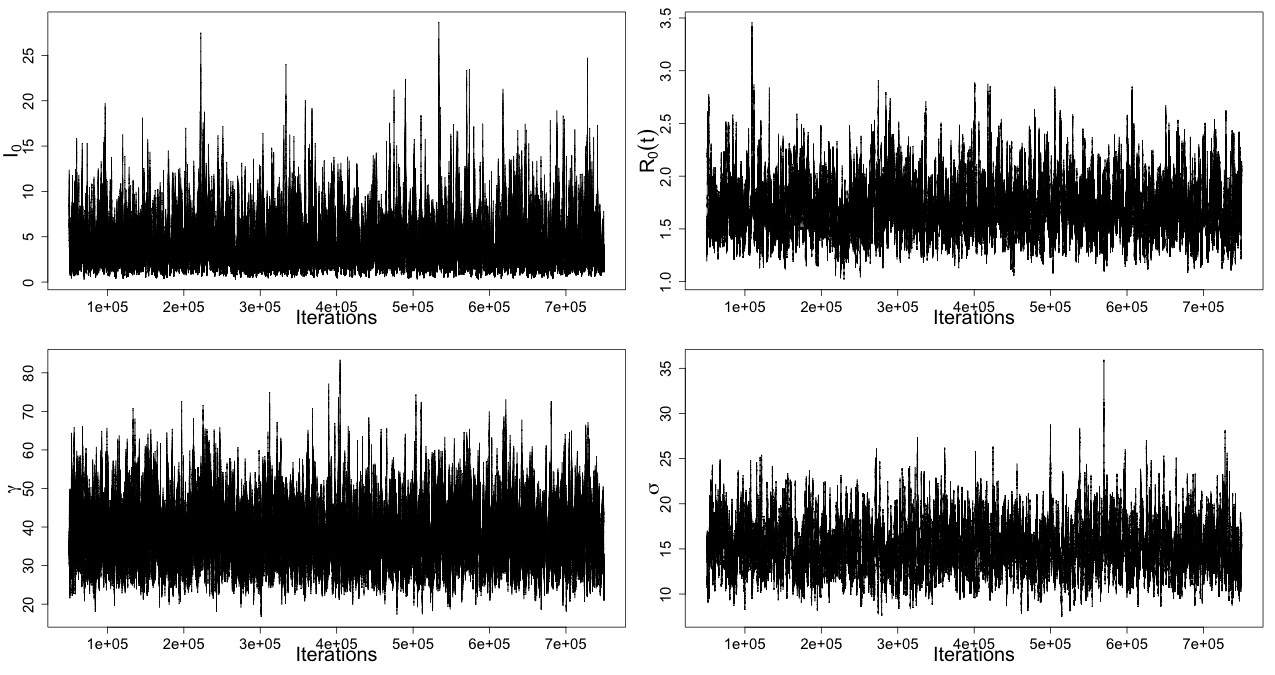}}
	\caption{Trace plots for the LNA-based MCMC algorithm applied to the Ebola genealogy in Liberia. See caption in Figure \ref{f:SL_trace} for the explanation of the plots.}
	\label{f:Lib_trace}
\end{figure}

\begin{figure}
	\centerline{\includegraphics[width=6in]{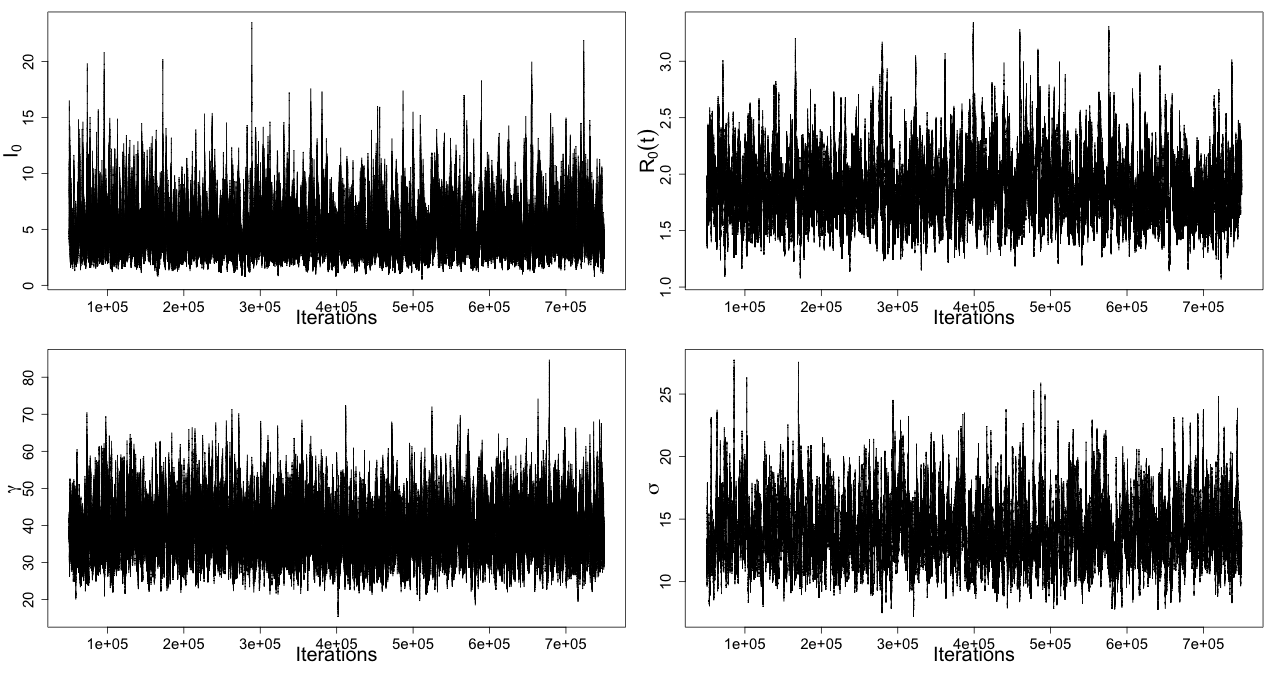}}
	\caption{Trace plots for the ODE-based MCMC algorithm applied to the Ebola data in Liberia. See caption in Figure \ref{f:SL_trace} for the explanation of the plots.}
	\label{f:Lib_traceODE}
\end{figure}
 
\begin{table}[]
	\begin{tabular}{@{}c|cccc|cccc@{}}
		\toprule
		& \multicolumn{4}{c|}{Sierra Leone}                & \multicolumn{4}{c}{Liberia}                       \\
		&          & post med & 95\%BCI           & ESS    &          & post med & 95\%BCI           & ESS     \\ \midrule
		\multirow{4}{*}{LNA} & $I_0$    & 4.63     & {[}1.28,14.41{]}  & 151 & $I_0$    & 3.49     & {[}1.03, 9.95{]}  & 1630 \\
		& $R_0$    & 1.69     & {[}1.33.2.23{]}   & 167 & $R_0$    & 1.67     & {[}1.29,2.24{]}   & 942  \\
		& $\gamma$ & 32.47    & {[}23.08,47.65{]} & 345 & $\gamma$ & 37.21    & {[}25.98,53.13{]} & 1704 \\
		& $\sigma$ & 14.71    & {[}10.33,21.84{]} & 141 & $\sigma$ & 14.83    & {[}10.41,20.70{]} & 870  \\ \midrule
		\multirow{4}{*}{ODE} & $I_0$    & 2.71     & {[}1.10,6.37{]}   & 249 & $I_0$    & 4.31     & {[}1.89,9.27{]}   & 1236 \\
		& $R_0$    & 1.612    & {[}1.30,2.09{]}   & 141 & $R_0$    & 1.83     & {[}1.41.2.44{]}   & 796  \\
		& $\gamma$ & 39.32    & {[}26.63.55.82{]} & 368 & $\gamma$ & 38.31    & {[}27.27.53.43{]} & 1608 \\
		& $\sigma$ & 12.13    & {[}8.61,16.98{]}  & 113 & $\sigma$ & 13.67    & {[}9.78,19.20{]}  & 879  \\ \bottomrule
	\end{tabular}
\caption{Table for posterior median, 95\% BCIs and ESSs for MCMC algorithms applied to Ebola data in Sierra Leone and Liberia.}
\label{t:MCMCess}
\end{table}

\section{Prior sensitivity analysis}
\label{s:sensitivity}
\subsection{Simulations based on single genealogy realizations}
\label{s:sensitivity1}
 In Section~\ref{s:oneReal}, we put informative priors on the removal rate $\gamma$ and explore three different simulation scenarios. 
Although our LNA-based model successfully recovers the $R(t)$ dynamics and SIR trajectories, the posterior density of the removal rate is not too different from its prior in the SD and NM scenarios. 
In this section, we investigate sensitivity of our inferences to changes in the prior of the removal rate $\gamma$. 
For the same genealogies and parameter settings as in Section~\ref{s:oneReal},  we assign weakly informative priors to the removal rate $\gamma$: 
\begin{enumerate}
	\item  CONST $R_0(t)$ scenario: $\gamma\sim \mbox{lognormal}(-1.7,0.25)$,
	\item SD $R_0(t)$ scenario: $\gamma\sim \mbox{lognormal}(-1.7,0.25)$,
	\item NM $R_0(t)$ scenario: $\gamma\sim \mbox{lognormal}(-1.2,0.25)$. 
\end{enumerate} 
For each scenario, we fit a LNA-based model using 300,000 MCMC iterations. 
The first row in Figure~\ref{f:sensitivity} shows the point-wise posterior medians and 95\% BCIs for the basic reproduction number trajectories, $R_0(t)$. 
Our LNA-based method performs well in the CONST and SD scenario. However, for NM scenario, the method fails to fully capture the increase and decrease trend at the beginning and the end of the epidemic. 
The second row in Figure~\ref{f:sensitivity} depicts the prior and posterior densities of the removal rate $\gamma$. 
The LNA-based method estimates the removal rate with good precision in the CONST scenario. 
However, for SD and NM scenario, the removal rate posterior densities are similar to the prior densities, but shift to the right from the truth. 
Posterior summaries of $S(t)$ and $I(t)$ are given in the third and fourth row of Figure~\ref{f:sensitivity}. 
The LNA-based method performs well in recovering the truth in the CONST and SD scenarios. 
In the NM scenario, the true trajectories are still covered by the wide BCIs, but the model seems to underestimate the $S(t)$ and overestimate $I(t)$. 
\begin{figure}
	\centerline{\includegraphics[width=6in]{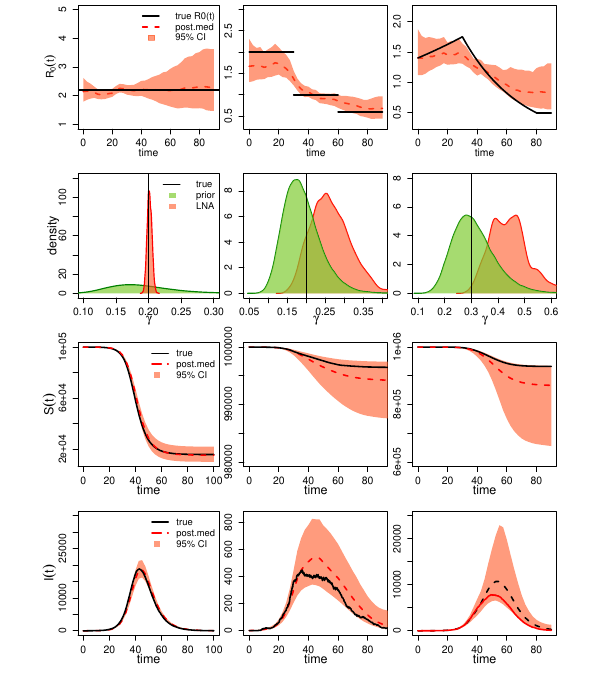}}
	\caption{Analysis of 3 simulation scenarios using the LNA-based method with weakly informative priors. 
	Columns correspond to CONST, SD, and NM simulated $R_0(t)$ trajectories. 
		The first row shows the estimated $R_0(t)$ trajectories for the 3 scenarios, with the black solid lines representing the truth, the red depicting the posterior medians and the red-shaded area showing the 95\% BCIs for the LNA-based method. 
		 The second row corresponds to the estimation of the removal rate $\gamma$. 
		 Posterior density curves from the LNA-base method are shown in red lines compared with prior density curve in green lines. The bottom two 
		rows show the estimated trajectories of $S(t)$ and $I(t)$ respectively.}
	\label{f:sensitivity}
\end{figure}

\subsection{Frequentist properties of posterior summaries}
In this section, we repeat the simulation study in Section~\ref{s:repeat} with a weakly informative prior distribution on recovery: $\gamma \sim \mbox{lognormal}(-1.2,0.25)$.  
We fit LNA-based models to approximate the posterior distribtuion of parameters and latent variables for each genealogy, and compare that with the estimation in Section~\ref{s:repeat} with informative prior on $\gamma$ ($\gamma \sim \mbox{lognormal}(-1.2,0.1)$). 
To evaluate the performance, we use same metric defined in Section~\ref{s:repeat} and generate posterior summary boxplots in Figure~\ref{f:repeat2}. 
\begin{figure}
	\centerline{\includegraphics[width=\textwidth]{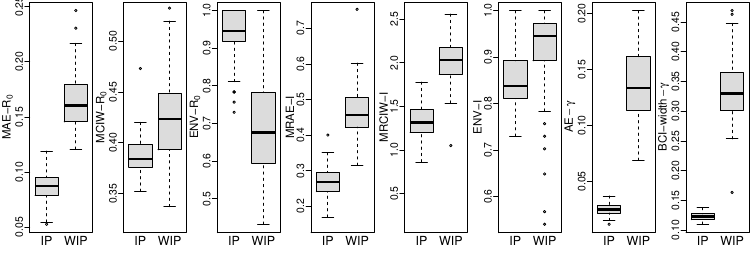}}
	\caption{Boxplots comparing performance LNA-based methods under informative prior (IP) and weakly informative prior (WIP) using 100 simulated genealogies. The first three plots show mean absolute error (MAE), mean credible interval width (MCIW) and envelope for $R_0(t)$ trajectory. The next three plots depict mean relative absolute error (MRAE), mean relative credible interval width (MRCIW), and envelope for $I(t)$ (prevalence) trajectory. The last two plots show the absolute error (AE) and Bayesian credible interval (BCI) width for $\gamma$. }
	\label{f:repeat2}
\end{figure}
Sampling distribution boxplots of $R_0(t)$ posterior summaries are depicted in the left three plots of Figure~\ref{f:repeat2}. 
The LNA-based model with informative prior (IP) on $\gamma$ yields significantly lower MAE and MCIW than that with weakly informative prior (WIP). Compared with IP, the LNA-based model with WIP has really poor envelope for the $R_0(t)$ trajectory. 

Sampling distribution boxplot of $I(t)$ posterior summaries, shown in Figure~\ref{f:repeat2}, are similar to $R_0(t)$ results, with IP yields significantly lower MRAE and lower MRCIW. Somewhat counter intuitively, the WIP cases end up with higher coverage for $I(t)$ trajectory. This is likely caused by the wide BCI under WIP. 

We also report the absolute error (AE) and 95\% BCI widths for the removal rate $\gamma$ in Figure~\ref{f:repeat2}. Though the WIP prior still centered at the truth of $\gamma$, we can see really large absolute error in the removal rate estimation. The 95\% BCI coverage for $\gamma$ under IP is 1. Though WIP yields wider BCIs, the coverage for $\gamma$ is only 0.65.

\end{document}